\def\paperversion{arxiv}
\def\anonymoussubmission{1} \newif\ifsinglecolumn\singlecolumnfalse
\newif\ifwidemargins\widemarginsfalse
\newif\ifwarning\warningfalse
\newif\ifshowcomments\showcommentsfalse
\newif\ifblinded\blindedfalse
\newif\ifshowlinenums\showlinenumsfalse
\newif\ifreport\reportfalse
\newif\ifcopyrightspace\copyrightspacefalse
\newif\ifacknowledgments\acknowledgmentsfalse
\newif\ifshowpagenumbers\showpagenumberstrue
\newif\iffinalformat\finalformatfalse
\newif\ifweb\webfalse
\newif\ifexternalize\externalizetrue

\ifx\anonymoussubmission\undefined
\else
  \if\anonymoussubmission 1
    \blindedtrue
  \else
    \blindedfalse
  \fi
\fi
\IfFileExists{.blinded}{\blindedtrue}

\ifx\paperversion\xxxxundefined
\PackageError{paperversions}{*** No valid document version was specified.
Macro paperversions must be defined as one of (markup, draft,
local, submission, final, web, tr, trdraft, blindtr, trlocal, arxiv)}
\fi

\def\xxversion{\csname xx\paperversion\endcsname}
\newif\ifsawversion\sawversionfalse

\ifcase\xxversion\relax
\widemarginstrue
    \singlecolumntrue
    \warningtrue
    \showlinenumstrue
    \sawversiontrue
    \blindedfalse
\or \warningtrue
    \showlinenumsfalse
    \showcommentstrue
    \sawversiontrue
    \blindedfalse
\or \warningtrue
    \showcommentsfalse
    \blindedfalse
    \sawversiontrue
    \acknowledgmentstrue
    \showlinenumstrue
\or \warningtrue
    \showcommentsfalse
    \sawversiontrue
    \acknowledgmentstrue
    \showlinenumstrue
\or \sawversiontrue
    \showlinenumstrue
\or \blindedfalse
    \sawversiontrue
    \copyrightspacetrue
    \acknowledgmentstrue
    \showpagenumbersfalse
    \finalformattrue
\or \singlecolumntrue
    \blindedfalse
    \sawversiontrue
    \reporttrue
    \acknowledgmentstrue
    \webtrue
\or \blindedfalse
    \singlecolumntrue
    \showcommentstrue
    \sawversiontrue
    \reporttrue
    \showlinenumstrue
    \warningtrue
\or \blindedfalse
    \showcommentstrue
    \copyrightspacetrue
    \acknowledgmentstrue
    \sawversiontrue
    \finalformattrue
    \showlinenumstrue
    \warningtrue
\or \blindedfalse
    \sawversiontrue
    \copyrightspacetrue
    \acknowledgmentstrue
    \finalformattrue
    \webtrue
\or \singlecolumntrue
    \blindedtrue
    \acknowledgmentsfalse
    \sawversiontrue
    \reporttrue
    \webtrue
\or \reporttrue
    \warningtrue
    \showcommentsfalse
    \blindedfalse
    \sawversiontrue
    \acknowledgmentstrue
    \showlinenumstrue
\or \reporttrue
    \blindedfalse
    \sawversiontrue
    \copyrightspacetrue
    \acknowledgmentstrue
    \finalformattrue
    \externalizefalse
    \webtrue
\fi

\ifsawversion
\else
\fi

\let\xxversion=\undefined

\ifreport
\documentclass{llncs}
\else
    \iffinalformat
        \ifweb
\documentclass[runningheads]{llncs}
        \else
\documentclass[runningheads]{llncs}
        \fi
    \else
\ifblinded
            \documentclass[runningheads]{llncs}
        \else
            \documentclass[runningheads]{llncs}
        \fi
    \fi
\fi

\makeatletter
\@ifclassloaded{acmart}{
  \citestyle{acmauthoryear}
  \setcitestyle{nosort}
}{}
\makeatother

\newcommand*{\CiteKeyGeni}{geni-icfp2025}

\makeatletter
\@ifclassloaded{llncs}{

  \let\c@theorem\relax
  \let\c@lemma\relax
  \let\c@corollary\relax
  \let\c@definition\relax
  \let\c@example\relax
  \let\thetheorem\relax

  \ifblinded\else
    \ifweb
\newlength\pagewidth
    \newlength\pageheight
    \newlength\marginwidth
    \newlength\marginheight
    \setlength\pagewidth{17.2cm}
    \setlength\pageheight{24.0cm}
    \newlength\totalwidth
    \newlength\totalheight
    \setlength\totalwidth{12.2cm}
    \setlength\totalheight{19.3cm}
    \setlength\marginwidth{\dimexpr(\pagewidth-\totalwidth)/2\relax}
    \setlength\marginheight{\dimexpr(\pageheight-\totalheight)/2\relax}
    \usepackage[papersize={\pagewidth,\pageheight},total={\totalwidth,\totalheight},centering]{geometry}
    \else
  \fi
\usepackage{iftex}
  \usepackage{amssymb}
  \usepackage{amsmath,amsthm}
  \usepackage{mathtools}
  \ifluatex
    \usepackage{unicode-math}
    \setmathfont{Latin Modern Math} \else
\usepackage[T1]{fontenc}
    \usepackage{lmodern}
  \fi
  \usepackage{fancyhdr}
\let\sectionmark\@gobble
  \let\subsectionmark\@gobble
  \usepackage[numbers]{natbib}
  }{}
\makeatother

\ifluatex
\fi

\ifluatex
\else
\usepackage[scaled=0.90]{inconsolata}

\providecommand{\lBrack}{\llbracket}
\providecommand{\rBrack}{\rrbracket}
\providecommand{\mathbfit}[1]{\boldsymbol{#1}}
\fi

\usepackage{xspace}
\usepackage{amsfonts}
\usepackage{stmaryrd}
\ifluatex\else
\DeclareMathDelimiter{\lParen}{\mathopen}{stmry}{"4C}{stmry}{"4C}
\DeclareMathDelimiter{\rParen}{\mathclose}{stmry}{"4D}{stmry}{"4D}
\fi
\usepackage{graphicx,eso-pic}
\usepackage[export]{adjustbox}
\usepackage{nth}
\usepackage{siunitx}
\usepackage[inline]{enumitem}
\usepackage{ttquot}
\usepackage{placeins}
\usepackage{multirow}
\usepackage{multicol}
\usepackage{booktabs}
\usepackage{lipsum}
\usepackage{microtype}
\usepackage{textpos}
\usepackage{stackengine}
\usepackage{wrapfig}
\usepackage{cancel}
\usepackage{stackrel}
\usepackage{colortbl}
\BeforeBeginEnvironment{wrapfigure}{
    \setlength{\intextsep}{0.0ex}
    \setlength{\columnsep}{1.6ex}
}
\usepackage{tikz}
\usetikzlibrary{arrows.meta,backgrounds,positioning,matrix,calc,fit,shapes.symbols,decorations.pathreplacing,decorations.pathmorphing,shadows.blur,3d}
\usetikzlibrary{automata,arrows}
\usepackage{tikz-cd}
\usepackage[customcolors,shade]{hf-tikz}

\tikzset{
    tensorgrid/.style={
        matrix of math nodes,
        nodes={draw, minimum size=14pt, inner sep=1.5pt, anchor=center, fill=black!0, font=\scriptsize},
        column sep=-\pgflinewidth,
        row sep=-\pgflinewidth,
        nodes in empty cells,
    },
    tensorlabel/.style={font=\fontsize{7}{7}, color=black},
    highlight/.style={fill=tessa-gpu-jit-2!30},
    maskcolor/.style={fill=red!10},
    arrowstyle/.style={->, >=latex, thick}
}

\usepackage[subrefformat=simple,labelformat=simple]{subcaption}

\usepackage{etoolbox}

\newenvironment{centered}{\centering
}{\par
}

\newcommand{\sq}[1]{\vspace{#1}}
\newcommand{\sm}[1]{\vspace{-#1}}

\usepackage{hyperref}
\hypersetup{
  colorlinks,
  urlcolor=blue,
  citecolor=blue,
  filecolor=blue
  linkcolor=blue,
  allcolors=blue!70!black!70,
}
\numberwithin{equation}{section}
\usepackage{cleveref}
\crefformat{section}{\mbox{Section #2#1#3}}
\crefformat{appendix}{\mbox{Appendix #2#1#3}}
\crefformat{subsection}{\mbox{Section #2#1#3}}
\crefrangeformat{section}{\mbox{Sections #3#1#4--#5#2#6}}
\crefrangeformat{appendix}{\mbox{Appendices #3#1#4--#5#2#6}}
\crefrangeformat{subsection}{\mbox{Sections #3#1#4--#5#2#6}}
\crefrangeformat{subsubsection}{\mbox{Sections #3#1#4--#5#2#6}}
\crefmultiformat{section}{\mbox{Sections #2#1#3}}{ and \mbox{#2#1#3}}
{, \mbox{#2#1#3}}{, and \mbox{#2#1#3}}
\crefmultiformat{appendix}{\mbox{Appendices #2#1#3}}{ and \mbox{#2#1#3}}
{, \mbox{#2#1#3}}{, and \mbox{#2#1#3}}
\crefmultiformat{subsection}{\mbox{Sections #2#1#3}}{ and \mbox{#2#1#3}}
{, \mbox{#2#1#3}}{, and \mbox{#2#1#3}}
\crefmultiformat{subsubsection}{\mbox{Sections #2#1#3}}{ and \mbox{#2#1#3}}
{, \mbox{#2#1#3}}{, and \mbox{#2#1#3}}

  \crefname{table}{Table}{Table}

  \crefformat{figure}{Figure~#2#1#3}
  \crefrangeformat{figure}{\mbox{Figures #3#1#4--#5#2#6}}
  \crefmultiformat{figure}{\mbox{Figures #2#1#3}}{ and \mbox{#2#1#3}}
  {, \mbox{#2#1#3}}{, and \mbox{#2#1#3}}

\usepackage{tex-macros/formalisms}
\usepackage{tex-macros/notation}

\ifshowcomments
    \usepackage[inline]{aplcomments}
\else
    \usepackage[disabled]{aplcomments}
\fi

\definecolor{keyword-color}{HTML}{000000}
\definecolor{keyword-color-1}{HTML}{9465B6}
\definecolor{keyword-color-2}{HTML}{9465B6}
\definecolor{codecomment-color}{HTML}{000000}
\definecolor{string-color}{HTML}{AA99CC}
\usepackage{listings,lstautogobble}
\providecommand{\tessalstrm}{\rmfamily}
  \ifluatex
  \else
  \fi
  \let\origthelstnumber\thelstnumber
  \makeatletter
  \newcommand*\Suppressnumber{\lst@AddToHook{OnNewLine}{\let\thelstnumber\relax \advance\c@lstnumber-\@ne\relax }}
  \newcommand*\Reactivatenumber{\lst@AddToHook{OnNewLine}{\let\thelstnumber\origthelstnumber \advance\c@lstnumber\@ne\relax}}
\makeatother

\definecolor{codeemph-color}{HTML}{ECF5F8}
\newcommand\codeemph[2][]{\ifthenelse{\isempty{#1}}{\colorlet{color}{black}}{\colorlet{color}{#1}}\setlength\fboxsep{1pt}\colorbox{codeemph-color}{\textcolor{color}{\mbox{#2}}}}

\def\thetitle{Tensor Probabilistic Model Checking of Finite-Horizon Markov Chains\xspace}

\ifreport
\title{\thetitle \mbox{(Extended Version)}}
\else
\title{\thetitle}
\titlerunning{Tensor Probabilistic Model Checking of Finite-Horizon Markov Chains}
\fi

\ifreport
\fancypagestyle{tr-style}{
  \fancyhf{}
  \fancyhead[LO]{\scriptsize\thetitle (Extended Version)}
  \fancyhead[RO]{\scriptsize\thepage}
  \fancyhead[RE]{\scriptsize Jianlin Li, Nick Guo, Peter Ye, and Yizhou Zhang}
  \fancyhead[LE]{\scriptsize\thepage}
}
\fancypagestyle{tr-style-firstpage}{
  \fancyhf{}
\fancyfoot[L]{\scriptsize \raisebox{2ex}{\begin{minipage}{\textwidth}
    \textcopyright\ 2026 The Authors. \\
    Technical report extending the authors' CAV 2026 paper \cite{tessa-cav2026} with an appendix.
    \end{minipage}}
  }
}
\else
\ifweb
\fancypagestyle{web-style-firstpage}{
  \fancyhf{}
  \fancyfoot[L]{\scriptsize \raisebox{2ex}{\begin{minipage}{\textwidth}
    \textcopyright\ 2026 The Authors. \\
    Authors' version of their work accepted to appear in CAV 2026.
    \end{minipage}}
  }
}
\fi
\fi

\ifblinded
  \author{}
  \institute{}
\else
  \makeatletter
  \@ifclassloaded{llncs}{
    \usepackage{orcidlink}
    \usepackage{marvosym} \newcommand{\Orcid}[1]{\hspace{1pt}\orcidlink{#1}}
    \author{Jianlin Li\Orcid{0000-0001-7371-3034} \and Nick Guo\Orcid{0009-0004-0548-0143} \and Peter Ye\Orcid{0009-0000-1256-2930} \and Yizhou Zhang\textsuperscript{(\scalebox{1.2}{\Letter})}\!\Orcid{0000-0002-8206-4694}}
    \institute{David R.~Cheriton School of Computer Science\\ University of Waterloo, Canada\\
    \email{\{jianlin.li, nick.guo, p2ye, yizhou\}@uwaterloo.ca}
}
    \authorrunning{Jianlin Li, Nick Guo, Peter Ye, and Yizhou Zhang}
  }{
    \@ifclassloaded{acmart}{
      \author{Jianlin Li}
      \affiliation{\department{David R.\ Cheriton School of Computer Science}
        \institution{University of Waterloo}
        \state{Ontario}
        \country{Canada}
      }
      \orcid{0000-0001-7371-3034}
      \author{Nick Guo}
      \affiliation{\department{David R.\ Cheriton School of Computer Science}
        \institution{University of Waterloo}
        \state{Ontario}
        \country{Canada}
      }
      \orcid{0009-0004-0548-0143}
      \author{Peter Ye}
      \affiliation{\department{David R.\ Cheriton School of Computer Science}
        \institution{University of Waterloo}
        \state{Ontario}
        \country{Canada}
      }
      \orcid{0009-0000-1256-2930}
      \author{Yizhou Zhang}
      \affiliation{\department{David R.\ Cheriton School of Computer Science}
        \institution{University of Waterloo}
        \state{Ontario}
        \country{Canada}
      }
      \orcid{0000-0002-8206-4694}
    }{}
  }
  \makeatother
\fi

\makeatletter
\@ifclassloaded{acmart}{
  \authorsaddresses{}
}{}
\makeatother

\ifcopyrightspace
\else
  \makeatletter
  \@ifclassloaded{acmart}{
    \setcopyright{none}
    \renewcommand\footnotetextcopyrightpermission[1]{}
    \settopmatter{printacmref=false}
  }{}
  \makeatother
\fi

\ifwarning

\makeatletter
\AtBeginDocument{
  \@ifclassloaded{acmart}{
    \AddToShipoutPicture*{\put(305,650){\it\color{red!80!black}{\fbox{\large Draft---please do not distribute}}}}
  }{}
  \@ifclassloaded{llncs}{
\AddToShipoutPicture*{\put(3,582){\it\color{red!80!black}{\fbox{\large Draft---please do not distribute}}}}
  }{}
}
\makeatother
\fi

\newcommand{\Lang}{{\textrm{PML}}\xspace}

\newcommand{\storm}{\mbox{\textrm{Storm}}\xspace}
\newcommand{\tessa}{\mbox{\textrm{Tessa}}\xspace} \newcommand{\geni}{\mbox{\textrm{Geni}}\xspace}
\newcommand{\dice}{\mbox{\textrm{Dice}}\xspace}

\iffinalformat
\ifreport
\else
\ifweb
\makeatletter
\@ifclassloaded{llncs}{\patchcmd{\@maketitle}{\@author\vskip.35cm}{\@author\tikz[remember picture, overlay]{\node[inner sep=0pt] (authorsline) {};}\vskip.35cm}{}{\PackageWarning{tessa}{could not patch \string\@maketitle\space for AE badge placement}}}{}\AtBeginDocument{\@ifclassloaded{llncs}{\AddToShipoutPicture*{\begin{tikzpicture}[remember picture, overlay]
\node[anchor=north east, inner sep=0pt]
          at ([xshift=\dimexpr-\marginwidth-3.143bp\relax, yshift=0.00bp]current page.north east |- authorsline)
          {\href{https://doi.org/10.5281/zenodo.19802567}{\includegraphics{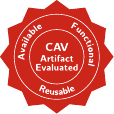}}};
      \end{tikzpicture}}}{}}
\makeatother
\fi
\fi
\fi

\newif\iftight
\tighttrue
 
\makeatletter
\@ifclassloaded{acmart}{
  \renewcommand\@parfont{\bfseries\sffamily}
  \renewcommand\noindentparagraph{\@startsection{paragraph}{4}{\z@}{-1.2ex}{-3.5\p@}{\ACM@NRadjust{\@parfont}}}
}{
  \newcommand\noindentparagraph[1]{\paragraph{#1}}
  \iftight
  \renewcommand\section{\@startsection{section}{1}{\z@}{-18\p@ \@plus -4\p@ \@minus -4\p@}{10\p@ \@plus 4\p@ \@minus 4\p@}{\normalfont\large\bfseries\boldmath
                        \rightskip=\z@ \@plus 8em\pretolerance=10000 }}
  \renewcommand\subsection{\@startsection{subsection}{2}{\z@}{-18\p@ \@plus -4\p@ \@minus -4\p@}{6\p@ \@plus 4\p@ \@minus 4\p@}{\normalfont\normalsize\bfseries\boldmath
                        \rightskip=\z@ \@plus 8em\pretolerance=10000 }}
  \fi
}
\makeatother

\newsavebox\crunchedbox
\def\vcrunch#1{\savebox\crunchedbox{#1}\smash{\usebox\crunchedbox}{\vrule width 0pt height 0.9em}}

\newcommand{\TODO}[1][]{\if\relax\detokenize{#1}\relax
    \colorbox{pure-apple}{\color{white}\texttt{TODO}}
  \else
    \colorbox{pure-apple}{\color{white}\texttt{{#1}}}
  \fi
}
\newcommenter{JL}{0.7,1.0,0.7}    

\usepackage{pgfplots}
\usepackage{pgfplotstable}
\pgfplotsset{compat=1.18}
\usepgfplotslibrary{colorbrewer}
\usepgfplotslibrary{fillbetween}
\usepgfplotslibrary{groupplots}
\pgfplotsset{
  every axis/.append style={
    cycle list/Dark2,
    axis lines=box,
    axis line style={-,black},
    label style={font=\fontsize{8}{8}\selectfont}, tick label style={font=\tiny\color{black}},
    xtick pos=left,
    ytick pos=left,
    enlarge x limits={.05},
    legend cell align=center,
title style={font=\fontsize{8}{8}\selectfont},
    legend style={font=\fontsize{7}{7}\selectfont,nodes={scale=1.0,},draw=none,}, legend cell align={left},
    legend image post style={scale=1.0},
    legend image code/.code={
      \draw[mark repeat=2,mark phase=2]
      plot coordinates {
      (0cm,0cm)
      (0.2cm,0cm)        (0.4cm,0cm)         };}
  },
  every axis plot/.append style={
    every mark/.append style={
      mark size=1.25pt,
    },
    thick,
  },
}
\pgfplotsset{select coords between index/.style 2 args={
  x filter/.code={
    \ifnum\coordindex<#1\fi
    \ifnum\coordindex>#2\fi
  }
}}
\pgfplotsset{
  discard if/.style 2 args={
    x filter/.code={
      \edef\tempa{\thisrow{#1}}
      \edef\tempb{#2}
      \ifx\tempa\tempb
      
      \fi
    }
  },
  discard if not/.style 2 args={
    x filter/.code={
      \edef\tempa{\thisrow{#1}}
      \edef\tempb{#2}
      \ifx\tempa\tempb
      \else
      
      \fi
    }
  },
  discard if not A/.style 2 args={
    x filter/.append code={
      \edef\tempa{\thisrow{#1}}
      \edef\tempaa{#2}
      \ifx\tempa\tempaa
      \else
        
      \fi
    }
  },
}

\definecolor{storm-add}{HTML}{BF2494}
\definecolor{storm-spm}{HTML}{FD6467} \definecolor{dice}{HTML}{6F9231}
\definecolor{geni}{HTML}{EE9722} \definecolor{tessa-gpu-jit-1}{HTML}{046C9A}
\colorlet{tessa-gpu-jit-2}{tessa-gpu-jit-1!50}

\begin{document}

\makeatletter
\@ifclassloaded{llncs}{
  \maketitle
}{}
\makeatother

\begin{abstract}

We reexamine the problem of verifying Markov chains with respect to
step-bounded reachability probabilities.
Prevailing approaches rely on encoding the
state-transition matrix using either explicit or symbolic
representations.
While these approaches are effective for sparse transition
dynamics, they scale less favorably in the dense regime.
\medskip

Our insight is to cast probabilistic model checking of Markov chains
as computations over dense tensors.
This methodology enables the use of
off-the-shelf compiler toolchains for optimized
execution of these tensor computations on hardware accelerators.
We prove the soundness of the methodology of mapping probabilistic
model checking to tensor computations.
We implement our approach in a tool called \tessa.
Empirical evaluation shows that \tessa unlocks
massive speedups over state-of-the-art methods
on selected benchmarks from the literature.
 \end{abstract}

\makeatletter
\@ifclassloaded{acmart}{
  \maketitle
}{}
\makeatother

\ifreport
\thispagestyle{tr-style-firstpage}
\else
\ifweb
\thispagestyle{web-style-firstpage}
\fi
\fi

\renewcommand{\Vvdash}{\cubeop}
\renewcommand{\Vdash}{\mathbin{\raisebox{0.28ex}{\scalebox{0.8}[0.5]{$\blacktriangleright$}}}}

\section{Introduction}
\label{tessa:sec:intro}

Probabilistic model checking is a formal verification technique for
verifying quantitative properties of systems exhibiting stochastic
behavior. Given a mathematical model (such as a Markov chain)
representing the behavior of a system over time, an algorithmic
procedure determines whether the system satisfies properties of
interest (such as step-bounded reachability).

\newcommand{\FacultyState}[1]{\textsf{#1}}
As an example, consider the problem of scheduling an urgent faculty
meeting (\cref{fig:motivating-example}) involving $N$
professors by collecting their availability via a Doodle poll.
At any time step, each professor $i$ is in one of three states:
\FacultyState{Away}, \FacultyState{Doodling}, or \FacultyState{Done}.
A professor \FacultyState{Away} from their inbox may,
with probability $p_i$, notice the Doodle poll email and
transition to \FacultyState{Doodling} to answer the Doodle poll.
However, the \FacultyState{Doodling} state is fragile: with probability
$q_i$, they successfully complete the poll and reach
\FacultyState{Done}; but with probability $1-q_i$, they are
interrupted by a new email or a knock on their door,
forcing them back to be \FacultyState{Away} without hitting submit.
The dynamics capture the intermittent nature of professors' attention.
Each Markov chain in \cref{fig:motivating-example-markov-chains}
depicts the behavior of a professor.

The quantitative property to verify is the probability that all $N$
professors have reached \FacultyState{Done} within a horizon of $H$
time steps.

The program in \cref{fig:motivating-example-prism-code} models this
system of stochastic processes,
similarly to how one would structure it
in the PRISM~\cite{prism-manual-v4.9}
and JANI~\cite{jani-tacas2017} modeling languages.
Each module $P_i$ models the behavior of professor $i$:
it declares a state variable $s_i$ with domain $\{0, 1, 2\}$
(indicating \FacultyState{Away}, \FacultyState{Doodling}, and
\FacultyState{Done}, respectively),
and three \emph{commands} that update $s_i$ according to the
transition probabilities $p_i$ and $q_i$.
Each command is \emph{guarded} by a Boolean expression
(e.g., $s_i=0$ for the first command);
a command is enabled only if its guard is satisfied
in the current state.
The \textbf{goal} clause of the global model specifies
the target property that all professors are in state
\FacultyState{Done}.

\begin{figure}[t]
\centering
\begin{subfigure}{\textwidth}
\fontsize{7.5}{8.5}\selectfont
\usetikzlibrary{automata, positioning, arrows}
\newcommand{\profmc}[1]{\scalebox{.7}{\begin{tikzpicture}[->, >=stealth, shorten >=1pt, auto, node distance=55pt, semithick]
\tikzstyle{every state}=[fill=white, draw=black, text=black, minimum size=40pt, align=center, inner sep=2pt]
  \tikzstyle{absorbing}=[double, double distance=1pt] 

\node[state, initial, initial text=] (res) {\FacultyState{Away}\\$s_{#1}=0$};
  \node[state] (wrestle) [right of=res, xshift=0pt] {\FacultyState{Doodling}\\$s_{#1}=1$};
  \node[state, absorbing] (sub) [right of=wrestle, xshift=0pt] {\FacultyState{Done}\\$s_{#1}=2$};

\path (res) edge [loop above, min distance=15pt] node [left, xshift=-4pt, yshift=-3pt] {$1-p_{#1}$} (res)
              edge [bend left=30] node {$p_{#1}$} (wrestle);

\path (wrestle) edge [bend left=30] node [below] {$1-q_{#1}$} (res)
                  edge node {$q_{#1}$} (sub);

\path (sub) edge [loop above, min distance=15pt] node [left, xshift=-5pt, yshift=-3pt] {$1$} (sub);

\end{tikzpicture}}}
\newcommand{\mcpar}{\raisebox{12pt}{\scalebox{3.0}{$\parallel$}}
}
\newcommand{\mcomit}{\raisebox{12pt}{\scalebox{3.0}{$\cdots$}}
}
\resizebox{\linewidth}{!}{\!\!\profmc{1} \mcpar \mcomit \mcpar \!\!\profmc{N}\ }
\caption{$N$ parallel Markov chains capturing the stochastic behavior of $N$ professors.}
\label{fig:motivating-example-markov-chains}
\end{subfigure}

\sq{2pt}

\begin{subfigure}[b]{0.615\textwidth}
\begin{lstlisting}
module [@$P_1$@]
   var [@$s_1$@] : [0 [@\Key{..}@] 2] init 0;
   [a] ([@$s_1=0$@]) -> [@$p_1$@] : ([@$s_1 \GETS 1$@]) + ([@$1-p_1$@]) : ([@$s_1 \GETS 0$@]);
   [a] ([@$s_1=1$@]) -> [@$q_1$@] : ([@$s_1 \GETS 2$@]) + ([@$1-q_1$@]) : ([@$s_1 \GETS 0$@]);
   [a] ([@$s_1=2$@]) -> 1.0 : ([@$s_1 \GETS 2$@]);
[@\sm{3pt}@]
module [@$\cdots$@] // modules [@$P_2$@] [@$\cdots$@] [@$P_N$@]
[@\sm{3pt}@]
goal [@$s_1 = 2 \wedge \cdots \wedge s_N=2$@]
\end{lstlisting}
\caption{A model of $N$ professors (in PRISM-like syntax).}
\label{fig:motivating-example-prism-code}
\end{subfigure}
\hfill
\begin{subfigure}[b]{0.37\textwidth}
    \centering
\begin{tikzpicture}
\catcode`\_=12
\def\FirstRunCol{work_seconds}
\def\SecondRunCol{measured_avg_seconds}
\def\BaselineCol{elapsed_seconds}
\def\StormAddCsv{storm.add.csv}
\def\StormSpmCsv{storm.spm.csv}
\def\RubiconCsv{rubicon.csv}
\def\TessaCsv{tessa.csv}
\def\TestN{tessa/camera-ready/meeting/testn/}
\def\TestH{tessa/camera-ready/meeting/testh/}
\begin{groupplot}[
  group style={
    group size=2 by 1,
    horizontal sep=18pt,
},
  height=103pt,
  width=85pt,
  legend style={
      transpose legend,
      legend columns=2,
      /tikz/every even column/.append style={column sep=-8pt},
      fill=none,
      font=\fontsize{6}{6}\selectfont,
},
xlabel style={font=\fontsize{6.5}{7}\selectfont},
  ymax=310,
  ymin=-10,
]

\nextgroupplot[
  xlabel={$N$},
legend to name=tessa:meeting-time-vs-N-A,
]

\addplot[storm-add, mark=diamond*]
table [
  col sep=comma,
  x=N,
  y=\BaselineCol,
  restrict expr to domain={x}{6:10},
  unbounded coords=discard,
]
{\TestN\StormAddCsv};
\addlegendentry{\storm (MTBDD)\ \ \ \ \ }

\addplot[storm-spm, mark=triangle*]
table [
  col sep=comma,
  x=N,
  y=\BaselineCol,
  restrict expr to domain={x}{6:12},
]
{\TestN\StormSpmCsv};
\addlegendentry{\storm (Sparse)}

\addplot[dice, mark=o]
table [
  col sep=comma,
  x=N,
  y=\BaselineCol,
  discard if not={H}{10},
  restrict expr to domain={x}{6:12},
]
{\TestN\RubiconCsv};
\addlegendentry{\dice}

\addplot[tessa-gpu-jit-1, mark=*]
table [
  col sep=comma,
  x=N,
  y=\FirstRunCol,
  discard if not={H}{10},
  restrict expr to domain={x}{6:15},
]
{\TestN\TessaCsv};
\addlegendentry{\tessa}

\nextgroupplot[
  xlabel={$H$},
  legend to name=tessa:meeting-time-vs-N-B,
]

\addlegendimage{storm-add, mark=diamond*} 

\addlegendentry{\storm (MTBDD)\ \ \ \ \ }

\addplot[dice, mark=o]
table [
  col sep=comma,
  x=H,
  y=\BaselineCol,
  discard if not={status}{ok},
  discard if not A={N}{12},
  restrict expr to domain={x == 10}{1:1},
]
{\TestH\RubiconCsv};
\addlegendentry{\dice}

\addplot[storm-spm, mark=triangle*]
table [
  col sep=comma,
  x=H,
  y=\BaselineCol,
  restrict expr to domain={y}{0:300},
restrict expr to domain={mod(x, 100)==0 && x<=500 || x==10}{1:1},
  unbounded coords=discard,
]
{\TestH\StormSpmCsv};
\addlegendentry{\storm (Sparse)}

\addplot[tessa-gpu-jit-1, mark=*]
table [
  x=H,
  y=\FirstRunCol,
  col sep=comma,
  discard if not={N}{12},
restrict expr to domain={mod(x, 100)==0 && x<=500 || x==10}{1:1},
  unbounded coords=discard,
]
{\TestH\TessaCsv};

\end{groupplot}

\coordinate (top) at (rel axis cs:0,1);\coordinate (bot) at (rel axis cs:1,0);

\node [above,inner sep=1pt,xshift=0pt] at (current bounding box.north) {\pgfplotslegendfromname{tessa:meeting-time-vs-N-A}};

\path (top)--(bot) coordinate[midway] (group center);
\node[above,rotate=90,xshift=0pt,yshift=-3pt] at (group center -| current bounding box.west) {
    \begin{minipage}{25pt}
    \centering
    \fontsize{6.5}{7}\selectfont
    time (s)
    \end{minipage}
};

\end{tikzpicture}\ignorespaces
     \caption{Scaling plots.}
    \label{fig:motivating-example-scaling-plot}
\end{subfigure}
\caption{Running example: collecting availability from $N$ professors. The target property is the probability that all $N$ professors reach \FacultyState{Done} within horizon~$H$.}
\label{fig:motivating-example}
\end{figure}

Despite the semantic simplicity of this model, the verification
problem is inherently computationally expensive due to state
explosion.
Since each process has 3 states, the global state-space size is $3^N$.
It is unlikely that any model checking tool on a classical computer
can avoid this exponential blowup.
Nevertheless, we would like to push the boundary of tractable
instances as far as possible,
especially given the raw speed of modern hardware accelerators
at our disposal.

State-of-the-art methods largely fall into two categories:
\begin{itemize}[align=left,labelsep=*,leftmargin=*,parsep=0pt,itemsep=2pt,topsep=2pt,]
\item
Probabilistic model checkers, such as PRISM \cite{prism4} and
Storm~\cite{storm2022}, are the prevailing tools for such verification
tasks. They represent the state-transition dynamics
using either explicit
(i.e., sparse matrices of size $3^N\,{\times}\,3^N$ in the $N$-professor model)
or symbolic
(i.e., multi-terminal binary decision diagrams, or MTBDDs) representations.
Both representations are highly optimized for CPU execution
in the situation of sparse transition dynamics.
However, they are not as effective in the dense regime, and
their irregular memory access patterns make it challenging to map
them efficiently to modern hardware~accelerators.
\item
A recent development is the use of probabilistic inference
for probabilistic model checking.
Rubicon~\cite{holtzen2021rubicon} compiles a DTMC into
the Dice~\cite{holtzen2020dice} probabilistic programming language;
running the Dice program using Dice's inference engine---weighted model counting on binary decision diagrams (BDDs)---computes
the step-bounded reachability probability.
Still, BDD operations are pointer-rich, so they involve indirection
that does not map well to devices like GPUs.
\end{itemize}

We present \tessa, a new methodology for verifying step-bounded reachability
properties of DTMCs, by compiling DTMC models to dense tensor
computations (i.e., array programs in an array programming language
like JAX \cite{jax2018github}).

As \cref{fig:motivating-example-scaling-plot} shows,
\tessa exhibits markedly better scalability than existing tools on
the $N$-professor model.
The two plots show running times for different values of $N$
(fixing the horizon $H\,{=}\,10$) and different values of $H$
(fixing $N\,{=}\,12$), respectively.
The first plot shows that,
while \tessa does not change the asymptotic complexity of the
problem (the state-space size still grows exponentially with
$N$), it aggressively pushes down the constant factor of the
exponential growth compared to \storm and \dice.
The second plot shows that while all tools appear to scale linearly
with $H$, \tessa has a significantly smaller slope---the curve is nearly flat.\footnote{The comparison may make state-of-the-art tools look
  inefficient, but they are actually highly optimized for CPUs.
The point is that a methodological shift unlocks massive speedups.
}

A key reason for \tessa's scalability is its representation of
state distributions as dense tensors, and state-transition dynamics
as transformations on these tensors.
The name \tessa evokes a tesseract---a high-dimensional cube---reflecting
our insight: rather than eagerly materializing the state-transition
dynamics as a sparse matrix or a decision diagram,
we treat the joint distribution of $N$ state variables as an order-$N$
tensor, and the transition dynamics as a tensor program.

A key advantage of this tensor representation is its amenability
to optimization (by modern tensor compilers) and
acceleration (by hardware such as GPUs).
Because \tessa maps the verification problem to tensor operations,
we can use an off-the-shelf tensor compiler (XLA~\cite{xla-github}, in this
case) to optimize the generated tensor code\footnote{\tessa running times in \cref{fig:motivating-example-scaling-plot}
  include compilation times of JAX and XLA.}
and offload them to GPU devices for massively parallel execution.
In contrast, established methodologies target representations that induce irregular memory accesses,
which limit both the degree of parallelism and
the ease of parallelization achievable.

Targeting tensors also naturally supports parameter
search for DTMCs with unknown transition probabilities.
As \tessa generates differentiable tensor programs,
we can leverage JAX's automatic differentiation to compute
gradients of a rich selection of optimization objectives.
Furthermore, the
computationally intensive gradient calculation can be optimized by XLA
and accelerated by GPUs.

We believe the methodology embodied by \tessa usefully complements
the toolbox of probabilistic model checkers.
It is not a silver bullet; existing methods
are already effective for sparse models.
However, \tessa opens up a new
regime of tractable models that were previously out of reach.

\paragraph{Contributions.}
Our contributions are both theoretical and practical:
\begin{itemize}[align=left,labelsep=*,leftmargin=*,parsep=2pt,itemsep=2pt,topsep=2pt,]
\item
We base our development on a core language for specifying DTMCs
(\cref{tessa:sec:lang}).
We show how to interpret programs in this core language as computations over tensors
(\cref{sec:tensor-sem}).
This interpretation captures the essence of the implementation of \tessa. 
We prove that this tensor interpretation is sound with respect to
the verification of step-bounded reachability properties
(\cref{sec:correctness}).

\item
We implement \tessa and evaluate it on selected benchmarks from the literature.
\tessa employs a domain-specific compiler stack that automatically
optimizes and parallelizes dense tensor computations for GPU execution
(\cref{sec:accelerate}).
Experimental results highlight substantial performance gains
unlocked by \tessa's tensor-based approach
(\cref{tessa:sec:eval}).
\end{itemize}

\section{Preliminaries}
\label{sec:preliminaries}

We review discrete-time Markov chains (DTMCs) and
the verification problem of step-bounded reachability properties.
A \emph{Markov chain} is a tuple
$\mathcal{M} = \parens*{\mathcal{S}, \iota, \eta, \mathcal{G}}$:
\begin{itemize}[align=left,labelsep=*,leftmargin=*,parsep=0pt,itemsep=0pt,topsep=2pt,]
\item 
$\mathcal{S}$ is a finite set of states.
\item 
$\iota \in \Distr{\mathcal{S}}$ is the initial-state distribution.
\item 
$\eta : \mathcal{S} \to \Distr{\mathcal{S}}$ is the transition function.
\item 
$\mathcal{G} \subseteq \mathcal{S}$ is the set of goal states.
\end{itemize}
We use $\Distr{\mathcal{S}}$ to denote the set of probability
distributions over $\mathcal{S}$.
We write $\iota(s)$ to denote the probability of starting
in state $s$.
We write $\eta\parens*{s,s'}$ to denote the probability of transitioning
to state $s'$ from state $s$.
Thus, we have that $\sum_{s \in \mathcal{S}} \iota(s) = 1$
and also that for any state $s \in \mathcal{S}$,
$\sum_{s' \in \mathcal{S}} \eta(s, s') = 1$.

In this paper, we are interested in the verification of
step-bounded reachability properties of the form
$\Pr_{\mathcal{M}}\parens*{\Diamond^{\leq n} \mathcal{G}}$, which denotes the
probability of reaching a goal state within $n$ steps.

The probability of reaching a goal state within $n$ steps
starting from state $s$ is denoted
$\Pr_{\mathcal{M}}\parens*{s \vDash \Diamond^{\leq n} \mathcal{G}}$
and is defined inductively as follows:
\begin{align}
\label{eq:reachability-via-state-base-case}
\Pr_{\mathcal{M}}\parens*{s \vDash \Diamond^{\leq 0} \mathcal{G}} & \defeq
\begin{cases}
1 & \text{if } s \in \mathcal{G} \\
0 & \text{otherwise}
\end{cases}
\\
\label{eq:reachability-via-state-inductive-case}
\Pr_{\mathcal{M}}\parens*{s \vDash \Diamond^{\leq n+1} \mathcal{G}} & \defeq
\begin{cases}
1 & \text{if } s \in \mathcal{G} \\
\sum_{s' \in \mathcal{S}} \eta(s, s') \cdot \Pr_{\mathcal{M}}\parens*{s' \vDash \Diamond^{\leq n} \mathcal{G}} & \text{otherwise}
\end{cases}
\end{align}
The inductive case \cref{eq:reachability-via-state-inductive-case}
is intuitive:
the probability of reaching a goal state within $n+1$ steps
is $1$ if we are already in a goal state;
otherwise, it is the weighted sum of the probabilities of reaching
a goal state within $n$ steps from each possible next state.

Then $\Pr_{\mathcal{M}}\parens*{\Diamond^{\leq n} \mathcal{G}}$,
the probability of reaching a goal state within $n$ steps
when starting from a state drawn from the initial-state distribution,
is defined as
\begin{gather}
\Pr_{\mathcal{M}}\parens*{\Diamond^{\leq n} \mathcal{G}} \defeq
\sum_{s \in \mathcal{S}} \iota(s) \cdot \Pr_{\mathcal{M}}\parens*{s \vDash \Diamond^{\leq n} \mathcal{G}}.
\end{gather}

We introduce another (perhaps less common) way to calculate
step-bounded reachability probabilities.
We write
$\Pr_{\mathcal{M}}\parens*{\mu\Vdash\Diamond^{\leq n} \mathcal{G}}$
to denote the probability of reaching a goal state within $n$ steps
when starting from a state drawn from distribution $\mu \in
\Distr{\mathcal{S}}$.
It is defined inductively as follows:
\begin{align}
\label{eq:reachability-via-distribution-base-case}
\Pr_{\mathcal{M}}\parens*{\mu\Vdash\Diamond^{\leq 0} \mathcal{G}} & \defeq
\sum_{s \in \mathcal{G}} \mu(s)
\\
\label{eq:reachability-via-distribution-inductive-case}
\Pr_{\mathcal{M}}\parens*{\mu\Vdash\Diamond^{\leq n+1} \mathcal{G}} & \defeq
\parens*{\sum_{s \in \mathcal{G}} \mu(s)} + \Pr_{\mathcal{M}}\parens*{\sum_{s \notin \mathcal{G}} \mu(s) \eta(s) \Vdash \Diamond^{\leq n} \mathcal{G}}
\end{align}

\noindent
The inductive case \cref{eq:reachability-via-distribution-inductive-case}
is intuitive:
the probability of reaching a goal state within $n+1$ steps
is the sum of
(1) the probability of being in a goal state at the start,
and (2) the probability of reaching a goal state within $n$ steps
after taking one step from a non-goal state.
Notice that in the second term, the next-state distribution
is given by
$\sum_{s \notin \mathcal{G}} \mu(s) \eta(s)$,
which is the weighted sum of the transition distributions
from all non-goal states.

The correctness of this alternative way to calculate step-bounded
reachability probabilities is established by
\cref{thm:reachability-via-distribution-equiv}
(and proven in the accompanying technical report~\cite{tessa-cav2026-tr}):
\begin{theorem}
\label{thm:reachability-via-distribution-equiv}
$
\Pr_{\mathcal{M}}\parens*{\Diamond^{\leq n} \mathcal{G}}
=
\Pr_{\mathcal{M}}\parens*{\iota \Vdash \Diamond^{\leq n} \mathcal{G}}
$.
\end{theorem}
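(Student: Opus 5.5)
We prove a stronger statement by induction on $n$: for every $n$ and every distribution (more generally, every nonnegative measure) $\mu$ on $\mathcal{S}$,
\begin{equation*}
\Pr_{\mathcal{M}}\parens*{\mu \Vdash \Diamond^{\leq n} \mathcal{G}}
=
\sum_{s \in \mathcal{S}} \mu(s) \cdot \Pr_{\mathcal{M}}\parens*{s \vDash \Diamond^{\leq n} \mathcal{G}} .
\end{equation*}
The theorem is the instance $\mu = \iota$, since the right-hand side is then exactly the definition of $\Pr_{\mathcal{M}}\parens*{\Diamond^{\leq n} \mathcal{G}}$.

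The generalization is needed because the measure $\nu \defeq \sum_{s \notin \mathcal{G}} \mu(s)\eta(s)$ passed to the recursive call in \cref{eq:reachability-via-distribution-inductive-case} has total mass $\sum_{s\notin\mathcal{G}}\mu(s)$, which can be less than $1$. So $\nu$ is a subdistribution rather than an element of $\Distr{\mathcal{S}}$. I read the definitions \cref{eq:reachability-via-distribution-base-case,eq:reachability-via-distribution-inductive-case} as making sense verbatim for arbitrary nonnegative weight functions, and the induction quantifies over all of them. The alternative, rescaling $\nu$ to a distribution, would require a separate homogeneity lemma, so I would avoid it.

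\textbf{Base case $n=0$.} By \cref{eq:reachability-via-state-base-case}, the sum $\sum_s \mu(s)\Pr(s\vDash\Diamond^{\leq 0}\mathcal{G})$ keeps exactly the terms with $s\in\mathcal{G}$. It therefore equals $\sum_{s\in\mathcal{G}}\mu(s)$, which is \cref{eq:reachability-via-distribution-base-case}.

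\textbf{Inductive step.} Assume the claim holds for $n$ and all $\mu$, and fix $\mu$.
\begin{enumerate}
\item Apply the inductive hypothesis to $\nu$ inside \cref{eq:reachability-via-distribution-inductive-case}. The second term becomes
\begin{equation*}
\sum_{s'} \Bigl(\sum_{s\notin\mathcal{G}}\mu(s)\eta(s,s')\Bigr)\Pr\parens*{s'\vDash\Diamond^{\leq n}\mathcal{G}} .
\end{equation*}
\item Exchange the two finite sums to obtain
\begin{equation*}
\sum_{s\notin\mathcal{G}}\mu(s)\sum_{s'}\eta(s,s')\Pr\parens*{s'\vDash\Diamond^{\leq n}\mathcal{G}} .
\end{equation*}
By the non-goal case of \cref{eq:reachability-via-state-inductive-case}, this is $\sum_{s\notin\mathcal{G}}\mu(s)\Pr\parens*{s\vDash\Diamond^{\leq n+1}\mathcal{G}}$.
\item The first term, $\sum_{s\in\mathcal{G}}\mu(s)$, equals $\sum_{s\in\mathcal{G}}\mu(s)\cdot 1 = \sum_{s\in\mathcal{G}}\mu(s)\Pr\parens*{s\vDash\Diamond^{\leq n+1}\mathcal{G}}$ by the goal case of \cref{eq:reachability-via-state-inductive-case}.
\item Adding the goal and non-goal parts gives $\sum_{s}\mu(s)\Pr\parens*{s\vDash\Diamond^{\leq n+1}\mathcal{G}}$, which completes the induction.
\end{enumerate}

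The sum exchange is harmless because $\mathcal{S}$ is finite. The main obstacle is therefore the one addressed up front: the induction must be stated over all subdistributions, or all nonnegative measures, rather than only probability distributions. With that in place, each step is a direct unfolding of the two definitions.
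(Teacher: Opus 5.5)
Your proof is correct and follows essentially the same route as the paper's. The paper also strengthens the claim to arbitrary distributions or measures $\mu$ and inducts on $n$; it splits the sum by membership in $\mathcal{G}$ and exchanges the finite sums to apply the hypothesis to $\sum_{s\notin\mathcal{G}}\mu(s)\eta(s)$, and the only difference is that you unfold from the distribution-based side toward the state-based side, whereas the paper goes the other way.
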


\noindent
Its connection to tensor probabilistic model checking
will become clear later, but for now we simply note that
we use $\Pr_{\mathcal{M}}\parens*{\mu \Vdash \Diamond^{\leq n} \mathcal{G}}$
as a bridge to connect tensor-based reachability to
the standard definition of reachability probabilities.

\noindent

\section{A Core Model-Specification Language}
\label{tessa:sec:lang}

Directly defining a DTMC via the four-tuple formalism in
\cref{sec:preliminaries} is intractable for complex systems. The state
space typically grows exponentially with the number of variables,
making such brute-force specification error-prone and difficult to maintain.
As a result, probabilistic model checkers such as PRISM~\cite{prism4} and Storm~\cite{storm2022}
adopt high-level modeling languages that allow users to specify the
system in a modular fashion.

In this section, we define a high-level, modular specification
language that captures the core aspects of a DTMC modeling language
(e.g., the \texttt{dtmc} dialect of the PRISM language \cite{prism-manual-v4.9}).
This allows describing a system concisely as a collection of
interacting modules rather than a monolithic state-transition matrix.

We call this core language \Lang.
To place its tensor interpretation on a formal footing
(\cref{sec:tensor-sem,sec:correctness}),
we first define \Lang's syntax in \cref{tessa:sec:lang-syntax}
and its Markov-chain semantics in \cref{sec:lang-semantics}.

\subsection{Syntax of \Lang}
\label{tessa:sec:lang-syntax}

\cref{tessa:fig:syntax} presents the syntax of \Lang.

\begin{figure}
\centering
\iftight
\sm{1.5ex}
\fi
\begin{gather*}
\begin{array}{@{}r@{\ \ \ }c@{\ \ }l@{\ \ }l@{}}
  \LBL{Model} & M & \Coloneqq &
  \Key{model } m_1 ;\, \cdots ;\, m_K \Key{ goal } e
  \\
  \LBL{Module} & m & \Coloneqq &
  \Key{module } D\ C
  \\
  \LBL{Declarations} & D & \Coloneqq &
  d_1 ;\, \cdots ;\, d_n
  \\
  \LBL{Declaration} & d & \Coloneqq &
  \Key{var } x : \Key{[} 0 \; \Key{..} \; n_{\text{max}} \Key{]} \Key{ init } n_{\text{init}}
  \\
  \LBL{Commands} & C & \Coloneqq &
  c_1 ;\; \cdots ;\; c_n
  \\
  \LBL{Command} & c & \Coloneqq &
  \Key{[} a \Key{]} \;
  g \to U
  \\
  \LBL{Guard} & g & \Coloneqq &
  e
  \\
  \LBL{Mixture of updates} & U & \Coloneqq &
  \theta_1 : u_1 + \cdots + \theta_n : u_n
  \\
  \LBL{Update} & u & \Coloneqq &
  x_1 \GETS e_1 \; \cdots \; x_n \GETS e_n
  \\
  \LBL{Expression} & e & \Coloneqq &
  n \mid x \mid \textit{op}\parens*{e_1, \ldots, e_n} 
\end{array}
\\
n \in \mathbb{N} \quad
x \in \text{Variables} \quad
\theta \in [0, 1]
\end{gather*}
\caption{\Lang syntax.}
\label{tessa:fig:syntax}
\end{figure}

A model $M$ contains $K$ modules.
Each module $m_k$ ($1 \leq k \leq K$)
declares a disjoint set of state variables~$X_k$
via a set of declarations $D_k$.
The module further specifies its behavior via
a set of commands $C_k$ that update the state variables in $X_k$.
While each module declares and updates only its own state variables,
it can read the variables of all the modules in $M$.

A declaration $d$ defines a state variable $x$,
its domain (a bounded range of non-negative integers),
and its initial value.

A command $c$ consists of an action label $a$,
a guard $g$, and a mixture~$U$ of updates.
The action label allows commands from different modules to be
synchronized.
The guard $g$ is a Boolean expression specifying a necessary
(though not sufficient) condition for the command to be enabled.

A command may also be unlabeled (written $\Key{[}\,\Key{]}$ in
PRISM), in which case it is assigned a globally
unique action label.
As a simplification, we assume that all unlabeled commands have
already been assigned globally unique action labels.

Each update $u_i$ in $U$ is associated with a probability $\theta_i$,
where $\sum_i \theta_i = 1$.
Each update further consists of deterministic assignments to
a subset of the module's declared state variables.

An expression $e$ is either a constant $n$,
a state variable $x$,
or an $n$-ary operation over sub-expressions $e_1, \ldots, e_n$.
As a simplification, we treat Boolean expressions
as integer expressions where $0$ represents false
and non-zero values represent true.

The global model $M$ also specifies the goal states
via a Boolean expression $e$ over the state variables of all modules.

\subsection{Semantics of \Lang}
\label{sec:lang-semantics}

We now define the standard semantics of \Lang
by interpreting a model $M$ as a DTMC
$\mathcal{M}=\parens*{\textit{State}\Bracks*{M}, \textit{Init}\Bracks*{M}, \textit{Step}\sem{M}, \textit{Goal}\Bracks*{M}}$.

Let $X_k$ be the set of state variables declared in module $m_k$ ($1 \leq k \leq K$).
Let \vcrunch{$\Vars{M} \defeq \biguplus_{k=1}^{K} X_k$} be the set of all state
variables in $M$.
Let $\dom{x}$ denote the domain of variable $x$.
For $X \subseteq \Vars{M}$,
the state space formed by the variables in $X$ is
\begin{gather}
\STATE\parens*{X} \defeq
\parens*{x \in X} \to \dom{x}.
\end{gather}

\noindent
That is, a state is a mapping from each variable in $X$
to a value in its domain.
The state space of the entire model $M$ is then defined as
\begin{gather}
\textit{State}\Bracks*{M} \defeq
\STATE\parens*{\Vars{M}}.
\end{gather}
We will write $\STATE$ as a shorthand for $\textit{State}\Bracks*{M}$
when $M$ is clear from context.

Notice that $\STATE\parens*{X}$ is measurable since it is a finite set.
The size of $\STATE\parens*{X}$ is
$
\verts*{\STATE\parens*{X}} =
\prod_{x \in X} \verts*{\dom{x}}
$,
which grows exponentially with the number of state variables in~$X$.
We write $s(x)$ to denote the value of variable $x$ in state $s$.

Let $s_0 \in \STATE$ be the initial state
where each variable $x \in \Vars{M}$
is initialized to the value specified
in its declaration.
Then the initial-state distribution~is
\begin{gather}
\textit{Init}\Bracks*{M} \defeq \delta_{s_0}
\end{gather}
where $\delta_{s}$ is the delta distribution (point mass) at $s$.

Let $e$ be the goal expression of $M$.
Then the set of goal states is defined as
\begin{gather}
\textit{Goal}\Bracks*{M} \defeq
\setc{s \in \STATE}{\sem{e} (s) \neq 0}
\end{gather}
where $\sem{e} \, s$ is the interpretation of expression $e$
under state $s$.

All that remains is to define $\sem{e}$ and $\textit{Step}\sem{M}$.
To define $\textit{Step}\sem{M}$, we need to first define the semantics of
each module $m_k$ in $M$, which further requires defining the semantics
of each command in $m_k$ and each update in a command.

\paragraph{Interpreting expressions.}
$\sem{e} : \STATE\parens*{X} \to \mathbb{N}$,
where $X$ is a superset of the variables appearing in $e$,
is defined inductively as follows:
\begin{mathparpagebreakable}
\Rule{}{
  \sem{n} \, s \defeq n
}

\Rule{}{
  \sem{x} \, s \defeq s(x)
}

\Rule{
  \sem{e_i} \, s = n_i \text{ for } i = 1, \ldots, l
  \\
  \sem{\textit{op}} = f
}{
  \sem{\textit{op}\parens*{e_1, \cdots, e_l}} \, s \defeq f(n_1, \ldots, n_l)
}
\end{mathparpagebreakable}

\noindent

\paragraph{Interpreting updates.}
Recall that $X_k$ denotes the set of state variables declared in $m_k$.
Let $Y_k$ be the set of variables each of which is either
declared in $m_k$ or read in $m_k$'s commands.
So we have that $X_k \subseteq Y_k \subseteq \Vars{M}$.

An update $u$ in module $m_k$ is interpreted as
a function $\sem{u} : \STATE\parens*{Y_k} \to \STATE\parens*{X_k}$.
That is,
for $s \in \STATE\parens*{Y_k}$,
\begin{gather}
\label{eq:update-interp}
\sem{x_1 \GETS e_1 \; \cdots \; x_n \GETS e_n} \, s \defeq
{ \restrict{s}{X_k} }
\bracks*{x_1 \mapsto \sem{e_1}s, \; \cdots, \; x_n \mapsto \sem{e_n}s}.
\end{gather}
In \cref{eq:update-interp},
the notation $\restrict{s}{X_k}$ denotes the restriction of $s$ to the
variables in $X_k$,
and $\restrict{s}{X_k}\bracks*{x_1 \mapsto \sem{e_1}s, \; \cdots, \; x_n \mapsto \sem{e_n}s}$
denotes the state obtained by updating $\restrict{s}{X_k}$ with the
assignments in $u$.

\paragraph{Interpreting mixtures of updates.}
A mixture of updates $U$ in module $m_k$ is interpreted as a function $\sem{U} : \STATE\parens*{Y_k} \to \Distr{\STATE\parens*{X_k}}$.
That is,
\begin{gather}
\label{eq:mixture-interp}
\sem{\theta_1 : u_1 + \cdots + \theta_n : u_n} \, s \defeq
\sum_{i=1}^{n} \theta_i \cdot \delta_{\sem{u_i} s}.
\end{gather}

\paragraph{Interpreting modules.}
Let $\ACTION\parens*{M}$ be the model $M$'s alphabet of action labels.
Let $\ACTION\parens*{m_k}$ be the set of action labels appearing in $m_k$'s commands.
The module $m_k$ is interpreted as a function
$\sem{m_k} : \ACTION\parens*{M} \to \STATE\parens*{Y_k} \to \UNDistr{\STATE\parens*{X_k}}$.
Here, $\UNDistr{\STATE\parens*{X_k}}$ is the set of unnormalized distributions
over $\STATE\parens*{X_k}$;
these distributions have finite total mass, but the total mass is not necessarily~$1$.
Specifically,
\begin{gather}
\label{eq:module-interp}
\sem{m_k} \, a \, s \defeq
\begin{cases}
\delta_{\restrict{s}{X_k}} & \text{if $a \not\in \ACTION\parens*{m_k}$},
\\
\sum_{\Key{[} a \Key{]} \, g \,\to\, U \;\in\; \commands{m_k, a}}
\indicator{\sem{g} \, s \neq 0} \cdot \sem{U} \, s
& \text{otherwise.}
\end{cases}
\end{gather}

\noindent
In \cref{eq:module-interp},
$\commands{m_k, a}$ is the set of commands in module $m_k$ with action label $a$,
and $\indicator{P}$ is the indicator function that evaluates to $1$ if
predicate $P$ is true and $0$ otherwise.

Intuitively, $\sem{m_k} \, a \, s$ describes the aggregate transition
behavior of module~$m_k$ when the current state is $s$ and
the chosen action is $a$.
There are two cases.
If $a$ is not in $m_k$'s alphabet, then $m_k$ stays put.
Otherwise, a sum ranges over all commands with action label~$a$ whose
guard is satisfied by~$s$.
When exactly one such command
$\Key{[} a \Key{]} \, g \,\to\, U$
exists, $\sem{m_k} \, a \, s$ is the distribution
denoted by $\sem{U} \, s$.
When multiple commands with the action label~$a$ have overlapping guards,
$\sem{m_k} \, a \, s$ is an unnormalized distribution whose total mass
equals the number of enabled commands;
PRISM allows this ambiguity and resolves it by a random choice
among all the enabled command combinations (see \cref{eq:step-interp}).
If no command with action label~$a$ is enabled in~$s$,
$\sem{m_k} \, a \, s$ has total mass~$0$.

\paragraph{Interpreting models.}

In a specification language like PRISM \cite{prism-manual-v4.9},
the global behavior of a DTMC model is defined by the parallel
execution of modules synchronized by action labels.
A global transition step corresponds to the simultaneous transitioning
of all modules according to a common action label.

Let $\textit{En}\sem{m_k} \, a \, s$,
where $s \in \STATE\parens*{Y_k}$,
count the number of commands with action label~$a$ that are locally enabled
in module~$m_k$ in state~$s$:
\begin{align}
\label{eq:enabled-local-interp}
\textit{En}\sem{m_k} \, a \, s &\defeq 
\begin{cases}
1 & \text{if } a \not\in \ACTION(m_k), \\ 
\sum_{\Key{[} a \Key{]} \, g \,\to\, U \;\in\; \commands{m_k, a}} \indicator{\sem{g} \, s \neq 0} & \text{otherwise,}
\end{cases}
\end{align}
In case $a \notin \ACTION\parens*{m_k}$,
it is considered that there is one enabled command---the implicit self-loop---so $\textit{En}\sem{m_k} \, a \, s$ is defined to be~$1$.

Let $\textit{En}\sem{M} \, a \, s$,
where $s \in \STATE$,
count the total number of enabled \emph{command combinations}
for action~$a$ across all modules:
\begin{align}
\label{eq:enabled-global-interp}
\textit{En}\sem{M} \, a \, s &\defeq
\prod_{k=1}^{K} \textit{En}\sem{m_k} \, a \, {\restrict{s}{Y_k}}.
\end{align}

\noindent
Now we can define the DTMC transition function
$\textit{Step}\sem{M} : \STATE \to \Distr{\STATE}$:
\begin{gather}
\label{eq:step-interp}
\textit{Step}\Bracks*{M} \, s \defeq 
\begin{cases} 
\displaystyle
\frac{\sum_{a}\bigotimes_{k=1}^{K} \sem{m_k} \, a \, {\restrict{s}{Y_k}}}{\sum_{a} \textit{En}\sem{M} \, a \, s}
&
\displaystyle
\text{if $\sum_{a} \textit{En}\sem{M} \, a \, s > 0$,}\\
\delta_s & \text{otherwise.}
\end{cases}
\end{gather}

\noindent
The cases in \cref{eq:step-interp} depend on
whether there is at least one enabled command combination in state~$s$.
\begin{itemize}[align=left,labelsep=*,leftmargin=*,parsep=2pt,itemsep=2pt,topsep=2pt,]
\item
In PRISM, multiple commands---possibly with different action labels,
or with the same label but overlapping guards---may be simultaneously
enabled in a state for a module.
This ambiguity is resolved by a uniformly random choice over all enabled
command combinations across all modules.

A command combination for action~$a$ consists of one enabled command from
each module; the number of such command combinations is
{$\textit{En}\sem{M} \, a \, s = \prod_{k} \textit{En}\sem{m_k} \, a \, {\restrict{s}{Y_k}}$}.
The denominator $\sum_{a} \textit{En}\sem{M} \, a \, s$ in \cref{eq:step-interp}
is the total number of enabled command combinations across all actions.

The operator $\bigotimes$ in \cref{eq:step-interp}
denotes the product of (unnormalized) distributions over
disjoint sets of variables.
Since the $K$ disjoint sets of variables $X_1, \ldots, X_K$
aggregate to $\Vars{M}$,
\vcrunch{$\bigotimes_{k=1}^{K} \sem{m_k} \, a \, {\restrict{s}{Y_k}}$}
is an (unnormalized) distribution over the global state space $\STATE$.
Moreover, its total mass equals $\textit{En}\sem{M} \, a \, s$.

\item
If no action is enabled in $s$, the DTMC loops back to $s$.
\end{itemize}

While
$\bigotimes_{k=1}^{K} \sem{m_k} \, a \, {\restrict{s}{Y_k}}$
is in general an unnormalized probability
distribution, $\textit{Step}\sem{M} \, s$ is a probability distribution.
We prove this fact (\cref{thm:model-distribution-preservation})
in the technical report~\cite{tessa-cav2026-tr}.

\begin{theorem}
\label{thm:model-distribution-preservation}
For any $s \in {\STATE}$, $\textit{Step}\sem{M} \, s \in \Distr{\STATE}$.
\end{theorem}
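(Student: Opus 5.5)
The plan is to split on the two cases of \cref{eq:step-interp}. In the \emph{otherwise} case, $\textit{Step}\sem{M}\,s = \delta_s$, which is trivially a probability distribution. The substantive case is when $Z \defeq \sum_a \textit{En}\sem{M}\,a\,s > 0$. There, the numerator is a finite sum of nonnegative measures, and dividing by the positive constant $Z$ preserves nonnegativity. So it suffices to show that the numerator has total mass exactly $Z$. This reduces to the claim, stated informally in the paper, that for every action $a$ the total mass of $\bigotimes_{k=1}^{K} \sem{m_k}\,a\,\restrict{s}{Y_k}$ equals $\textit{En}\sem{M}\,a\,s$. Summing this over $a$ then gives total mass $Z$, hence mass $1$ after normalization.

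To prove the per-action claim, I will first establish a local lemma. For each module $m_k$ and each $t \in \STATE(Y_k)$, the total mass $|\sem{m_k}\,a\,t|$ equals $\textit{En}\sem{m_k}\,a\,t$. This goes in three layers, bottom up.
\begin{enumerate}
\item By \cref{eq:update-interp}, $\sem{u}\,t$ is a well-defined element of $\STATE(X_k)$. This needs the assigned values $\sem{e_i}t$ to lie in $\dom{x_i}$, which I will assume as a well-formedness condition on models.
\item By \cref{eq:mixture-interp}, $\sem{U}\,t$ is a nonnegative combination of point masses with weights $\theta_i \ge 0$ and $\sum_i \theta_i = 1$. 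Hence it has total mass $1$.
\item By \cref{eq:module-interp}, if $a \notin \ACTION(m_k)$ the result is a point mass of mass $1$. Otherwise, by linearity of total mass, it has mass $\sum \indicator{\sem{g}\,t \neq 0}\cdot 1$. In both cases this matches \cref{eq:enabled-local-interp} exactly.
\end{enumerate}

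Next I make the product $\bigotimes$ precise. For unnormalized distributions $\nu_k$ on $\STATE(X_k)$, define $(\bigotimes_k \nu_k)(s') = \prod_k \nu_k(\restrict{s'}{X_k})$ for $s' \in \STATE$. This is legitimate because $\Vars{M} = \biguplus_k X_k$, so the map $s' \mapsto (\restrict{s'}{X_1},\dots,\restrict{s'}{X_K})$ is a bijection $\STATE \cong \prod_k \STATE(X_k)$. Through this bijection the total mass factors as $\sum_{s'} \prod_k \nu_k(\restrict{s'}{X_k}) = \prod_k \sum_{t_k \in \STATE(X_k)} \nu_k(t_k)$, by distributivity of finite sums over products. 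Applying the local lemma with $t = \restrict{s}{Y_k}$ gives mass $\prod_k \textit{En}\sem{m_k}\,a\,\restrict{s}{Y_k} = \textit{En}\sem{M}\,a\,s$, by \cref{eq:enabled-global-interp}.

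The main obstacle is not computational but definitional. The product $\bigotimes$ is only informally specified, and the domain-closure of updates is implicit. I will state both explicitly and justify the product-space bijection carefully, since the mass factorization rests entirely on the disjointness of the $X_k$. A further technical point is that the sum over actions $a$ must be finite. This holds because $\ACTION(M)$ is finite: each command carries one label, and unlabeled commands receive unique labels. Equivalently, actions outside every $\ACTION(m_k)$ can be excluded or treated consistently in both numerator and denominator.
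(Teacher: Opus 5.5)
Your proposal is correct and takes the same route as the paper. The paper proves a local lemma that the total mass of $\sem{m_k}\,a\,t$ equals $\textit{En}\sem{m_k}\,a\,t$, by case analysis on $a \in \ACTION(m_k)$ and the fact that $\sem{U}\,t$ has mass $1$. It then splits on the cases of \cref{eq:step-interp} and factors the mass of $\bigotimes_k$ over the disjoint $X_k$ to get $\textit{En}\sem{M}\,a\,s$. Your remarks on nonnegativity, domain closure of updates, and finiteness of the action alphabet spell out points the paper leaves implicit.
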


\section{Probabilistic Model Checking as Tensor Computations}
\label{sec:tensor-sem}

In this section, we show how to cast the verification of \Lang
models as computations over tensors.
\cref{sec:tensor-rep} maps state distributions to tensors.
\cref{sec:tensor-transformer-sem} maps \Lang models to tensor
transformers.
\cref{sec:tensor-pmc} maps the verification of step-bounded
reachability properties to tensor computations.

\subsection{Representing State Distributions as Tensors}
\label{sec:tensor-rep}

We use tensors to represent discrete probability distributions over
DTMC states.
A tensor is a multidimensional array generalizing scalars (order-0),
vectors (order-1), matrices (order-2), and so on.
In this work, we use tensors
$\vec{T} \in \mathbb{R}^{a_1 \times \cdots \times a_N}$
over the field of real numbers $\mathbb{R}$
and tensors
$\vec{T} \in \mathbb{N}^{a_1 \times \cdots \times a_N}$
over the field of natural numbers $\mathbb{N}$,
where $N$ is
the \emph{order} (a.k.a.\ \emph{rank}) of the tensor and $a_k$ is the
size of the $k$-th dimension.

\paragraph{The state space as an index space for tensors.}
A state space
$
\STATE\parens*{X}=\parens*{x \in X} \to \dom{x}
$
serves as the index space for order-$\verts*{X}$ tensors:
\begin{itemize}[align=left,labelsep=*,leftmargin=*,parsep=0pt,itemsep=0pt,topsep=2pt,]
\item each dimension corresponds to a state variable $x \in X$,
\item and the size of each dimension is $\verts*{\dom{x}}$.
\end{itemize}

We write $\Tensor{\STATE\parens*{X}}$ to denote the set of
$\mathbb{R}$-valued
order-$\verts*{X}$ tensors with the index space $\STATE\parens*{X}$.
That is,
{$
\Tensor{\STATE\parens*{X}} \defeq
\mathbb{R}^{\prod_{x \in X} \verts*{\dom{x}}}
$}.
For $\vec{T} \in \Tensor{\STATE\parens*{X}}$,
$\tindex{\vec{T}}{s}$ denotes the entry of $\vec{T}$
at index $s \in \STATE\parens*{X}$.

$\Tensor{\STATE\parens*{X}}$ is a \emph{vector space} (a.k.a.\ \emph{linear space})
under element-wise addition and scalar multiplication.
This $\verts*{X}$-dimensional vector space has a natural set of basis vectors
$\setc{\vec{B}_s}{s \in \STATE\parens*{X}}$,
where each basis vector $\vec{B}_s \in \Tensor{\STATE\parens*{X}}$
is a tensor defined as
$
\tindex{\vec{B}_s}{s'} \defeq
\indicator{s' = s}
$.

\paragraph{State distributions as tensors.}
We define a function ${\Xi} : \UNDistr{\STATE\parens*{X}} \to \Tensor{\STATE\parens*{X}}$ mapping
a distribution $\mu \in \UNDistr{\STATE\parens*{X}}$ to a tensor:
\[
{\Xi}(\mu) \defeq \sum_{s \in \STATE\parens*{X}} \mu(s) \cdot \vec{B}_s.
\]
That is,
${\Xi}(\mu)\,{=}\,\vec{T}$ iff
$\forall s\,{\in}\,\STATE\parens*{X}$, $\tindex{\vec{T}}{s} = \mu(s)$.
For example, given a joint distribution~$\mu$ of two state variables
(say, the states of professors in \cref{fig:motivating-example} when $N=2$),
$\Xi$ maps $\mu$ to an order-2 tensor $\vec{T} \in \mathbb{R}^{3 \times 3}$:

\begin{centered}
\sm{1ex}
\def\MyFont{\fontsize{7}{8.5}\selectfont}
\begin{tikzpicture}
\node[align=left, font=\MyFont, fill=gray!20, rounded corners=2pt, inner sep=2pt] (dist-text) at (0,0) {\begin{minipage}{42pt}\centering
        distribution\\$\mu$
        \end{minipage}
    };

\node[align=left, font=\MyFont] (dist) [right=3pt of dist-text] {\def\arraystretch{1.0}$\begin{array}{@{\;}c@{\ \ \ }c@{\ \ \ }c@{\;}}
            s_1 & s_2 & \text{probability} \\
            \hline
            0 & 0 & 0.1 \\
            0 & 1 & 0.2 \\
            0 & 2 & 0.0 \\
            1 & 0 & 0.3 \\
\cdots & \cdots & \cdots
        \end{array}$
    };

\matrix[tensorgrid, font=\MyFont] (tensor) [right=80pt of dist] {
        0.1 & 0.2 & 0.0 \\
        0.3 & 0.0 & 0.1 \\
        0.0 & 0.1 & 0.2 \\
    };

\node[tensorlabel, left=0pt of tensor-1-1] {$s_1{=}0$};
    \node[tensorlabel, left=0pt of tensor-2-1] (s1-lbl) {$s_1{=}1$};
    \node[tensorlabel, left=0pt of tensor-3-1] {$s_1{=}2$};
    
\node[tensorlabel, above=10pt of tensor-1-1, rotate=45, anchor=center] {$s_2{=}0$};
    \node[tensorlabel, above=10pt of tensor-1-2, rotate=45, anchor=center] {$s_2{=}1$};
    \node[tensorlabel, above=10pt of tensor-1-3, rotate=45, anchor=center] {$s_2{=}2$};

    \node[right=2pt of tensor, font=\MyFont, fill=gray!20, rounded corners=2pt, inner sep=2pt] (tensor-text) {\begin{minipage}{22pt}\centering
        tensor\\$\vec{T}$
        \end{minipage}
    };

\draw[arrowstyle] (dist.east) -- (s1-lbl.west) node[midway, above, font=\MyFont, fill=gray!20, rounded corners=2pt, inner sep=2pt, outer sep=3pt] {$\Xi(\mu)=\vec{T}$
    };
   
\end{tikzpicture}
 \end{centered}

\noindent
Each dimension corresponds to a professor's state.
Both dimensions have size~3 (corresponding to states \textsf{Away}, \textsf{Doodling}, \textsf{Done}).
This tensor should not be confused with the state-transition matrix
used by a probabilistic model checker like Storm~\cite{storm2022},
which would be of size $9 \times 9$ (since there are 9 joint states).

\paragraph{Index tensors.}

For each variable $x \in X$, we define its \emph{index tensor}
$\vec{I}_{x \in X} \in \Tensor[N]{\STATE\parens*{X}}$
as a $\mathbb{N}$-valued order-$|X|$ tensor.
For any index $s \in \STATE\parens*{X}$,
$\tindex{\vec{I}_{x \in X}}{s} = s\parens*{x}$
is simply the value of that variable in state $s$.

\subsection{Tensor Semantics of \Lang Models}
\label{sec:tensor-transformer-sem}

\paragraph{Interpreting expressions.}
\cref{sec:lang-semantics} interprets an expression $e$ as a function
$\sem{e} : \STATE\parens*{X} \to \mathbb{N}$.
We now define $\tsem{e}_X \in \Tensor[N]{\STATE\parens*{X}}$ as follows,
where the subscript~$X$ indicates the shape of the tensor:
\begin{mathparpagebreakable}
\Rule{}{
  \tsem{n}_X \defeq n \cdot \mathbf{1}_{X}
}

\Rule{}{
  \tsem{x}_X \defeq \vec{I}_{x \in X}
}

\Rule{
  \tsem{e_i}_X = \vec{T_i} \text{ for } i = 1, \ldots, l
  \\
  \tsem{\textit{op}}_X = f
}{
  \tsem{\textit{op}\parens*{e_1, \cdots, e_l}}_X \defeq f(\vec{T_1}, \ldots, \vec{T_l})
}
\end{mathparpagebreakable}

\noindent
where $\mathbf{1}_{X} \in \Tensor[N]{\STATE\parens*{X}}$ is the all-ones tensor,
and $\tsem{\textit{op}}_X$ is $\sem{\textit{op}}$ lifted to operate
element-wise over tensors.

\paragraph{Interpreting updates.}
\cref{sec:lang-semantics} interprets an update $u$ in module $m_k$ as a function $\sem{u} : \STATE\parens*{Y_k} \to \STATE\parens*{X_k}$.
So given $s \in \STATE\parens*{Y_k}$, $\sem{u} \, s$ gives the value
of $x \in X_k$ after executing $u$ in state $s$.
This reading suggests that we can define the tensor interpretation of $u$ as a function $\tsem{u}_{Y_k} : X_k \to \Tensor[N]{\STATE\parens*{Y_k}}$:
\begin{gather}
\label{eq:update-tensor}
\tsem{
  x_1 \GETS e_1 \; \cdots \; x_n \GETS e_n
}_{Y_k} \, x \defeq
\begin{cases}
  \tsem{e_i}_{Y_k} & \text{if $x = x_i$ for some $i$,}
  \\
  \vec{I}_{x \in Y_k} & \text{otherwise.}
\end{cases}
\end{gather}

\paragraph{Interpreting mixtures of updates.}
\cref{sec:lang-semantics} interprets a mixture of updates~$U$ in module~$m_k$
as a function $\sem{U} : \STATE\parens*{Y_k} \to \Distr{\STATE\parens*{X_k}}$.
We now interpret $U$ as a tensor
$\tsem{U} \in \Tensor{\STATE\parens*{Y_k \uplus X_k}}$,
where $\uplus$ is the disjoint-union operator.
The tensor $\tsem{U}$ is of order $\verts*{Y_k} + \verts*{X_k}$.
Specifically,
\begin{gather}
\label{eq:mixture-tensor}
\tsem{\theta_1 : u_1 + \cdots + \theta_n : u_n} \defeq \sum_{i=1}^{n} \theta_i \cdot \bigodot_{x \in X_k} \indicator{
  \parens*{\tsem{u_i}_{Y_k} \, x} \otimes \mathbf{1}_{X_k} =
  \mathbf{1}_{Y_k} \otimes \vec{I}_{x \in X_k}
},
\end{gather}
where $\bigodot$ is the Hadamard (i.e., element-wise) product,
$\otimes$ is the outer product,
and $=$ is overloaded to denote element-wise equality.
Observe the correspondence between \cref{eq:mixture-interp} and \cref{eq:mixture-tensor}.
The outer products in \cref{eq:mixture-tensor} are used to align the
tensor dimensions.
Intuitively, if we write $\parens*{s, s'}$ as an index,
where $s \in \STATE\parens*{Y_k}$ and $s' \in \STATE\parens*{X_k}$,
then we should have
$
\tindex{\tsem{\theta_1 : u_1 + \cdots + \theta_n : u_n}}{\parens*{s, s'}} =
\sum_{i=1}^{n} \theta_i \cdot \indicator{\sem{u_i} s = s'}
$.
That is, the tensor's value at index $\parens*{s, s'}$ is the probability
of transitioning to state $s'$ from state $s$ when the mixture of
updates $U$ is executed.

\paragraph{Interpreting modules.}
\cref{sec:lang-semantics} interprets a module $m_k$ as a function
$\sem{m_k} : \ACTION\parens*{M} \to \STATE\parens*{Y_k} \to \UNDistr{\STATE\parens*{X_k}}$.
Accordingly, we define the tensor interpretation
$\tsem{m_k} : \ACTION\parens*{M} \to \Tensor{\STATE\parens*{Y_k \uplus X_k}}$
as follows:
\begin{align}
\label{eq:module-tensor}
\tsem{m_k} \, a \defeq
\begin{cases} 
\bigodot_{x \in X_k} \indicator{ \vec{I}_{x \in Y_k} \otimes \mathbf{1}_{X_k} = \mathbf{1}_{Y_k} \otimes \vec{I}_{x \in X_k} } & \text{\!if } a \not\in \ACTION(m_k), \\
\sum_{\Key{[} a \Key{]} g \to U \,\in\, \commands{m_k, a}} \parens*{ \indicator{\tsem{g}_{Y_k} \neq 0} \otimes \mathbf{1}_{X_k} } \odot \tsem{U} & \text{\!otherwise.}
\end{cases}
\end{align}
Observe the correspondence between \cref{eq:module-interp} and \cref{eq:module-tensor}.
Intuitively, $\tsem{m_k} \, a$ is a tensor whose value at index $\parens*{s, s'}$
is the unnormalized probability of transitioning to state $s'$ from state $s$ when
$m_k$ is executed with action $a$.
When multiple commands have overlapping guards,
the sum $\sum_{s'} \tindex{\tsem{m_k} \, a}{(s,s')}$
equals the number of enabled commands in state~$s$;
normalization to a probability distribution occurs
at the model level rather than at the module level
(see \cref{eq:step-tensor-pointful}).

\paragraph{Interpreting models.}

\cref{sec:lang-semantics} interprets a model $M$ as a function
$\textit{Step}\sem{M} : {\STATE} \to \Distr{\STATE}$.
Following the way that modules are interpreted as tensors,
we could interpret $M$ as a tensor
$\tsem{M} : \Tensor{\STATE\parens*{\Vars{M} \uplus \Vars{M}}}$.
However, this tensor would be of order $2\verts*{\Vars{M}}$,
effectively materializing the full state-transition matrix.
For space efficiency, we instead interpret $M$ as a function
$\tsem{M} : \Tensor{\STATE} \to \Tensor{\STATE}$,
representing the transition dynamics as a tensor transformer
(i.e., code) rather than as a tensor (i.e., data).

Corresponding to
$\textit{En}\sem{m_k} \, a \, s$ and $\textit{En}\sem{M} \, a \, s$
in \cref{eq:enabled-local-interp} and \cref{eq:enabled-global-interp},
we define tensors
$\textit{En}\tsem{m_k} \, a \in \Tensor[N]{\STATE\parens*{Y_k}}$
and $\textit{En}\tsem{M} \, a \in \Tensor[N]{\STATE}$:
\begin{align}
\textit{En}\tsem{m_k} \, a &\defeq
\begin{cases}
\mathbf{1}_{Y_k} & \text{if } a \not\in \ACTION(m_k), \\ 
\sum_{\Key{[} a \Key{]} \, g \,\to\, U \;\in\; \commands{m_k, a}} \indicator{\tsem{g}_{Y_k} \neq 0} & \text{otherwise,}
\end{cases}
\\
\tindex{\textit{En}\tsem{M} \, a}{s} &\defeq \prod_{k=1}^{K} \tindex{\textit{En}\tsem{m_k} \, a}{\restrict{s}{Y_k}}
\end{align}
Define $\vec{L} \defeq \sum_{a} \textit{En}\tsem{M} \, a$
as the tensor counting the total number of
enabled command combinations in each state.

We define $\tsem{M}$
by specifying how it transforms a tensor $\vec{T} \in \Tensor{\STATE}$ representing the current-state distribution
into a tensor representing the next-state distribution.
Specifically, for any $s' \in \STATE$,
$\tsem{M} \, \vec{T}$ weights the probability of
transitioning to $s'$ from each state $s$ by the probability
$\tindex{\vec{T}}{s}$ of being in $s$:
\begin{align}
\label{eq:step-tensor-pointful}
\tindex{\parens*{\tsem{M} \, \vec{T}}}{s'} \defeq
\sum_{s \in \STATE}
\tindex{\vec{T}}{s} \cdot 
\begin{cases}
\frac{\sum_{a}\prod_{k=1}^{K} \tindex{\tsem{m_k} \, a}{\parens*{\restrict{s}{Y_k}, \restrict{s'}{X_k}}}}{\tindex{\vec{L}}{s}}
& \displaystyle
\text{if $\tindex{\vec{L}}{s} > 0$,}\\
\indicator{s = s'} & \text{otherwise.}
\end{cases}
\end{align}

\noindent
The transition probability is
given by a case analysis, similarly to \cref{eq:step-interp}.
The branching control flow in \cref{eq:step-tensor-pointful} hinders parallelization
over the index space $\STATE$, however.
Fortunately, we can encode the branching logic as pure tensor flow,
through a sum of two terms.
We redefine $\tsem{M}$ as follows:
\begin{align}
\label{eq:step-tensor-pointfree}
\tsem{M} \, \vec{T} &\defeq
\parens*{\sum_{a} \vec{P}_a}
+
\parens*{\vec{T} \odot \indicator{\vec{L} = \mathbf{0}_{\Vars{M}}}}
\\
\label{eq:step-tensor-contraction}
\tindex{\vec{P}_a}{s'} &\defeq
\sum_{s \in \STATE}
  \tindex{\parens*{
    \frac{\vec{T}}{\max\parens*{\vec{L}, \mathbf{1}_{\Vars{M}}}}
  }}{s}
  \cdot \prod_{k=1}^{K} \tindex{\tsem{m_k} \, a}{\parens*{\restrict{s}{Y_k}, \restrict{s'}{X_k}}}
\end{align}
The two terms in \cref{eq:step-tensor-pointfree} correspond to the two branches of \cref{eq:step-tensor-pointful}.

\begin{itemize}[align=left,labelsep=*,leftmargin=*,parsep=2pt,itemsep=2pt,topsep=2pt,]
\item
The first term $\sum_a \vec{P}_a$ rearranges the first
branch of \cref{eq:step-tensor-pointful}: it pushes $\sum_{s}$ inside
$\sum_{a}$ and weights the input tensor $\vec{T}$ by
$\mathbf{1} \oslash \max\parens*{{\vec{L}}, \mathbf{1}}$,
as required by the averaging semantics.
In particular, when $\tindex{\vec{L}}{s} = 0$,
the normalization factor
$\max\parens*{\tindex{\vec{L}}{s}, {1}}=1$, while
$\tindex{\tsem{m_k} \, a}{\parens{\restrict{s}{Y_k}, \cdot}}$ is zero for some
module~$k$ for every action~$a$,
so the term vanishes.

Notice that $\vec{P}_a$ as defined in \cref{eq:step-tensor-contraction}
is a tensor contraction over dimensions $s$ corresponding
to the current state.
The output dimensions $s'$ correspond to the next state.
The tensors in this contraction are the masked input tensor and the
$K$ module tensors $\tsem{m_k} \, a$.
This tensor contraction is the main work performed by the tensor
transformer $\tsem{M}$.

\item
The second term $\vec{T} \odot \indicator{\vec{L} = \mathbf{0}}$
weights the input tensor $\vec{T}$ by the mask $\indicator{\vec{L} = \mathbf{0}}$.
This mask is $1$ exactly where no actions are enabled,
leaving the probability mass in those states unchanged
and zeroing out the mass in all other states.
\end{itemize}

\subsection{Casting Probabilistic Model Checking as Tensor Computations}
\label{sec:tensor-pmc}

With the tensor-transformer interpretation defined,
we can now cast probabilistic model checking
as tensor computations.
In words, step-bounded reachability probabilities
can be computed via repeated applications of the tensor transformer
$\tsem{M}$.
Specifically, for model $M$ with goal expression $e$,
we write $\Pr_M \parens*{\vec{T} \Vvdash \Diamond^{\leq n} e}$
to denote the probability of reaching a state satisfying expression $e$
within $n$ steps when starting from a state drawn from
the distribution represented by tensor $\vec{T} \in \Tensor{\STATE}$.
It is defined inductively as follows,
where
$\vec{\Delta}_{e} \defeq \indicator{\tsem{e}_{\Vars{M}} \neq \mathbf{0}_{\Vars{M}}}$ and
$\vec{\Delta}_{\neg e} \defeq \indicator{\tsem{e}_{\Vars{M}} = \mathbf{0}_{\Vars{M}}}$
are mask tensors:
\begin{align}
\label{eq:reachability-via-tensor-base-case}
\Pr_M \parens*{\vec{T} \Vvdash \Diamond^{\leq 0} e} & \defeq
\angles{{\vec{\Delta}_{e}}, {\vec{T}}}
\\
\label{eq:reachability-via-tensor-inductive-case}
\Pr_M \parens*{\vec{T} \Vvdash \Diamond^{\leq n+1} e} & \defeq
\angles{{\vec{\Delta}_{e}}, {\vec{T}}} +
\Pr_M \parens*{\tsem{M} \, \parens*{\vec{\Delta}_{\neg e} \odot \vec{T}} \Vvdash \Diamond^{\leq n} e}
\end{align}

The definition mirrors that of
$\Pr_{\mathcal{M}}\parens*{\mu\Vdash\Diamond^{\leq n} \mathcal{G}}$ in
\cref{eq:reachability-via-distribution-base-case,eq:reachability-via-distribution-inductive-case}.
In the base case \cref{eq:reachability-via-tensor-base-case},
the probability of being in a goal state is given by the Frobenius inner product
$\angles{{\vec{\Delta}_{e}}, {\vec{T}}}=\sum_{s \in \STATE} \tindex{\vec{\Delta}_{e}}{s} \cdot \tindex{\vec{T}}{s}$.
In the inductive case \cref{eq:reachability-via-tensor-inductive-case},
the next-state tensor is given by
$\tsem{M} \, \parens*{\vec{\Delta}_{\neg e} \odot \vec{T}}$,
which is the result of applying the tensor transformer $\tsem{M}$
to the current-state tensor masked by
$\vec{\Delta}_{\neg e}$.
The mask tensor $\vec{\Delta}_{\neg e}$
ensures that only the non-goal states in the current-state tensor
contribute to the next-state tensor.
\cref{fig:tensor-reach} illustrates \cref{eq:reachability-via-tensor-inductive-case}
for the 2-professor example.

\begin{figure}[t]

\centering
\!\!\!\begin{tikzpicture}[
    flow/.style={draw, rounded corners, align=center, font=\small, fill=blue!5},
    tensorgrid/.append style={
        nodes={minimum size=15pt}
    },
]
\matrix[tensorgrid] (Mask) {
        1 & 1 & 1 \\
        1 & 1 & 1 \\
        1 & 1 & |[fill=green!10]| 0 \\
    };
    \node[below=-1pt of Mask] {mask $\vec{\Delta}_{\neg e}$};
    \node[above=-2pt of Mask, font=\scriptsize] {$\neg \parens*{s_1{=}2 \land s_2{=}2}$};

\node[right=-2pt of Mask] (odot) {$\odot$};

\matrix[tensorgrid, right=-2pt of odot] (Current) {
        T_{00} & T_{01} & T_{02} \\
        T_{10} & T_{11} & T_{12} \\
        T_{20} & T_{21} & |[fill=green!10]| T_{22} \\
    };
    \node[below=-1pt of Current] {current tensor $\vec{T}$};
    \node[above=-2pt of Current, font=\scriptsize] {$n\!+\!1$ \textls[-20]{steps remain}};

\node[right=-2pt of Current] (equal) {$=$};

\matrix[tensorgrid, right=-2pt of equal, anchor=west] (Masked) {
        T_{00} & T_{01} & T_{02} \\
        T_{10} & T_{11} & T_{12} \\
        T_{20} & T_{21} & |[fill=green!10]| 0 \\
    };
    \node[below=-1pt of Masked] {$\vec{\Delta}_{\neg e} \odot \vec{T}$};

\matrix[tensorgrid, right=45pt of Masked] (Next) {
        T'_{00} & T'_{01} & T'_{02} \\
        T'_{10} & T'_{11} & T'_{12} \\
        T'_{20} & T'_{21} & T'_{22} \\
    };
    \node[below=-2pt of Next] {next tensor $\tsem{M}\,\parens*{\vec{\Delta}_{\neg e} \odot \vec{T}}$};
    \node[above=-2pt of Next, font=\scriptsize] {$n$ \textls[-20]{steps remain}};

\draw[arrowstyle] (Masked.east) -- (Next.west) node[midway, above, font=\fontsize{7}{8}\selectfont, align=center] {$\tsem{M}$};
    \node[below=0pt of $(Masked.east)!0.5!(Next.west)$, font=\fontsize{6.0}{6.5}\selectfont, align=center] {apply tensor\\transformer};

\node[right=2pt of Next, font=\fontsize{7.0}{8.2}\selectfont, color=green!50!black, align=left, anchor=west] (goal-state-text) {\begin{minipage}{67pt}$T_{22}$, goal state mass in $\vec{T}$, is moved to the accumulated probability of reaching the goal.\end{minipage}};

\end{tikzpicture}
\caption{Visualizing the computation in \cref{eq:reachability-via-tensor-inductive-case} of the reachability probability. The probability mass $T_{22}$ at the goal state ($s_1=2, s_2=2$) is extracted and accumulated to the reachability probability. The remaining mass is transformed by the model's tensor-transformer interpretation to produce the next-state tensor.}
\label{fig:tensor-reach}
\end{figure}
 
\subsection{Discussion}

The correctness of $\Pr_M \parens*{\vec{T} \Vvdash \Diamond^{\leq n} e}$
with respect to the verification problem stated in
\cref{sec:preliminaries} will be established in \cref{sec:correctness}.
The close correspondence between
the definitions of
$\Pr_{\mathcal{M}}\parens*{\mu\Vdash\Diamond^{\leq n} \mathcal{G}}$
and $\Pr_M \parens*{\vec{T} \Vvdash \Diamond^{\leq n} e}$
makes short work of proving the
correctness of tensor probabilistic model checking.

Two factors contribute to the efficiency of
tensor probabilistic model checking.
One factor is that the computations in
\cref{eq:step-tensor-pointfree}, \cref{eq:step-tensor-contraction},
\cref{eq:reachability-via-tensor-base-case}, and
\cref{eq:reachability-via-tensor-inductive-case}
avoid materializing the full $3^N\,{\times}\,3^N$ transition matrix of
the DTMC (using the $N$-professor example for concreteness).
Instead,
the transition matrix is implicitly encoded as tensor computations
(i.e., code rather than data) that transform the order-$N$ tensors.

A more important factor contributing to the efficiency
is that the tensor computations in
\cref{eq:step-tensor-pointfree}, \cref{eq:step-tensor-contraction},
\cref{eq:reachability-via-tensor-base-case}, and
\cref{eq:reachability-via-tensor-inductive-case}
are composed of standard operations over dense tensors and, therefore,
can be implemented as first-order array programs
in an array programming language such as JAX \cite{jax2018github}.
In other words, we have essentially compiled
the probabilistic model checking problem for DTMCs
into array programs of the kind that are otherwise
ubiquitous in machine learning.
These array programs can then be optimized by
machine-learning compilers and exploit the massive parallelism
offered by hardware accelerators such as GPUs.

Finally, compiling to tensor computations makes it possible to
search for model parameters satisfying reachability properties,
via gradient-based optimization.
Since the generated array programs are differentiable,
we can compute gradients of distributional properties
with respect to model parameters using automatic differentiation,
which is readily supported by JAX.

\section{Correctness of Tensor Probabilistic Model Checking}
\label{sec:correctness}

In this section, we establish the correctness of
tensor probabilistic model checking.

In the following,
let $M$ be a \Lang model, and
let $\mu \in \Distr{\textit{State}\Bracks*{M}}$.
\cref{thm:tensor-sem-correctness} establishes the correctness of
the tensor-transformer interpretation $\tsem{M}$
with respect to the standard interpretation $\textit{Step}\sem{M}$.

\begin{theorem}
\label{thm:tensor-sem-correctness}
$\tsem{M} \, {\Xi}(\mu) = {\Xi}\parens*{\sum_{s} \mu(s) \cdot \textit{Step}\sem{M} \, s}.$
That is, the diagram below commutes.

\begin{centered}
\begin{tikzpicture}[>=Stealth, thick]

\coordinate (TL) at (0.0, 1.8);
    \coordinate (TR) at (4.0, 1.8);
    \coordinate (BR) at (4.0, 0.0);

\filldraw (TL) circle (2pt);
    \filldraw (TR) circle (2pt);
    \filldraw (BR) circle (2pt);

\node[left, xshift=-5pt] at (TL) {$M$};
    \node[right, xshift=5pt] at (TR) {}; \node[left, xshift=-5pt] at (BR) {}; \node[right, xshift=5pt] at (BR) {}; 

\draw[->, shorten >=4pt, shorten <=4pt] (TL) -- node[above] {$\sum_{s} \mu(s) \cdot \textit{Step}\sem{\cdot} \, s$} (TR);

\draw[->, shorten >=4pt, shorten <=4pt] (TR) -- node[right] {$\Xi(\cdot)$} (BR);

\draw[->, shorten >=4pt, shorten <=4pt] (TL) -- node[below left, sloped, anchor=north, xshift=-4pt,] {$\tsem{\cdot}\,{\Xi}(\mu)$} (BR);
\end{tikzpicture}
\end{centered}
 \end{theorem}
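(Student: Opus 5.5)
We prove the claim pointwise: we fix an arbitrary $s' \in \STATE$ and show that the entry at $s'$ is the same on both sides. By definition of $\Xi$, the right-hand side at index $s'$ is $\sum_{s} \mu(s)\cdot(\textit{Step}\sem{M}\,s)(s')$. On the left-hand side we work with the pointful form \cref{eq:step-tensor-pointful} rather than the pointfree one. So the first step is a short lemma: \cref{eq:step-tensor-pointfree,eq:step-tensor-contraction} agree with \cref{eq:step-tensor-pointful}. We split on whether $\tindex{\vec{L}}{s}>0$.

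\begin{itemize}
\item If $\tindex{\vec{L}}{s}>0$, the mask $\indicator{\vec{L}=\mathbf{0}}$ vanishes at $s$. Also $\max(\vec{L},\mathbf{1})=\vec{L}$ there, and exchanging $\sum_a$ with $\sum_s$ gives the first branch.
\item If $\tindex{\vec{L}}{s}=0$, then every action $a$ has some $k$ with $\tindex{\textit{En}\tsem{m_k}\,a}{\restrict{s}{Y_k}}=0$. By the second lemma below, the corresponding slice $\tindex{\tsem{m_k}\,a}{(\restrict{s}{Y_k},\cdot)}$ is identically zero. Hence the product term vanishes and only $\tindex{\vec{T}}{s}\indicator{s=s'}$ survives.
\end{itemize}

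With $\vec{T}=\Xi(\mu)$, so that $\tindex{\vec{T}}{s}=\mu(s)$, it then suffices to prove, for every $s$, that the bracketed case expression in \cref{eq:step-tensor-pointful} equals $(\textit{Step}\sem{M}\,s)(s')$.

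The core is a chain of lemmas, proved bottom-up along the syntax, each relating the tensor interpretation at index $(\restrict{s}{Y_k},\restrict{s'}{X_k})$ to the standard one. Throughout, write $t=\restrict{s}{Y_k}$ and $t'=\restrict{s'}{X_k}$.

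\begin{enumerate}
\item \emph{Expressions.} For every $t\in\STATE(X)$, $\tindex{\tsem{e}_X}{t}=\sem{e}\,t$. This follows by structural induction: index tensors give $s(x)$, constants give $n$, and operations are lifted elementwise.
\item \emph{Updates.} For every $x\in X_k$, $\tindex{\tsem{u}_{Y_k}\,x}{t}=(\sem{u}\,t)(x)$. This follows from the expression lemma and \cref{eq:update-interp}. The unassigned case uses $\tindex{\vec{I}_{x\in Y_k}}{t}=t(x)$, which is valid because $X_k\subseteq Y_k$.
\item \emph{Mixtures.} $\tindex{\tsem{U}}{(t,t')}=\sum_i\theta_i\indicator{\sem{u_i}\,t=t'}=(\sem{U}\,t)(t')$. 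The outer products evaluate at $(t,t')$ to $\tindex{\tsem{u_i}x}{t}$ and $t'(x)$, and the Hadamard product over $x\in X_k$ of the equality indicators is exactly the indicator of equality of states in $\STATE(X_k)$.
\item \emph{Modules.} $\tindex{\tsem{m_k}\,a}{(t,t')}=(\sem{m_k}\,a\,t)(t')$ and $\tindex{\textit{En}\tsem{m_k}\,a}{t}=\textit{En}\sem{m_k}\,a\,t$. Both follow by the same pattern, with the self-loop case giving $\delta_{\restrict{t}{X_k}}(t')$. This lemma also yields the fact used above: zero local enabledness implies a zero slice, since the sum over commands becomes empty of enabled terms.
\end{enumerate}

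Finally, we assemble. By definition of the product measure over the disjoint variable sets $X_1,\dots,X_K$, we have $\bigl(\bigotimes_k\sem{m_k}\,a\,\restrict{s}{Y_k}\bigr)(s')=\prod_k(\sem{m_k}\,a\,\restrict{s}{Y_k})(\restrict{s'}{X_k})$. This relies on $s'$ being determined by its restrictions to the $X_k$, since $\Vars{M}=\biguplus_k X_k$. Likewise $\tindex{\vec{L}}{s}=\sum_a\textit{En}\sem{M}\,a\,s$ by \cref{eq:enabled-global-interp}. The two case splits in \cref{eq:step-interp} and \cref{eq:step-tensor-pointful} therefore coincide, and the branches match entrywise. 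Summing against $\mu(s)$ closes the argument.

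I expect the main obstacle to be the mixture lemma. There one must carefully justify that the dimension-aligning outer products and the overloaded elementwise $=$ evaluate at a combined index $(t,t')\in\STATE(Y_k\uplus X_k)$ to the intended scalar comparisons. This is especially delicate because $Y_k$ and $X_k$ overlap as variable sets, so the disjoint union must be read as tagged copies. I would first fix this indexing convention explicitly as a small lemma about $\otimes$ and $\mathbf{1}$, and only then run the inductions.
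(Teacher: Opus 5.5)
Your proof is correct and follows essentially the same route as the paper. Both evaluate pointwise at $s'$, use the bottom-up chain of equivalence lemmas for expressions, updates, mixtures and modules, factor the product over the disjoint $X_k$, and case-split on whether the entry of $\vec{L}$ at $s$ is zero. The differences are only organizational: the paper manipulates the pointfree definition directly instead of first reducing it to the pointful form, and your proposal spells out two steps the paper's appendix leaves implicit, namely that the contraction term vanishes when no command combination is enabled, and that the tensor enabledness counts agree with the standard ones.
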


Let
$\mathcal{M}=\parens*{\textit{State}\Bracks*{M}, \textit{Init}\Bracks*{M}, \textit{Step}\sem{M}, \textit{Goal}\Bracks*{M}}$
be the DTMC represented by $M$ per \cref{sec:lang-semantics}.
Let $e$ be the goal expression specified in $M$.
Let $\mathcal{G} \defeq \textit{Goal}\Bracks*{M} = \setc{s \in \textit{State}\Bracks*{M}}{\sem{e} (s) \neq 0}$
be the set of goal states in $\mathcal{M}$.
\cref{thm:tensor-pmc-correctness} follows from
\cref{thm:tensor-sem-correctness}.

\begin{theorem}
\label{thm:tensor-pmc-correctness}
$
\Pr_M \parens*{{\Xi}(\mu) \cubeop \Diamond^{\leq n} e} = \Pr_{\mathcal{M}}\parens*{\mu\Vdash\Diamond^{\leq n} \mathcal{G}}
$
for any $n \in \mathbb{N}$.
\end{theorem}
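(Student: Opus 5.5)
We prove \cref{thm:tensor-pmc-correctness} by induction on $n$, generalizing over $\mu$. One wrinkle: $\vec{\Delta}_{\neg e} \odot \Xi(\mu)$ is the image of a sub-probability measure, not a distribution. So we state the induction for all $\mu \in \UNDistr{\STATE}$, reading $\Pr_{\mathcal{M}}\parens*{\mu\Vdash\Diamond^{\leq n} \mathcal{G}}$ by the same defining equations \cref{eq:reachability-via-distribution-base-case,eq:reachability-via-distribution-inductive-case}. These equations make sense for unnormalized measures.

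This in turn requires \cref{thm:tensor-sem-correctness} for unnormalized $\mu$. Its statement is phrased for $\mu \in \Distr{\STATE}$, but both sides are linear in $\mu$. The left side $\tsem{M}\,\Xi(\mu)$ is linear in its tensor argument, since \cref{eq:step-tensor-pointfree,eq:step-tensor-contraction} are linear in $\vec{T}$. The right side is $\Xi$, which is linear, applied to a $\mu$-weighted sum. Equivalently, the pointwise identity underlying that theorem, $\tsem{M}\,\vec{B}_s = \Xi(\textit{Step}\sem{M}\,s)$ for each $s$, extends by linearity to all nonnegative (indeed all real) weights. I will record this extension as a preliminary remark. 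Everything below also relies on $\Xi$ being linear and injective.

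The next step is two bookkeeping identities about masks. Since $\tindex{\vec{\Delta}_e}{s} = \indicator{\sem{e}\,s \neq 0} = \indicator{s \in \mathcal{G}}$, we need $\tindex{\tsem{e}_{\Vars{M}}}{s} = \sem{e}\,s$. This follows by a straightforward structural induction on $e$, using $\tindex{\vec{I}_{x}}{s} = s(x)$ and the elementwise lifting of $\textit{op}$. From it we get:
\begin{itemize}
\item[(i)] $\angles{\vec{\Delta}_e, \Xi(\mu)} = \sum_{s\in\mathcal{G}} \mu(s)$;
\item[(ii)] $\vec{\Delta}_{\neg e} \odot \Xi(\mu) = \Xi(\mu')$, where $\mu'(s) = \indicator{s\notin\mathcal{G}}\,\mu(s)$.
\end{itemize}

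The induction itself is then short. The base case $n=0$ is exactly (i). For the step $n+1$, unfold \cref{eq:reachability-via-tensor-inductive-case}. The first summand matches the first summand of \cref{eq:reachability-via-distribution-inductive-case} by (i). For the second summand, (ii) and the linear extension of \cref{thm:tensor-sem-correctness} give
\[
\tsem{M}\parens*{\vec{\Delta}_{\neg e}\odot\Xi(\mu)} = \tsem{M}\,\Xi(\mu') = \Xi\parens*{\textstyle\sum_{s\notin\mathcal{G}} \mu(s)\,\textit{Step}\sem{M}\,s}.
\]
The induction hypothesis, applied to the measure $\sum_{s\notin\mathcal{G}}\mu(s)\,\eta(s)$ with $\eta = \textit{Step}\sem{M}$, closes the case.

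The main obstacle is the generalization from distributions to sub-probability measures. I will handle it by noting that neither recursive definition uses normalization, so the proof goes through for all of $\UNDistr{\STATE}$. The statement for $\mu \in \Distr{\STATE}$ is then a special case.
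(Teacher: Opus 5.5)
Your proposal is correct and matches the paper's proof essentially step for step: induction on $n$, the base case via the goal mask, and the inductive step via $\vec{\Delta}_{\neg e}\odot\Xi(\mu)=\Xi(\mu')$ (the paper's $\mu_{\text{cont}}$), then \cref{thm:tensor-sem-correctness}, then the induction hypothesis applied to $\sum_{s\notin\mathcal{G}}\mu(s)\,\eta(s)$. Your explicit generalization to unnormalized $\mu$ makes precise what the paper uses only implicitly: its appendix states \cref{thm:tensor-sem-correctness} directly for $\mu\in\UNDistr{\STATE}$ and applies the hypothesis to a sub-probability measure, so your linear-extension remark is sound but already covered there.
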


\noindent
Proofs of \cref{thm:tensor-sem-correctness} and
\cref{thm:tensor-pmc-correctness} are available in
the technical report~\cite{tessa-cav2026-tr}.

\cref{thm:tensor-pmc-correctness-init}
establishes the ultimate correctness of
tensor probabilistic model checking with respect to
step-bounded reachability probabilities.

\begin{theorem}
\label{thm:tensor-pmc-correctness-init}
$
\Pr_M \parens*{{\Xi}(\textit{Init}\Bracks*{M}) \Vvdash \Diamond^{\leq n} e}
=
\Pr_{\mathcal{M}}\parens*{\Diamond^{\leq n} \mathcal{G}}
$
for any $n \in \mathbb{N}$.
\end{theorem}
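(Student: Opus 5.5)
This follows by chaining the two earlier correctness results. Instantiate \cref{thm:tensor-pmc-correctness} with $\mu \defeq \textit{Init}\Bracks*{M}$. This is legitimate because $\textit{Init}\Bracks*{M} = \delta_{s_0}$ is a genuine probability distribution in $\Distr{\textit{State}\Bracks*{M}}$, and that theorem is stated for an arbitrary such $\mu$ and every $n \in \mathbb{N}$. This gives
\[
\Pr_M \parens*{{\Xi}(\textit{Init}\Bracks*{M}) \Vvdash \Diamond^{\leq n} e}
= \Pr_{\mathcal{M}}\parens*{\textit{Init}\Bracks*{M} \Vdash \Diamond^{\leq n} \mathcal{G}} .
\]

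Next, recall that $\mathcal{M}$ is the DTMC whose initial distribution is $\iota = \textit{Init}\Bracks*{M}$ and whose goal set is $\mathcal{G} = \textit{Goal}\Bracks*{M}$. Applying \cref{thm:reachability-via-distribution-equiv} to $\mathcal{M}$ therefore yields
\[
\Pr_{\mathcal{M}}\parens*{\textit{Init}\Bracks*{M} \Vdash \Diamond^{\leq n} \mathcal{G}} = \Pr_{\mathcal{M}}\parens*{\Diamond^{\leq n} \mathcal{G}} .
\]
That theorem is stated for an arbitrary Markov chain, so it applies once we know $\mathcal{M}$ is a well-formed DTMC. The only nontrivial part of well-formedness is that $\textit{Step}\sem{M}$ maps each state to a probability distribution, which is exactly \cref{thm:model-distribution-preservation}. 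Composing the two displayed equalities proves the claim for every $n$.

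There is no real obstacle at this level. All the substance lives in the cited theorems: the induction on $n$ in \cref{thm:tensor-pmc-correctness}, which uses \cref{thm:tensor-sem-correctness} together with the facts that masking commutes with $\Xi$ and that $\angles{\vec{\Delta}_e, \Xi(\mu)} = \sum_{s\in\mathcal{G}}\mu(s)$, and the unfolding argument behind \cref{thm:reachability-via-distribution-equiv}.

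The one point to check carefully is a domain mismatch. The inductive step of \cref{thm:tensor-pmc-correctness} feeds the sub-distribution $\sum_{s\notin\mathcal{G}}\mu(s)\eta(s)$, not a normalized distribution, into the recursion. If those earlier results were stated only for $\mu \in \Distr{\cdot}$, I would note that both sides are linear in $\mu$, so they extend verbatim to $\UNDistr{\cdot}$. That extension is harmless here, since our instantiation uses a genuine point mass.
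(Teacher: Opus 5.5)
Your proposal is correct and follows the paper's proof exactly: instantiate \cref{thm:tensor-pmc-correctness} at $\mu = \textit{Init}\Bracks*{M}$, then rewrite the result using \cref{thm:reachability-via-distribution-equiv}. Your side remarks on the well-formedness of $\mathcal{M}$ and on extending the induction to unnormalized measures are sound, but this final composition step does not need them.
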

\begin{proof} Let $\iota = \textit{Init}\Bracks*{M}$ be the initial-state distribution of the DTMC $\mathcal{M}$.
Starting from the left-hand side:
\begin{align*}
\Pr_M \parens*{{\Xi}(\iota) \Vvdash \Diamond^{\leq n} e}
&= \Pr_{\mathcal{M}}\parens*{\iota \Vdash \Diamond^{\leq n} \mathcal{G}}
\tag{by \cref{thm:tensor-pmc-correctness}} \\
&= \Pr_{\mathcal{M}}\parens*{\Diamond^{\leq n} \mathcal{G}}
\tag{by \cref{thm:reachability-via-distribution-equiv}}
\end{align*}
\qedhere
\end{proof}

\noindent
The proof of \cref{thm:tensor-pmc-correctness-init} reveals that
$\Pr_{\mathcal{M}}\parens*{\iota \Vdash \Diamond^{\leq n} \mathcal{G}}$
bridges tensor-based reachability and
the standard definition of reachability probability.

\section{Accelerating Tensor Computations with JAX and XLA}
\label{sec:accelerate}

At this point, we have reduced the verification of DTMCs (with respect
to step-bounded reachability properties) to dense tensor computations.
But that alone does not guarantee efficiency.

Efficiency hinges on how fast the tensor computations can run.
In the last decade, training and inference in machine learning (ML) have driven significant
advances in compiler optimizations for tensor computations, as well as
hardware acceleration for them.
It is thus natural to leverage off-the-shelf ML
compilers and hardware accelerators to speed up tensor probabilistic
model checking.

Specifically, \tessa compiles DTMC models to array programs in JAX \cite{jax2018github}.
JAX, embedded in Python, is a popular array programming language
for high-performance numerical computing and machine learning.
A just-in-time compiler traces the JAX program.
The resulting intermediate representation is then
handed off to XLA (Accelerated Linear Algebra) \cite{xla-github},
a domain-specific compiler designed to optimize tensor computations. 

XLA performs whole-program optimizations that are critical for
performance on modern hardware accelerators. Importantly, it
applies kernel fusion, merging multiple element-wise operations (such
as those in \cref{sec:tensor-transformer-sem,sec:tensor-pmc}) into a
single GPU kernel. Fusion significantly reduces
memory footprint and the pressure on memory bandwidth, as it
avoids materializing intermediate results between operations
to the GPU memory.
Such optimizations enable our tensor-based verification algorithm to
fully saturate the massive parallelism offered by GPUs.

Moreover, by compiling to JAX, \tessa leverages ML compiler
optimizations \emph{transparently}.
The user does not write any GPU kernel code or manage GPU memory
explicitly.
Future improvements in ML compiler technology and hardware
accelerators will directly benefit \tessa without changes to its
implementation.

\section{Evaluation}
\label{tessa:sec:eval}

We have implemented \tessa in Python.
In this section, we evaluate \tessa on two fronts: model checking and parameter synthesis.

\subsection{Model Checking}
\label{tessa:sec:model-checking-eval}

\paragraph{Evaluation methodology.}

We adopt benchmarks directly from Rubicon
\cite{holtzen2021rubicon}.
These benchmarks consist of DTMC models with dense transition dynamics.
Future work could expand the scope of the evaluation to include a more comprehensive set of benchmarks.
Nevertheless, the current selection is
already representative of challenging Markov chain verification tasks in the dense regime.

All experiments ran on a machine with an Intel Core i7-7820X CPU,
\qty{128}{GB} RAM, and an NVIDIA GeForce RTX 2080 Ti GPU (\qty{11}{GB} VRAM).
At first glance, comparing CPU-bound methods against a GPU-accelerated tool might seem like comparing apples to oranges.
However, this hardware distinction is precisely the point.
Prevailing methods rely on representations that induce
irregular memory access.
Consequently, they do not map naturally to GPUs.
In contrast, \tessa generates dense tensor workloads at which GPUs excel.
We are therefore comparing \emph{methodologies} on the hardware
they naturally map to---rather than comparing the hardware per se.

We note that the GPU used in our experiments is an older consumer-grade model
in the hardware vendor's lineup.
While we already observe substantial speedups over
state-of-the-art tools, access to frontline hardware is expected to
yield even greater~gains.

\paragraph{Baseline methods.}
For Rubicon \cite{holtzen2021rubicon}, the authors use
\storm (Sparse) and \storm (MTBDD) as
baselines in their evaluation.
We include both, as well as \dice (via the Rubicon transpiler \cite{holtzen2021rubicon}).
\begin{itemize}[leftmargin=*,topsep=2pt,itemsep=0pt,partopsep=0pt]

\item \storm (Sparse).
In this engine of the Storm model checker,
the transition dynamics of a DTMC model is represented as a
sparse matrix in the standard \emph{compressed sparse row} (CSR) format.
The engine uses efficient sparse matrix--vector multiplication
kernels from off-the-shelf libraries such as Eigen~and~Gmm++.

\item \storm (MTBDD).
This engine of Storm represents the DTMC transition dynamics
using multi-terminal binary decision diagrams.
Random access to entries of the transition matrix is not
as efficient as in the sparse-matrix engine, but the MTBDD
engine can be more memory-efficient for certain models. 

\item \dice.
In this approach, the DTMC model specification is first lowered
to the Dice probabilistic programming language~\cite{holtzen2020dice}
using the Rubicon transpiler~\cite{holtzen2021rubicon}.
The Dice compiler then represents the set of paths from
the initial state to the goal states effectively as a BDD.
This approach is shown to excel on models with certain structures.

\end{itemize}

\par
\newcommand{\Bench}[1]{{\normalfont\sffamily #1}\xspace}

\noindent
All methods are configured to use the double-precision floating-point format.

\paragraph{Results.}
We now report results for each of the benchmarks on which
the Rubicon/Dice method is evaluated \cite{holtzen2021rubicon}:
\Bench{Queues}, \Bench{Weather Factories}, and \Bench{Herman}.

\newlength{\MctEvalBodyWidth}
\newlength{\MctEvalPageHeight}
\newlength{\PlotHeight}
\newlength{\PlotWidth}
\newlength{\EvalGroupHSep}
\newlength{\EvalDummyPlotWidth}
\newlength{\EvalYlabelBoxWidth}
\newcommand{\MctSetClampedLength}[4]{\setlength{#1}{#2}\ifdim#1<#3\relax\setlength{#1}{#3}\fi
  \ifdim#1>#4\relax\setlength{#1}{#4}\fi
}
\setlength{\MctEvalBodyWidth}{\textwidth}
\setlength{\MctEvalPageHeight}{\textheight}

\begin{figure}[p]
\MctSetClampedLength{\EvalGroupHSep}{\dimexpr\MctEvalBodyWidth*4/100\relax}{6pt}{18pt}
\MctSetClampedLength{\EvalDummyPlotWidth}{\dimexpr\MctEvalBodyWidth*14/100\relax}{32pt}{84pt}
\MctSetClampedLength{\PlotWidth}{\dimexpr\MctEvalBodyWidth*30/100\relax}{70pt}{122pt}
\MctSetClampedLength{\PlotHeight}{\dimexpr(\MctEvalPageHeight - 8\baselineskip)/3\relax}{120pt}{210pt}
\MctSetClampedLength{\EvalYlabelBoxWidth}{\dimexpr\MctEvalBodyWidth/15\relax}{28pt}{44pt}
\centering
\begin{tikzpicture}
\catcode`\_=12
\def\FirstRunCol{work_seconds}
\def\SecondRunCol{measured_avg_seconds}
\def\BaselineCol{elapsed_seconds}
\def\StormAddCsv{storm.add.csv}
\def\StormSpmCsv{storm.spm.csv}
\def\RubiconCsv{rubicon.csv}
\def\TessaCsv{tessa.csv}
\def\TestQ{tessa/camera-ready/parqueues/testq/}
\def\TestH{tessa/camera-ready/parqueues/testh/}
\begin{groupplot}[
  group style={
    group size=5 by 1,
    horizontal sep=\EvalGroupHSep,
},
  height=\PlotHeight,
  width=\PlotWidth,
  legend style={
    transpose legend,
    legend columns=2,
    /tikz/every even column/.append style={column sep=-0.4ex},
},
]

\nextgroupplot[
  xlabel={$K$ ($H\,{=}\,10$)},
  legend to name=tessa:parqueues-1-A,
  ymax=160,
]

\addplot[storm-add, mark=diamond*,
]
table [
  x=Q,
  y=\BaselineCol,
  col sep=comma,
  restrict expr to domain={x}{3:9},
unbounded coords=discard,
]
{\TestQ\StormAddCsv};
\addlegendentry{\storm (MTBDD)\ \ \ \ \ }

\addplot[storm-spm, mark=triangle*]
table [
  x=Q,
  y=\BaselineCol,
  col sep=comma,
  restrict expr to domain={x}{3:10},
unbounded coords=discard,
]
{\TestQ\StormSpmCsv};
\addlegendentry{\storm (Sparse)}

\addplot[dice, mark=o,]
table [
  x=Q,
  y=\BaselineCol,
  col sep=comma,
  discard if not={status}{ok},
  discard if not A={N}{3},
  restrict expr to domain={x}{3:9},
  unbounded coords=discard,
]
{\TestQ\RubiconCsv};
\addlegendentry{\dice}

\addplot[tessa-gpu-jit-1, mark=*]
table [
  x=Q,
  y=\FirstRunCol,
  col sep=comma,
  discard if not={N}{3},
  restrict expr to domain={x}{3:11},
  unbounded coords=discard,
]
{\TestQ\TessaCsv};

\nextgroupplot[
  xlabel={$K$ ($H\,{=}\,10$)},
  ymax=8.2,
  legend to name=tessa:parqueues-1-B,
]

\addplot[tessa-gpu-jit-1, mark=*]
table [
  x=Q,
  y=\FirstRunCol,
  col sep=comma,
  discard if not={N}{3},
  discard if not={H}{10},
  restrict expr to domain={x}{3:11},
  unbounded coords=discard,
]
{\TestQ\TessaCsv};
\addlegendentry{\tessa (1st)}

\addplot[tessa-gpu-jit-2, mark=*]
table [
  x=Q,
  y=\SecondRunCol,
  col sep=comma,
  discard if not={N}{3},
  discard if not={H}{10},
  restrict expr to domain={x}{3:11},
  unbounded coords=discard,
]
{\TestQ\TessaCsv};
\addlegendentry{\tessa (2nd)}

\nextgroupplot[
  xlabel={},
  legend to name=tessa:parqueues-1-E,
ylabel={},
  axis line style={draw=none},
  ytick=\empty,
  xtick=\empty,
  width=\EvalDummyPlotWidth,
]

\addplot coordinates {(0,0)};

\nextgroupplot[
  xlabel={$H$ ($K\,{=}\,9$)},
  legend to name=tessa:parqueues-1-C,
  ymax=70,
]

\addplot[storm-add, mark=diamond*]
table [
  x=H,
  y=\BaselineCol,
  col sep=comma,
  restrict expr to domain={x}{10:10},
unbounded coords=discard,
]
{\TestH\StormAddCsv};

\addplot[storm-spm, mark=triangle*]
table [
  x=H, 
  y=\BaselineCol, 
  col sep=comma,
  restrict expr to domain={x>=100 && x<=500 || x==10}{1:1},
  unbounded coords=discard,
]
{\TestH\StormSpmCsv};

\addplot[dice, mark=o,]
table [
  x=H,
  y=\BaselineCol,
  col sep=comma,
  discard if not={status}{ok},
  discard if not A={Q}{9},
  discard if not A={N}{3},
  restrict expr to domain={x>=100 && x<=500 || x==10}{1:1},
  unbounded coords=discard,
]
{\TestH\RubiconCsv};

\addplot[tessa-gpu-jit-1, mark=*]
table [
  x=H,
  y=\FirstRunCol,
  col sep=comma,
  discard if not A={Q}{9},
  discard if not A={N}{3},
  restrict expr to domain={x>=100 && x<=500 || x==10}{1:1},
  unbounded coords=discard,
]
{\TestH\TessaCsv};

\coordinate (top) at (rel axis cs:0,1);\coordinate (bot) at (rel axis cs:1,0);

\nextgroupplot[
  xlabel={$H$ ($K\,{=}\,9$)},
  ymax=1.2,
  ymin=-0.18,
legend to name=tessa:parqueues-1-D,
]

\addplot[tessa-gpu-jit-1, mark=*]
table [
  x=H,
  y=\FirstRunCol,
  col sep=comma,
  discard if not={Q}{9},
  discard if not={N}{3},
  restrict expr to domain={x>=100 && x<=500 || x==10}{1:1},
  unbounded coords=discard,
]
{\TestH\TessaCsv};

\addplot[tessa-gpu-jit-2, mark=*]
table [
  x=H,
  y=\SecondRunCol,
  col sep=comma,
  discard if not={Q}{9},
  discard if not={N}{3},
  restrict expr to domain={x>=100 && x<=500 || x==10}{1:1},
  unbounded coords=discard,
]
{\TestH\TessaCsv};

\end{groupplot}

\node [above,inner sep=3pt,xshift=5pt] at (current bounding box.north) {\pgfplotslegendfromname{tessa:parqueues-1-A}\pgfplotslegendfromname{tessa:parqueues-1-B}};

\path (top)--(bot) coordinate[midway] (group center);
\node[above,rotate=90] at (group center -| current bounding box.west) {
  \begin{minipage}{\EvalYlabelBoxWidth}
  \centering
  \fontsize{7.5}{8}\selectfont
  time (s)
  \end{minipage}
};

\end{tikzpicture}

\caption{Scaling on the \Bench{Queues} benchmark.}
\label{fig:eval-parqueues}
 \vfill
\centering
\begin{tikzpicture}
\catcode`\_=12
\def\FirstRunCol{work_seconds}
\def\SecondRunCol{measured_avg_seconds}
\def\BaselineCol{elapsed_seconds}
\def\StormAddCsv{storm.add.csv}
\def\StormSpmCsv{storm.spm.csv}
\def\RubiconCsv{rubicon.csv}
\def\TessaCsv{tessa.csv}
\def\GeniCsv{geni.csv}
\def\TestN{tessa/camera-ready/weather-factory/testn/}
\def\TestH{tessa/camera-ready/weather-factory/testh/}
\begin{groupplot}[
  group style={
    group size=5 by 1,
    horizontal sep=\EvalGroupHSep,
},
  height=\PlotHeight,
  width=\PlotWidth,
  legend style={
    transpose legend,
    legend columns=2,
    /tikz/every even column/.append style={column sep=-0.4ex},
},
]

\nextgroupplot[
  xlabel={$N$ ($H\,{=}\,10$)},
  legend to name=tessa:weather-1-A,
ymax=460
]

\addplot[storm-add, mark=diamond*]
table [
  x=N,
  y=\BaselineCol,
  col sep=comma,
  restrict expr to domain={x}{7:11},
  unbounded coords=discard,
]
{\TestN\StormAddCsv};
\addlegendentry{\storm (MTBDD)\ \ \ \ \ }

\addplot[storm-spm, mark=triangle*]
table [
  x=N,
  y=\BaselineCol,
  col sep=comma,
  restrict expr to domain={x}{7:17},
  unbounded coords=discard,
]
{\TestN\StormSpmCsv};
\addlegendentry{\storm (Sparse)}

\addplot[dice, mark=o, ]
table [
  x=N,
  y=\BaselineCol,
  col sep=comma,
  discard if not={H}{10},
  restrict expr to domain={x}{7:17},
]
{\TestN\RubiconCsv};
\addlegendentry{\dice}

\addplot[geni, mark=+]
table [
  x=N,
  y=\BaselineCol,
  col sep=comma,
  discard if not={H}{10},
  restrict expr to domain={x}{7:16},
]
{\TestN\GeniCsv};
\addlegendentry{\geni}

\addplot[tessa-gpu-jit-1, mark=*]
table [
  x=N,
  y=\SecondRunCol,
  col sep=comma,
  restrict expr to domain={x}{7:17},
]
{\TestN\TessaCsv};

\nextgroupplot[
  xlabel={$N$ ($H\,{=}\,10$)},
  ymax=5,
  legend to name=tessa:weather-1-B,
]

\addplot[tessa-gpu-jit-1, mark=*]
table [
  x=N,
  y=\FirstRunCol,
  col sep=comma,
  restrict expr to domain={x}{7:17},
]
{\TestN\TessaCsv};
\addlegendentry{\tessa (1st)}

\addplot[tessa-gpu-jit-2,mark=*]
table [
  x=N,
  y=\SecondRunCol,
  col sep=comma,
  restrict expr to domain={x}{7:17},
]
{\TestN\TessaCsv};
\addlegendentry{\tessa (2nd)}

\nextgroupplot[
  xlabel={},
  legend to name=tessa:weather-1-E,
ylabel={},
  axis line style={draw=none},
  ytick=\empty,
  xtick=\empty,
  width=\EvalDummyPlotWidth,
]

\addplot coordinates {(0,0)};

\nextgroupplot[
  xlabel={$H$ ($N\,{=}\,13$)},
  legend to name=tessa:weather-1-C,
ymax=500,
]

\addlegendimage{storm-add, mark=diamond*}

\addplot[storm-spm, mark=triangle*]
table [
  x=H,
  y=\BaselineCol,
  col sep=comma,
  restrict expr to domain={x>=100 && x<=500 || x==10}{1:1},
  unbounded coords=discard,
]
{\TestH\StormSpmCsv};

\addplot[dice, mark=o, unbounded coords=discard]
table [
  x=H,
  y=\BaselineCol,
  col sep=comma,
  discard if not={status}{ok},
  discard if not A={N}{13},
  restrict expr to domain={x>=100 && x<=500 || x==10}{1:1},
  unbounded coords=discard,
]
{\TestH\RubiconCsv};

\addplot[geni, mark=+]
table [
  x=H,
  y=\BaselineCol,
  col sep=comma,
  discard if not={status}{ok},
  discard if not A={N}{13},
  restrict expr to domain={x>=100 && x<=100 || x==10}{1:1},
  unbounded coords=discard,
]
{\TestH\GeniCsv};

\addplot[tessa-gpu-jit-1, mark=*]
table [
  x=H,
  y=\FirstRunCol,
  col sep=comma,
  discard if not={N}{13},
  restrict expr to domain={x>=100 && x<=500 || x==10}{1:1},
  unbounded coords=discard,
]
{\TestH\TessaCsv};

\coordinate (top) at (rel axis cs:0,1);\coordinate (bot) at (rel axis cs:1,0);

\nextgroupplot[
  xlabel={$H$ ($N\,{=}\,13$)},
  ymax=2.2,
  legend to name=tessa:weather-1-D,
]

\addplot[tessa-gpu-jit-1, mark=*]
table [
  x=H,
  y=\FirstRunCol,
  col sep=comma,
  discard if not={N}{13},
  restrict expr to domain={(x>=100 && x<=500) || x==10}{1:1},
  unbounded coords=discard,
]
{\TestH\TessaCsv};

\addplot[tessa-gpu-jit-2, mark=*]
table [
  x=H,
  y=\SecondRunCol,
  col sep=comma,
  discard if not={N}{13},
  restrict expr to domain={(x>=100 && x<=500) || x==10}{1:1},
  unbounded coords=discard,
]
{\TestH\TessaCsv};

\end{groupplot}

\node [above,inner sep=3pt,xshift=5pt] at (current bounding box.north) {\pgfplotslegendfromname{tessa:weather-1-A}\pgfplotslegendfromname{tessa:weather-1-B}};

\path (top)--(bot) coordinate[midway] (group center);
\node[above,rotate=90] at (group center -| current bounding box.west) {
  \begin{minipage}{\EvalYlabelBoxWidth}
  \centering
  \fontsize{7.5}{8}\selectfont
  time (s)
  \end{minipage}
};

\end{tikzpicture}

\caption{Scaling on the \Bench{Weather Factories} benchmark.
}
\label{tessa:fig:eval-weather-factory}
 \vfill
\centering
\begin{tikzpicture}
\catcode`\_=12
\def\FirstRunCol{work_seconds}
\def\SecondRunCol{measured_avg_seconds}
\def\BaselineCol{elapsed_seconds}
\def\StormAddCsv{storm.add.csv}
\def\StormSpmCsv{storm.spm.csv}
\def\RubiconCsv{rubicon.csv}
\def\TessaCsv{tessa.csv}
\def\TestN{tessa/camera-ready/herman/testn/}
\def\TestH{tessa/camera-ready/herman/testh/}
\begin{groupplot}[
  group style={
    group size=5 by 1,
    horizontal sep=\EvalGroupHSep,
},
  height=\PlotHeight,
  width=\PlotWidth,
  legend style={
    transpose legend,
    legend columns=2,
    /tikz/every even column/.append style={column sep=-0.4ex},
},
]

\def\HermanHorizon{100}

\nextgroupplot[
  xlabel={$N$ ($H\,{=}\,\HermanHorizon$)},
  legend to name=tessa:herman-A,
  ymax=180,
  xtick={5,9,13,17},
]

\addplot[storm-add, mark=diamond*]
table [
  x=N,
  y=\BaselineCol,
  col sep=comma,
  restrict expr to domain={x}{3:17},
unbounded coords=discard,
]
{\TestN\StormAddCsv};
\addlegendentry{\storm (MTBDD)\ \ \ \ \ }

\addplot[storm-spm, mark=triangle*]
table [
  x=N,
  y=\BaselineCol,
  col sep=comma,
  restrict expr to domain={x}{3:17},
unbounded coords=discard,
]
{\TestN\StormSpmCsv};
\addlegendentry{\storm (Sparse)}

\addplot[dice, mark=o,]
table [
  col sep=comma,
  x=N,
  y=\BaselineCol,
  discard if not={status}{ok},
  discard if not A={H}{\HermanHorizon},
  restrict expr to domain={x}{3:13},
  unbounded coords=discard,
]
{\TestN\RubiconCsv};
\addlegendentry{\dice}

\addplot[tessa-gpu-jit-1, mark=*]
table [
  col sep=comma,
  x=N,
  y=\FirstRunCol,
  discard if not A={H}{\HermanHorizon},
  restrict expr to domain={x}{3:19},
unbounded coords=discard,
]
{\TestN\TessaCsv};

\nextgroupplot[
  xlabel={$N$ ($H\,{=}\,\HermanHorizon$)},
  ymax=2.4,
  legend to name=tessa:herman-B,
  xtick = {5,9,13,17},
]

\addplot[tessa-gpu-jit-1, mark=*]
table [
  x=N,
  y=\FirstRunCol,
  col sep=comma,
  discard if not A={H}{\HermanHorizon},
  restrict expr to domain={x}{3:19},
  unbounded coords=discard,
]
{\TestN\TessaCsv};
\addlegendentry{\tessa (1st)}

\addplot[tessa-gpu-jit-2, mark=*]
table [
  x=N,
  y=\SecondRunCol,
  col sep=comma,
  discard if not A={H}{\HermanHorizon},
  restrict expr to domain={x}{3:19},
  unbounded coords=discard,
]
{\TestN\TessaCsv};
\addlegendentry{\tessa (2nd)}

\nextgroupplot[
  xlabel={},
  legend to name=tessa:herman-E,
ylabel={},
  axis line style={draw=none},
  ytick=\empty,
  xtick=\empty,
  width=\EvalDummyPlotWidth,
]

\addplot coordinates {(0,0)};

\nextgroupplot[
  xlabel={$H$ ($N\,{=}\,17$)},
  legend to name=tessa:herman-C,
  ymax=450,
]

\addplot[storm-add, mark=diamond*]
table [
  x=H,
  y=\BaselineCol,
  col sep=comma,
  restrict expr to domain={x>=100 && x<=200 || x==10}{1:1},
  unbounded coords=discard,
]
{\TestH\StormAddCsv};
\addlegendentry{\storm (MTBDD)\ \ \ \ \ }

\addplot[storm-spm, mark=triangle*]
table [
  x=H,
  y=\BaselineCol,
  col sep=comma,
  restrict expr to domain={x>=100 && x<=500 || x==10}{1:1},
  unbounded coords=discard,
]
{\TestH\StormSpmCsv};
\addlegendentry{\storm (Sparse)}

\addplot[dice, mark=o,]
table [
  x=H,
  y=\BaselineCol,
  col sep=comma,
  discard if not={status}{ok},
  discard if not A={N}{17},
  restrict expr to domain={x>=100 && x<=500 || x==10}{1:1},
  unbounded coords=discard,
]
{\TestH\RubiconCsv};
\addlegendentry{\dice}

\addplot[tessa-gpu-jit-1, mark=*]
table [
  col sep=comma,
  x=H,
  y=\FirstRunCol,
  discard if not A={N}{17},
  restrict expr to domain={x>=100 && x<=500 || x==10}{1:1},
  unbounded coords=discard,
]
{\TestH\TessaCsv};

\coordinate (top) at (rel axis cs:0,1);\coordinate (bot) at (rel axis cs:1,0);

\nextgroupplot[
  xlabel={$H$ ($N\,{=}\,17$)},
  ymax=1.5,
  ymin=-0.18,
legend to name=tessa:herman-D,
]

\addplot[tessa-gpu-jit-1, mark=*]
table [
  col sep=comma,
  x=H,
  y=\FirstRunCol,
  discard if not A={N}{17},
  restrict expr to domain={x>=100 && x<=500 || x==10}{1:1},
  unbounded coords=discard,
]
{\TestH\TessaCsv};

\addplot[tessa-gpu-jit-2, mark=*]
table [
  col sep=comma,
  x=H,
  y=\SecondRunCol,
  discard if not A={N}{17},
  restrict expr to domain={x>=100 && x<=500 || x==10}{1:1},
  unbounded coords=discard,
]
{\TestH\TessaCsv};

\end{groupplot}

\node [above,inner sep=3pt,xshift=5pt] at (current bounding box.north) {\pgfplotslegendfromname{tessa:herman-A}\pgfplotslegendfromname{tessa:herman-B}};

\path (top)--(bot) coordinate[midway] (group center);
\node[above,rotate=90] at (group center -| current bounding box.west) {
  \begin{minipage}{\EvalYlabelBoxWidth}
  \centering
  \fontsize{7.5}{8}\selectfont
  time (s)
  \end{minipage}
};

\end{tikzpicture}

\caption{Scaling on the \Bench{Herman} benchmark.}
\label{fig:eval-herman}

 \end{figure}

\paragraph{\Bench{Queues}.}

The \Bench{Queues} model consists of $K$ queues, each with capacity~$3$.
Tasks arrive probabilistically at every step.
Three queues are designated type~1, while the remainder are type~2.
The goal states are those in which all type~1 queues and at least one
type~2 queue are full.
We compute the probability of reaching a goal state within
horizon~$H$.
\cref{fig:eval-parqueues} shows how each tool scales
with $K$ (left two plots) and with $H$ (right two plots).

The first plot compares all methods as $K$ varies, with $H\,{=}\,10$ fixed.
All methods scale exponentially with $K$.
On the hardest instance that any baseline method can solve ($K\,{=}\,10$),
\tessa shows
over 100$\times$ speedup over the next fastest method.

The second plot zooms in on the performance of \tessa.
Since JAX and XLA perform just-in-time (JIT) compilation, we measure
two runs to show the effect of JIT compilation:
in the second plot, the difference between the two \tessa curves
indicates the JIT compilation overhead.\footnote{Technically, the \tessa (1st) curve also includes the time taken to compile the model specification into JAX. Since this compilation happens entirely in Python, we measure it after warm-up runs and add it to the 1st-run time.}

The right two plots show how the methods scale as $H$ varies, with
$K\,{=}\,9$~fixed.
\storm (MTBDD) and \dice reach time limits at lower $K$ values.
\storm (Sparse) and \tessa scale linearly with $H$,
but \tessa is \textasciitilde40$\times$ faster at $H\,{=}\,500$.

\paragraph{\Bench{Weather Factories}.}

This model consists of $N$ factories, each in a binary state: striking
or operational. Transition probabilities are local but conditioned on
a Markov process, weather.
We verify the reachability of the state where all factories are
simultaneously striking within a given horizon~$H$.

\cref{tessa:fig:eval-weather-factory} shows how each method scales on this model.
We additionally include the \geni probabilistic programming language (PPL)
as a baseline; the benchmark is part of
its evaluation suite \cite{\CiteKeyGeni}.
The methodology of using \geni for DTMC model checking
is similar to that of \dice: both repurpose a PPL for DTMC model checking.
The difference is that \geni's compiler targets generating functions, while \dice's compiler targets BDDs.

All methods scale exponentially with $N$.
Unlike in the \Bench{Queues} benchmark, here \dice scales better than
\storm (Sparse) as $N$ increases.
On the hardest instance that any baseline method can solve ($N\,{=}\,16$),
\tessa shows over 100$\times$ speedup over the next fastest method, \geni.

All methods (that run at $N\,{=}\,13$) scale linearly with $H$.
At $H\,{=}\,500$,
\tessa shows over 100$\times$ speedup over the next fastest
method, \storm (Sparse).

\paragraph{\Bench{Herman}.}

\citeauthor{herman1990}'s protocol \cite{herman1990} is a well-known
example in the literature of
probabilistic model checking~\cite{kwiatkowska2012herman}.
It is a randomized self-stabilization algorithm for leader election in
a distributed ring of processors.
We verify the probability that a system of $N$ processors stabilizes
within a horizon of $H$ steps.
\cref{fig:eval-herman} shows how each method scales on this model.

Due to state explosion, all methods scale exponentially with $N$,
but the effect is not felt by \tessa until a larger $N$.
On the hardest instance that any baseline method can solve
($N\,{=}\,17$), \tessa shows
over 100$\times$ speedup over the next fastest method,
\storm (MTBDD).

The right two plots in \cref{fig:eval-herman} show that
\tessa scales effectively with $H$ as well.
At $H\,{=}\,500$, \tessa demonstrates over 300$\times$ speedup
over the next fastest method, \storm (Sparse).
While the speedup inherently includes the hardware advantage of a GPU,
it underscores the value of mapping the verification problem
to an accelerator-friendly representation.

\paragraph{Discussion.}
We caveat that \tessa outpaces these baseline methods for models that
\emph{fit} within the VRAM limit.
For sparse models, Storm (Sparse) and Storm (MTBDD) are
in general more space-efficient than \tessa.
Nevertheless, that \tessa achieves these speedups under the
\qty{11}{GB} VRAM constraint indicates that the method is reasonably
space-efficient for the class of models it targets
(thanks to the \tessa implementation and XLA optimizations exploiting model structure to reduce memory footprint),
whereas baseline methods may run out of memory or time out on the same dense models.

\begin{figure}[t]

\noindent
\begin{subfigure}[b]{0.33\textwidth}
\centering
\resizebox{\textwidth}{!}{\begin{tikzpicture}[
  >={Stealth[length=2.5mm]}, line width=1.2pt,
state/.style={
    circle,
    draw,
    minimum size=1.4cm,
    inner sep=0pt,
    font=\fontsize{22}{22}\selectfont,
  },
graystate/.style={
    state,
    fill=black!10
  },
edge label/.style={
    fill=none,
    inner sep=5pt,
    font=\fontsize{18}{18}\selectfont,
  },
]

\tikzset{
  die/.pic={
    \node[draw, ultra thick, rounded corners=3pt, minimum size=1cm, anchor=center] (-shape) at (0,0) {};
\foreach \x/\y in {#1} {
        \fill (\x*0.25,\y*0.25) circle (2.5pt);
    }
  }
}

\node[graystate] (s0) at (0,0) {$s_0$};
\draw[<-] (s0) -- ++(-1.2,0); 

\node[state] (s1) at (-3, -2.0) {$s_1$};
\node[state] (s2) at (3, -2.0) {$s_2$};

\node[graystate] (s3) at (-4.5, -5.0) {$s_3$};
\node[graystate] (s4) at (-1.5, -5.0) {$s_4$};

\node[graystate] (s5) at (1.5, -5.0) {$s_5$};
\node[graystate] (s6) at (4.5, -5.0) {$s_6$};

\pic (d1) at (-4.5, -7.5) {die={0/0}};

\pic (d2) at (-2.2, -7.5) {die={-1/-1, 1/1}};

\pic (d3) at (-0.8, -7.5) {die={-1/-1, 0/0, 1/1}};

\pic (d4) at (1.5, -7.5) {die={-1/-1, -1/1, 1/-1, 1/1}};

\pic (d5) at (3.8, -7.5) {die={-1/-1, -1/1, 0/0, 1/-1, 1/1}};

\pic (d6) at (5.2, -7.5) {die={-1/-1, -1/0, -1/1, 1/-1, 1/0, 1/1}};

\draw[->] (s0) -- node[edge label, above left] {$p$} (s1);
\draw[->] (s0) -- node[edge label, above right] {$1\,{-}\,p$} (s2);

\draw[->] (s1) to[bend left] node[edge label, right] {$q$} (s3);
\draw[->] (s3) to[bend left] node[edge label, left] {$p$} (s1);
\draw[->] (s1) -- node[edge label, right] {$1\,{-}\,q$} (s4);

\draw[->] (s2) to[bend left] node[edge label, right] {$q$} (s5);
\draw[->] (s5) to[bend left] node[edge label, left] {$p$} (s2);
\draw[->] (s2) -- node[edge label, right] {$1\,{-}\,q$} (s6);

\draw[->] (s3) -- node[edge label, left] {$1\,{-}\,p$} (d1-shape);

\draw[->] (s4) -- node[edge label, left] {$1\,{-}\,p$} (d2-shape);
\draw[->] (s4) -- node[edge label, right] {$p$} (d3-shape);

\draw[->] (s5) -- node[edge label, left] {$1\,{-}\,p$} (d4-shape);

\draw[->] (s6) -- node[edge label, left] {$1\,{-}\,p$} (d5-shape);
\draw[->] (s6) -- node[edge label, right] {$p$} (d6-shape);

\end{tikzpicture}}
\caption{}
\label{fig:knuth-yao-dice-dtmc}
\end{subfigure}
\hfill
\begin{subfigure}[b]{0.32\textwidth}
\centering
\definecolor{param-x}{HTML}{1f77b4} \definecolor{param-y}{HTML}{ff7f0e} \begin{tikzpicture}
\def\KydiceLossCsv{tessa/benchmarks/kydice/loss.csv}
\begin{groupplot}[
  group style={
    group size=1 by 2,
vertical sep=10pt,
  },
  width=\textwidth,
  height=0.635\textwidth,
  legend style={
/tikz/every even column/.append style={column sep=-6pt},
},
  label style={font=\scriptsize},
]

\nextgroupplot[
  ylabel={KL},
]

\addplot[black, mark=none,]
table [
  col sep=comma,
  x=step,
  y=loss,
  unbounded coords=discard,
]
{\KydiceLossCsv};

\nextgroupplot[
  xlabel={\textls[-20]{optimization step}},
  ylabel={\textcolor{param-x}{$p$}, \textcolor{param-y}{$q$}},
]

\addplot[param-x, mark=none, densely dotted,]
table [
  col sep=comma,
  x=step,
  y=params-x,
  unbounded coords=discard,
]
{\KydiceLossCsv};

\addplot[param-y, mark=none, densely dotted,]
table [
  col sep=comma,
  x=step,
  y=params-y,
  unbounded coords=discard,
]
{\KydiceLossCsv};

\end{groupplot}
\end{tikzpicture}
\caption{}
\label{fig:knuth-yao-dice-gradient}
\end{subfigure}
\hfill
\begin{subfigure}[b]{0.31\textwidth}
\centering
\raisebox{1pt}{\includegraphics[width=\textwidth]{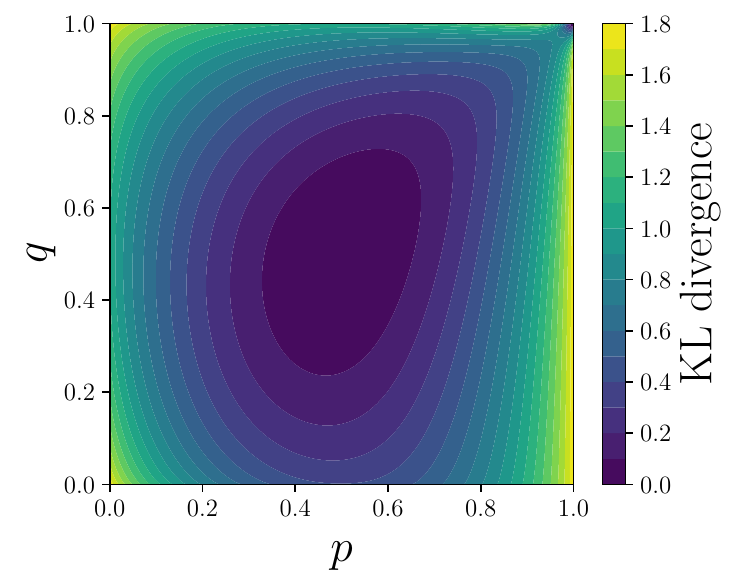}}

\caption{}
\label{fig:knuth-yao-dice-landscape}
\end{subfigure}
\caption{(a) A parametric DTMC encoding a Knuth--Yao die roller.
  In gray (resp.\ white) states, a coin of bias $p$ (resp.\ $q$) is flipped.
  (b) KL divergence and parameter values as gradient descent progresses.
  (c) Optimization landscape over parameters $p$ and $q$ as a contour plot.
  As \cref{fig:knuth-yao-dice-landscape} indicates, optimal values are
  $p\,{=}\,0.5$ and $q\,{=}\,0.5$, which are found by gradient descent
  as shown in \cref{fig:knuth-yao-dice-gradient}.
}
\label{fig:knuth-yao-dice}
\end{figure} 
\subsection{Parameter Search}

We further evaluate \tessa on parameter synthesis using the
Knuth--Yao algorithm \cite{knuthyao1976}.
This model generates a target distribution using coin flips.
\cref{fig:knuth-yao-dice-dtmc} depicts the Markov chain
\cite{jansen2022param,junges2024param}.
The task is to find the coin biases ($p$ and $q$) that produce this target distribution.
We formulate this search as an optimization problem.
The objective is to minimize the Kullback--Leibler (KL) divergence
\cite{kullback-leibler-1951}
between the model's output and the target.

\tessa compiles the model into a differentiable tensor program,
whose result can be programmatically composed with
distributional properties including but not limited to
state reachability---in this case, the KL divergence.
The resulting objective can then be directly composed with JAX's
automatic differentiation framework and
gradient-based optimizers~\cite{deepmind2020jax}.
\cref{fig:knuth-yao-dice-gradient} shows that,
starting from random initialization, gradient descent
successfully converges to the optimal parameter values well within 100 steps
in a few seconds.
This experiment highlights the \emph{flexibility} with which \tessa
can be used to optimize for distributional properties beyond
state reachability.

\section{Related Work}
\label{tessa:sec:related}
\paragraph{GPU-accelerated probabilistic model checking.}

\citeauthor{gpuprism2010bosnacki}~\cite{gpuprism2010bosnacki,bosnacki2011gpgpu} study
CUDA-accelerated sparse matrix--vector multiplication for DTMCs.
Our approach differs fundamentally from \citeauthor{bosnacki2011gpgpu}
in data representation, memory access patterns, engineering simplicity,
and the class of models targeted.
\citeauthor{bosnacki2011gpgpu}'s approach is best for DTMCs with sparse transition dynamics,
representing the transition dynamics as a flattened sparse matrix.
To mitigate the overhead of indirect memory access inherent of this storage format,
they build on the modified CSR format \cite{kwiatkowska2002outofcore}
and develop custom CUDA kernels for sparse matrix--vector
multiplication (SpMV). \citet{prismpsy2016ceska} adapt similar sparse-matrix
techniques to parameter synthesis for continuous-time Markov chains
through custom GPU kernels.
Also using sparse matrices, \citet{gpuexploreprob2025} 
perform explicit state space exploration as well as model checking entirely on the GPU.

In contrast,
our approach is less suitable for sparse models but effective for dense ones.
It compiles to JAX's dense tensor operations
and does not require GPU programming.
Rather than materializing the model into a sparse matrix (i.e., data),
we map the model to tensor transformations (i.e., code),
which XLA can fuse and optimize. This
allows our system to make good use of the high-throughput dense linear
algebra units on modern accelerators, which are often underutilized in
sparse regimes.

\citet{Bak2025} use GPUs to accelerate statistical model checking (SMC)
of extended timed automata.
SMC is fundamentally different from probabilistic model checking (PMC):
while PMC computes exact probabilities by exhaustively exploring
the state space, SMC estimates probabilities via sampling,
thus trading accuracy for feasibility.
Achieving low error margins in SMC is at the cost of high
demands on computational resources for Monte Carlo simulations.

\paragraph{State explosion.}
Techniques for mitigating state explosion
have been heavily
studied. They either compress, abstract, or prune the state space
\cite{baier1997symbolic,alfaro2000symbolic,kwiatkowska2006symmetry,katoen2007bisim,hahn2010pass,hahn2011parametric,kamaleson2016fhb,dijk2016multicore}.
State-of-the-art probabilistic model checkers (e.g., Storm~\cite{storm2022} and PRISM~\cite{prism4})
integrate such sophisticated state-space reduction techniques.

Compared to these techniques,
\tessa takes an orthogonal approach.
The translation to tensor computations
is largely oblivious to the state-explosion issue.
Rather, we rely on a tensor compiler for fusion and optimization.
In a sense, we cast state-space reduction as compiler optimizations,
offloading much of the complexity of
extracting high performance to a mature compiler~stack.

\paragraph{Distribution transformers.}
Our tensor transformer semantics is akin to the
distribution transformer semantics of \citet{kozen1981prob},
which has found many uses in the analysis of probabilistic models
(e.g., \cite{akshay2023mdp,\CiteKeyGeni}).
We recast it and establish the formal ties between the tensor transformer semantics and DTMC model checking.

\section{Conclusion}
\label{tessa:sec:conclusion}

We cast model checking of finite-horizon Markov chains
as dense tensor computations, for which
compiler and hardware support is readily available.
This new perspective delivers sizable performance gains
for models with dense transition dynamics
while maintaining mathematical soundness.
We hope our approach makes a useful addition to the toolbox of
probabilistic model checkers, extends their practical reach,
and inspires future work on tensor-based verification methods.

\makeatletter
\@ifclassloaded{llncs}{
\begin{credits}
}{}
\makeatother

\subsubsection*{Acknowledgments.}
We thank the anonymous reviewers for their valuable feedback. This work was supported in part by the Natural Sciences and Engineering Research Council of Canada. The views and opinions expressed are those of the authors and do not necessarily reflect the position of any funding agency.

\subsubsection*{Disclosure of Interests.}
The authors have no competing interests to declare that are
relevant to the content of this article.

\subsubsection*{Data-Availability Statement.}
The artifact accompanying this paper is available at \url{https://doi.org/10.5281/zenodo.19802567} \cite{tessa-cav2026-artifact}. The artifact includes the \tessa implementation, the benchmarks used in the evaluation, and instructions to reproduce the experimental results.

\makeatletter
\@ifclassloaded{llncs}{
\end{credits}
}{}
\makeatother

\ifreport
\appendix
\newpage

\allowdisplaybreaks

\section{Proofs}

\begin{theorem}
[Restatement of \cref{thm:reachability-via-distribution-equiv}]
\label{thm:reachability-via-distribution-equiv-appendix}
$
\Pr_{\mathcal{M}}\parens*{\Diamond^{\leq n} \mathcal{G}}
=
\Pr_{\mathcal{M}}\parens*{\iota \Vdash \Diamond^{\leq n} \mathcal{G}}
$.
\end{theorem}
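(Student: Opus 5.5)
We will prove a stronger, distribution-general statement by induction on $n$: for every $\mu \in \Distr{\mathcal{S}}$ (indeed for every nonnegative finite measure $\mu$ on $\mathcal{S}$, since the recursion produces subprobability measures),
\[
\Pr_{\mathcal{M}}\parens*{\mu \Vdash \Diamond^{\leq n} \mathcal{G}}
= \sum_{s \in \mathcal{S}} \mu(s) \cdot \Pr_{\mathcal{M}}\parens*{s \vDash \Diamond^{\leq n} \mathcal{G}}.
\]
Instantiating $\mu = \iota$ and comparing with the definition of $\Pr_{\mathcal{M}}\parens*{\Diamond^{\leq n} \mathcal{G}}$ gives the theorem. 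Generalizing is necessary because the inductive case of \cref{eq:reachability-via-distribution-inductive-case} is invoked on $\sum_{s \notin \mathcal{G}} \mu(s)\eta(s)$. This is in general not normalized, so the induction hypothesis must be stated for unnormalized measures. Both sides are defined verbatim for such measures, so nothing else changes.

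\textbf{Base case} $n=0$. By \cref{eq:reachability-via-state-base-case} the right-hand side is $\sum_s \mu(s)\indicator{s \in \mathcal{G}} = \sum_{s\in\mathcal{G}}\mu(s)$. This is exactly \cref{eq:reachability-via-distribution-base-case}.

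\textbf{Inductive step.} Assume the claim for $n$ and all measures. Let $\nu \defeq \sum_{s \notin \mathcal{G}} \mu(s)\eta(s)$, so $\nu(s') = \sum_{s\notin\mathcal{G}}\mu(s)\eta(s,s')$. By \cref{eq:reachability-via-distribution-inductive-case} and the induction hypothesis applied to $\nu$,
\[
\Pr_{\mathcal{M}}\parens*{\mu \Vdash \Diamond^{\leq n+1}\mathcal{G}}
= \sum_{s\in\mathcal{G}}\mu(s) + \sum_{s'}\nu(s')\Pr_{\mathcal{M}}\parens*{s'\vDash\Diamond^{\leq n}\mathcal{G}}.
\]
Expanding $\nu$ and exchanging the two finite sums turns the second term into
\[
\sum_{s\notin\mathcal{G}}\mu(s)\sum_{s'}\eta(s,s')\Pr_{\mathcal{M}}\parens*{s'\vDash\Diamond^{\leq n}\mathcal{G}}.
\]
On the other side, split $\sum_s \mu(s)\Pr_{\mathcal{M}}\parens*{s\vDash\Diamond^{\leq n+1}\mathcal{G}}$ into $s\in\mathcal{G}$ and $s\notin\mathcal{G}$. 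By the two branches of \cref{eq:reachability-via-state-inductive-case}, these parts are $\sum_{s\in\mathcal{G}}\mu(s)\cdot 1$ and the same double sum, respectively. The two sides therefore agree.

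There is no real obstacle beyond choosing the right generalization. The only subtle point is quantifying the induction hypothesis over all (possibly unnormalized) measures rather than just $\iota$, and over arbitrary $\mu$ rather than a fixed one, so that it applies to $\nu$. Interchanging the sums is justified because $\mathcal{S}$ is finite.
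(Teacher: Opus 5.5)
Your proof is correct and matches the paper's. The paper likewise strengthens the claim to arbitrary (possibly unnormalized) measures $\mu$ and inducts on $n$; in the inductive step it splits on membership in $\mathcal{G}$, exchanges the finite sums, and applies the hypothesis to $\mu' = \sum_{s \notin \mathcal{G}} \mu(s)\eta(s)$, differing only cosmetically in rewriting from the state-based side rather than the distribution-based side.
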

\begin{proof}
We prove the stronger claim that for any distribution (or measure) $\mu$ over $\mathcal{S}$,
\begin{equation}
\label{eq:generalized-claim}
\sum_{s \in \mathcal{S}} \mu(s) \cdot \Pr_{\mathcal{M}}\parens*{s \vDash \Diamond^{\leq n} \mathcal{G}}
=
\Pr_{\mathcal{M}}\parens*{\mu \Vdash \Diamond^{\leq n} \mathcal{G}}.
\end{equation}
The theorem follows immediately by setting $\mu = \iota$.
We proceed by induction on $n$.

\paragraph{Base Case ($n=0$).}
From \cref{eq:reachability-via-state-base-case}, the left-hand side of \cref{eq:generalized-claim} is:
\begin{align*}
\sum_{s \in \mathcal{S}} \mu(s) \cdot \Pr_{\mathcal{M}}\parens*{s \vDash \Diamond^{\leq 0} \mathcal{G}}
&= \sum_{s \in \mathcal{G}} \mu(s) \cdot 1 + \sum_{s \notin \mathcal{G}} \mu(s) \cdot 0 \\
&= \sum_{s \in \mathcal{G}} \mu(s).
\end{align*}
This matches the definition of $\Pr_{\mathcal{M}}\parens*{\mu \Vdash \Diamond^{\leq 0} \mathcal{G}}$ in \cref{eq:reachability-via-distribution-base-case}.

\paragraph{Inductive Case.}
Assume \cref{eq:generalized-claim} holds for $n$.
We consider $n+1$.
We split the sum on the left-hand side based on membership in $\mathcal{G}$ and apply \cref{eq:reachability-via-state-inductive-case}:
\begin{align*}
\sum_{s \in \mathcal{S}} \mu(s) \cdot \Pr_{\mathcal{M}}\parens*{s \vDash \Diamond^{\leq n+1} \mathcal{G}}
&=
\sum_{s \in \mathcal{G}} \mu(s) \cdot 1
+
\sum_{s \notin \mathcal{G}} \mu(s) \cdot \parens*{ \sum_{s' \in \mathcal{S}} \eta(s, s') \cdot \Pr_{\mathcal{M}}\parens*{s' \vDash \Diamond^{\leq n} \mathcal{G}} }
\\
&=
\sum_{s \in \mathcal{G}} \mu(s)
+
\sum_{s' \in \mathcal{S}} \parens*{ \sum_{s \notin \mathcal{G}} \mu(s) \eta(s, s') } \cdot \Pr_{\mathcal{M}}\parens*{s' \vDash \Diamond^{\leq n} \mathcal{G}}.
\end{align*}
Let $\mu' = \sum_{s \notin \mathcal{G}} \mu(s) \eta(s)$.
The term in the parentheses is exactly $\mu'(s')$.
By the inductive hypothesis applied to $\mu'$, the second term becomes:
\[
\sum_{s' \in \mathcal{S}} \mu'(s') \cdot \Pr_{\mathcal{M}}\parens*{s' \vDash \Diamond^{\leq n} \mathcal{G}}
=
\Pr_{\mathcal{M}}\parens*{\mu' \Vdash \Diamond^{\leq n} \mathcal{G}}.
\]
Substituting this back, we get:
\[
\sum_{s \in \mathcal{G}} \mu(s) + \Pr_{\mathcal{M}}\parens*{\mu' \Vdash \Diamond^{\leq n} \mathcal{G}}.
\]
This matches the definition of $\Pr_{\mathcal{M}}\parens*{\mu \Vdash \Diamond^{\leq n+1} \mathcal{G}}$ in \cref{eq:reachability-via-distribution-inductive-case}.
\qedhere
\end{proof}

\newpage

\begin{lemma}
\label{lem:local-mass-enabledness}
For any module $m_k$, action $a$, and local state $s \in \STATE(Y_k)$,
\[
  \sum_{s'_k \in \STATE(X_k)} \parens*{\sem{m_k} \, a \, s}(s'_k) = \textit{En}\sem{m_k} \, a \, s.
\]
\end{lemma}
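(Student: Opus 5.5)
The plan is a direct case split on whether $a \in \ACTION(m_k)$, following the two cases of \cref{eq:module-interp} and \cref{eq:enabled-local-interp}. Throughout, the total mass of an (unnormalized) distribution is a linear functional. So the total mass of a nonnegative linear combination of distributions is the same combination of their total masses.

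\emph{Case $a \notin \ACTION(m_k)$.} Here $\sem{m_k}\,a\,s = \delta_{\restrict{s}{X_k}}$. A point mass has total mass $1$, and this matches $\textit{En}\sem{m_k}\,a\,s = 1$ in the first branch of \cref{eq:enabled-local-interp}.

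\emph{Case $a \in \ACTION(m_k)$.} First establish an auxiliary fact: for every mixture of updates $U$ in $m_k$ and every $s \in \STATE(Y_k)$, the value $\sem{U}\,s$ has total mass $1$. By \cref{eq:mixture-interp},
\[
\sum_{s'_k} \parens*{\sem{U}\,s}(s'_k) = \sum_{i=1}^{n} \theta_i \sum_{s'_k} \delta_{\sem{u_i}s}(s'_k) = \sum_{i=1}^{n} \theta_i = 1 .
\]
This uses three facts: each $\sem{u_i}\,s$ is a well-defined element of $\STATE(X_k)$ by \cref{eq:update-interp}, each delta has mass $1$, and the syntactic side condition $\sum_i \theta_i = 1$ holds. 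Next, exchange the finite sum over $s'_k$ with the sum over $\commands{m_k,a}$ in the second branch of \cref{eq:module-interp}. Pulling out the indicators, which do not depend on $s'_k$, gives
\[
\sum_{\Key{[} a \Key{]} \, g \,\to\, U \;\in\; \commands{m_k, a}} \indicator{\sem{g}\,s \neq 0} \cdot 1 .
\]
This is exactly the second branch of \cref{eq:enabled-local-interp}.

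The only delicate step is the auxiliary fact that $\sem{U}\,s$ has total mass $1$. It relies on the well-formedness assumption $\sum_i \theta_i = 1$ stated in \cref{tessa:sec:lang-syntax}. It also needs a convention for repeated commands: $\commands{m_k,a}$ must be read as a multiset, or the commands must be assumed distinct, so that each command contributes once to both sums. Beyond this, the argument is routine manipulation of finite sums.
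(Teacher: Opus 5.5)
Your proof is correct and follows the paper's argument essentially step for step. It uses the same case split on $a \in \ACTION(m_k)$, the point-mass argument in the first case, and in the second case exchanges the finite sums and uses the fact that $\sem{U}\,s$ has total mass $\sum_i \theta_i = 1$. Your multiset caveat is harmless but not needed: \cref{eq:module-interp} and \cref{eq:enabled-local-interp} sum over the same collection $\commands{m_k,a}$, so whichever reading you adopt applies to both sides.
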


\begin{proof}
We proceed by case analysis on whether the action $a$ belongs to the alphabet of module $m_k$.

\paragraph{Case 1:} $a \notin \ACTION(m_k)$.
By definition,
\begin{align*}
  \sem{m_k} \, a \, s &= \delta_{\restrict{s}{X_k}}, \\
  \textit{En}\sem{m_k} \, a \, s &= 1.
\end{align*}
We sum the probability mass over all possible next states $s'_k$:
\[
  \sum_{s'_k \in \STATE(X_k)} \parens*{\sem{m_k} \, a \, s}(s'_k) = \sum_{s'_k \in \STATE(X_k)} \delta_{\restrict{s}{X_k}}(s'_k).
\]
The Dirac delta function $\delta$ is a point mass distribution. Its sum over the entire domain is exactly $1$:
\begin{align*}
  \sum_{s'_k \in \STATE(X_k)} \delta_{\restrict{s}{X_k}}(s'_k) = 1
  = \textit{En}\sem{m_k} \, a \, s.
\end{align*}

\paragraph{Case 2:} $a \in \ACTION(m_k)$.
By definition,
\begin{align*}
  \sum_{s'_k \in \STATE(X_k)} \parens*{\sem{m_k} \, a \, s}(s'_k)
  &=
  \sum_{s'_k \in \STATE(X_k)} \parens*{ \sum_{\Key{[} a \Key{]} \, g \,\to\, U \;\in\; \commands{m_k, a}} \indicator{\sem{g} \, s \neq 0} \cdot \parens*{\sem{U} \, s}(s'_k) }
  \\
  &=
  \sum_{\Key{[} a \Key{]} \, g \,\to\, U \;\in\; \commands{m_k, a}} \parens*{ \indicator{\sem{g} \, s \neq 0} \sum_{s'_k \in \STATE(X_k)} \parens*{\sem{U} \, s}(s'_k) }
  .
\end{align*}
The term $\sem{U} \, s$ is a probability distribution over $\STATE(X_k)$. It is defined as a convex combination of point masses where the probabilities $\theta_i$ sum to $1$. Therefore, its total mass over the state space is exactly $1$:
\[
  \sum_{s'_k \in \STATE(X_k)} \parens*{\sem{U} \, s}(s'_k) = 1.
\]
We substitute this constant back into our equation:
\[
  \sum_{s'_k \in \STATE(X_k)} \parens*{\sem{m_k} \, a \, s}(s'_k)
  = \sum_{\Key{[} a \Key{]} \, g \,\to\, U \;\in\; \commands{m_k, a}} \indicator{\sem{g} \, s \neq 0} \cdot 1
  = \textit{En}\sem{m_k} \, a \, s
  .
  \qedhere
\]
\end{proof}

\newpage

\begin{theorem}[Restatement of \cref{thm:model-distribution-preservation}]
For any $s \in \STATE$, $\textit{Step}\sem{M} \, s$ is a probability distribution.
\end{theorem}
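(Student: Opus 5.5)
The plan is a case split on the two branches of \cref{eq:step-interp}, with \cref{lem:local-mass-enabledness} supplying the mass computation in the nontrivial branch. In each case I need two things: nonnegativity of every entry, and total mass $1$ over $\STATE$.

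\emph{Degenerate branch.} If $\sum_a \textit{En}\sem{M}\,a\,s = 0$, then $\textit{Step}\sem{M}\,s = \delta_s$. This is a point mass, so it is trivially a probability distribution.

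\emph{Main branch.} Suppose $Z \defeq \sum_a \textit{En}\sem{M}\,a\,s > 0$.

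Nonnegativity comes first. Each $\sem{m_k}\,a\,(\restrict{s}{Y_k})$ is a nonnegative combination of distributions:
\begin{itemize}
\item it is either $\delta_{\restrict{s}{X_k}}$,
\item or a sum of indicator-weighted $\sem{U}\,s$, each a convex combination of point masses with $\theta_i \in [0,1]$.
\end{itemize}
Products of these, sums over $a$, and division by the positive integer $Z$ all preserve nonnegativity.

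The key step is computing the total mass. I use that $X_1,\dots,X_K$ partition $\Vars{M}$, so every $s' \in \STATE$ decomposes uniquely as the tuple $(\restrict{s'}{X_1},\dots,\restrict{s'}{X_K})$. This gives a bijection $\STATE \cong \prod_k \STATE(X_k)$. The product measure therefore satisfies
\[
\Bigl(\textstyle\bigotimes_k \nu_k\Bigr)(s') = \prod_k \nu_k(\restrict{s'}{X_k}).
\]
Summing over $s'$ and distributing, the sum over $\STATE$ of a product measure factors as the product of the local total masses:
\[
\sum_{s'\in\STATE} \prod_{k=1}^K \nu_k(\restrict{s'}{X_k}) = \prod_{k=1}^K \sum_{s'_k \in \STATE(X_k)} \nu_k(s'_k).
\]
By \cref{lem:local-mass-enabledness}, each factor equals $\textit{En}\sem{m_k}\,a\,(\restrict{s}{Y_k})$. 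By \cref{eq:enabled-global-interp}, the product of these factors is exactly $\textit{En}\sem{M}\,a\,s$.

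Exchanging the finite sum over $a$ with the sum over $s'$, the numerator has total mass $\sum_a \textit{En}\sem{M}\,a\,s = Z$. Dividing by $Z$ then yields total mass $1$.

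The only step needing care is the factorization of the total mass of $\bigotimes$. It relies on making precise that $\STATE$ is in bijection with $\prod_k \STATE(X_k)$, via disjointness and exhaustiveness of the $X_k$. It also uses that $\sum_a$ ranges over the finite alphabet $\ACTION(M)$, so swapping the sums is legitimate. Everything else is direct unfolding of definitions.
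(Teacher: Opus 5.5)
Your proposal is correct and follows the paper's proof: the same case split on the two branches of \cref{eq:step-interp}, factoring the total mass of $\bigotimes_k$ over the disjoint $X_k$ into local masses, and applying \cref{lem:local-mass-enabledness} so that the numerator has mass $\sum_a \textit{En}\sem{M}\,a\,s$. Your explicit nonnegativity check is a small addition that the paper leaves implicit.
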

\begin{proof}
To prove that $\textit{Step}\sem{M} \, s$ is a valid probability distribution,
we must show that the sum over all possible next states $s' \in \STATE$ equals $1$.

We consider the two cases from the definition of $\textit{Step}\sem{M} \, s$ in \cref{eq:step-interp}:

\paragraph{Case 1:} $\sum_{a} \textit{En}\sem{M} \, a \, s = 0$.
By definition, $\textit{Step}\sem{M} \, s = \delta_s$. The sum of probabilities is trivially
\[
  \sum_{s' \in \STATE} \delta_s(s') = 1.
\]

\paragraph{Case 2:} $\sum_{a} \textit{En}\sem{M} \, a \, s > 0$.
We sum the probabilities over all $s' \in \STATE$:
\begin{align*}
  \sum_{s' \in \STATE} \parens*{\textit{Step}\sem{M} \, s}(s')
  &=
  \sum_{s' \in \STATE} \frac{\sum_{a} \parens*{\bigotimes_{k=1}^{K} \sem{m_k} \, a \, {\restrict{s}{Y_k}}}(s')}{\sum_{a} \textit{En}\sem{M} \, a \, s}
  \\
  &= \frac{\sum_{a} \sum_{s' \in \STATE} \parens*{\bigotimes_{k=1}^{K} \sem{m_k} \, a \, {\restrict{s}{Y_k}}}(s')}{\sum_{a} \textit{En}\sem{M} \, a \, s}
\end{align*}

\noindent
Because the state space $\STATE$ is formed by disjoint sets of variables $X_k$, the sum of a tensor product of distributions over the joint state space equals the product of the sums of the individual marginal distributions:
\begin{align*}
  \sum_{s' \in \STATE} \parens*{\bigotimes_{k=1}^{K} \sem{m_k} \, a \, {\restrict{s}{Y_k}}}(s')
  &=
  \sum_{s' \in \STATE} \prod_{k=1}^{K} \parens*{\sem{m_k} \, a \, {\restrict{s}{Y_k}}} \parens*{\restrict{s'}{X_k}}
  \\
  &=
  \prod_{k=1}^{K} \sum_{s'_k \in \STATE(X_k)} \parens*{\sem{m_k} \, a \, {\restrict{s}{Y_k}}}(s'_k)
  \\
  &=
  \prod_{k=1}^{K} \textit{En}\sem{m_k} \, a \, {\restrict{s}{Y_k}}
  &\tag{by \cref{lem:local-mass-enabledness}}
  \\
  &=
  \textit{En}\sem{M} \, a \, s
\end{align*}

\noindent
Replacing the inner sum in the original fraction gives
\[
  \frac{\sum_{a} \textit{En}\sem{M} \, a \, s}{\sum_{a} \textit{En}\sem{M} \, a \, s} = 1
  .
  \qedhere
\]
\end{proof}

\newpage

\begin{lemma}
\label{lem:xi-linearity}
The mapping $\Xi : \UNDistr{\STATE} \to \Tensor{\STATE}$ is linear. That is, for any distributions $\mu_1, \mu_2 \in \UNDistr{\STATE}$ and scalars $c_1, c_2 \in \mathbb{R}_{\geq 0}$,
\[
\Xi(c_1 \mu_1 + c_2 \mu_2) = c_1 \Xi(\mu_1) + c_2 \Xi(\mu_2).
\]
\end{lemma}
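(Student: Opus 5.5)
The plan is to prove the identity pointwise. Two tensors in $\Tensor{\STATE}$ are equal exactly when they agree at every index $s \in \STATE$. By the definition of $\Xi$, the entry at $s$ of $\Xi(\mu)$ is $\mu(s)$; equivalently, $\tindex{\Xi(\mu)}{s} = \mu(s)$. So it suffices to fix an arbitrary $s \in \STATE$ and show that both sides of the claimed equation have the same entry at $s$.

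First I compute the left-hand side. By the definition of $\Xi$,
\[
\tindex{\Xi(c_1 \mu_1 + c_2 \mu_2)}{s} = (c_1\mu_1 + c_2\mu_2)(s).
\]
Unnormalized distributions are combined pointwise, so this equals $c_1 \mu_1(s) + c_2 \mu_2(s)$. Note that $c_1\mu_1 + c_2\mu_2$ is again in $\UNDistr{\STATE}$: it is nonnegative because $c_1, c_2 \ge 0$, and its total mass is finite. Hence $\Xi$ is defined on it.

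Next I compute the right-hand side. Tensor addition and scalar multiplication in $\Tensor{\STATE}$ are element-wise, so
\[
\tindex{c_1 \Xi(\mu_1) + c_2 \Xi(\mu_2)}{s} = c_1 \tindex{\Xi(\mu_1)}{s} + c_2 \tindex{\Xi(\mu_2)}{s} = c_1\mu_1(s) + c_2\mu_2(s).
\]
This matches the left-hand side, which closes the argument.

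Alternatively, I could argue directly from the basis expansion $\Xi(\mu) = \sum_{s} \mu(s)\,\vec{B}_s$, distributing the coefficients and splitting the finite sum into two sums. This is the same argument.

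There is no real obstacle here. The only point needing care is confirming that the linear combination stays inside the domain $\UNDistr{\STATE}$, and nonnegativity of $c_1, c_2$ guarantees this.
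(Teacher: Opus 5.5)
Your proposal is correct and follows the paper's proof: both fix an arbitrary index $s$, use that the entry of $\Xi(\mu)$ at $s$ is $\mu(s)$, and use the pointwise combination of measures and the element-wise tensor operations to show the two sides agree at every index. Your remark that $c_1\mu_1 + c_2\mu_2$ stays in the domain because $c_1, c_2 \geq 0$ is a small check the paper leaves implicit.
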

\begin{proof}
For any state $s \in \STATE$:
\begin{align*}
&\phantom{{}={}}
\tindex{\Xi(c_1 \mu_1 + c_2 \mu_2)}{s} \\
&= (c_1 \mu_1 + c_2 \mu_2)(s)
\tag{def. of $\Xi$} \\
&= c_1 \mu_1(s) + c_2 \mu_2(s) \\
&= c_1 \tindex{\Xi(\mu_1)}{s} + c_2 \tindex{\Xi(\mu_2)}{s}
\tag{def. of $\Xi$} \\
&= \tindex{\parens*{c_1 \Xi(\mu_1) + c_2 \Xi(\mu_2)}}{s}
\end{align*}
Since this holds for all indices $s$, the tensors are equal.
\qedhere
\end{proof}

\newpage

\begin{lemma}
\label{lem:expr-equiv}
For any expression $e$,
set of variables $X$ containing the free variables of $e$,
and state $s \in \STATE(X)$,
\[
  \tindex{\tsem{e}_X}{s} = \sem{e} \, s.
\]
\end{lemma}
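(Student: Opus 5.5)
We argue by structural induction on the expression $e$, for an arbitrary but fixed set of variables $X$ containing the free variables of $e$ and an arbitrary index $s \in \STATE(X)$. The claim is pointwise, so at each step we only compare the tensor entry at index $s$ against the standard interpretation applied to $s$. We unfold the three clauses defining $\tsem{\cdot}_X$ and $\sem{\cdot}$ side by side.

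\emph{Constant case} ($e = n$). By definition $\tsem{n}_X = n \cdot \mathbf{1}_X$. The entry at $s$ is therefore $n \cdot 1 = n$, which equals $\sem{n}\,s$.

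\emph{Variable case} ($e = x$, with $x \in X$ because $X$ contains the free variables of $e$). Here $\tsem{x}_X = \vec{I}_{x \in X}$. By the definition of index tensors, $\tindex{\vec{I}_{x \in X}}{s} = s(x) = \sem{x}\,s$.

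\emph{Operation case} ($e = \textit{op}(e_1,\ldots,e_l)$). The free variables of each $e_i$ are among those of $e$, so $X$ also contains them and the induction hypothesis applies to each $e_i$ with the same $X$ and $s$. This gives $\tindex{\vec{T}_i}{s} = \sem{e_i}\,s = n_i$, where $\vec{T}_i = \tsem{e_i}_X$.

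The remaining step is the only one that needs care. By definition, $\tsem{\textit{op}}_X$ is $\sem{\textit{op}} = f$ lifted element-wise. Element-wise lifting means precisely that $\tindex{f(\vec{T}_1,\ldots,\vec{T}_l)}{s} = f(\tindex{\vec{T}_1}{s},\ldots,\tindex{\vec{T}_l}{s})$. Substituting the induction hypothesis gives $f(n_1,\ldots,n_l) = \sem{\textit{op}(e_1,\ldots,e_l)}\,s$, which closes the case.

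We do not expect a real obstacle. The only subtleties are to state explicitly that element-wise lifting commutes with indexing, and to note that the hypothesis on $X$ is inherited by subexpressions so that the induction hypothesis is applicable.
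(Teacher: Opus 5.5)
Your proof is correct and takes the same approach as the paper, which proves the lemma by structural induction on $e$ without further detail. You spell out the constant, variable, and operation cases that the paper leaves implicit, including that element-wise lifting commutes with indexing and that the hypothesis on $X$ carries over to subexpressions.
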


\begin{proof}
By structural induction on the expression $e$.
\qedhere
\end{proof}

\begin{lemma}
\label{lem:update-equiv}
For any update $u$, local state $s \in \STATE(Y_k)$, and variable $x \in X_k$,
\[
  \tindex{\tsem{u}_{Y_k} \, x}{s} = \parens*{\sem{u} \, s}(x).
\]
\end{lemma}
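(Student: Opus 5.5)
The plan is a direct unfolding of the two definitions followed by a case split on the shape of $u$; no induction is needed beyond \cref{lem:expr-equiv}. Write $u = x_1 \GETS e_1 \; \cdots \; x_n \GETS e_n$ and fix $s \in \STATE(Y_k)$ and $x \in X_k$. By \cref{eq:update-tensor}, the tensor $\tsem{u}_{Y_k}\,x$ is defined by cases on whether $x$ is one of the assigned variables $x_i$. By \cref{eq:update-interp}, the state $\sem{u}\,s$ is $\restrict{s}{X_k}$ updated at $x_1,\ldots,x_n$. So I will split on exactly the same condition and check that the two sides agree in each case.

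\emph{Case $x = x_i$ for some $i$.} Here $\tsem{u}_{Y_k}\,x = \tsem{e_i}_{Y_k}$, so the left-hand side is $\tindex{\tsem{e_i}_{Y_k}}{s}$. Reading the updated state at $x_i$ gives $(\sem{u}\,s)(x) = \sem{e_i}\,s$. Equality then follows from \cref{lem:expr-equiv} with $X = Y_k$. This is legitimate because $e_i$ occurs in a command of $m_k$, so its free variables are read in $m_k$ and hence lie in $Y_k$ by the definition of $Y_k$.

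\emph{Case $x \notin \{x_1,\ldots,x_n\}$.} Here $\tsem{u}_{Y_k}\,x = \vec{I}_{x \in Y_k}$, whose entry at $s$ is $s(x)$ by the definition of index tensors. This requires $x \in Y_k$, which holds because $X_k \subseteq Y_k$. On the other side, the update leaves $x$ untouched, so $(\sem{u}\,s)(x) = (\restrict{s}{X_k})(x) = s(x)$, using $x \in X_k$.

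The only subtle point, and the one I would state explicitly, concerns repeated assignments. The case split in \cref{eq:update-tensor} implicitly assumes that the $x_i$ are distinct, since otherwise ``the'' $i$ is ambiguous. I would note that well-formed updates assign each variable at most once. If they did not, one would adopt the same disambiguation convention, for example the last assignment wins, for both the substitution notation in \cref{eq:update-interp} and the case selection in \cref{eq:update-tensor}. Under that convention the same $i$ is chosen on both sides, and the argument above goes through unchanged. Apart from this and the scope check that the free variables of $e_i$ lie in $Y_k$, the proof is routine.
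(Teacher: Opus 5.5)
Your proposal is correct and follows essentially the same route as the paper's proof: the same case split on whether $x$ is one of the assigned variables $x_i$, with \cref{lem:expr-equiv} closing the first case and the definition of index tensors together with the restriction of $s$ to $X_k$ closing the second. Your extra remarks, that the free variables of $e_i$ lie in $Y_k$ and that a common convention is needed if a variable is assigned more than once, are sound refinements that the paper leaves implicit.
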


\begin{proof}
Let $u$ be the update $x_1 \GETS e_1 \ \cdots\ x_n \GETS e_n$. We proceed by case analysis on whether the queried variable $x$ is actively updated.

\paragraph{Case 1:} $x = x_i$ for some $i$.
\begin{align*}
  \tindex{\tsem{u}_{Y_k} \, x_i}{s} 
  &= \tindex{\tsem{e_i}_{Y_k}}{s} \tag{by \cref{eq:update-tensor}} \\
  &= \sem{e_i} \, s \tag{by \cref{lem:expr-equiv}} \\
  &= \parens*{\restrict{s}{X_k}\bracks*{x_i \mapsto \sem{e_i} s, \ldots}}(x_i) \tag{map lookup} \\
  &= \parens*{\sem{u} \, s}(x_i) \tag{by \cref{eq:update-interp}}
\end{align*}

\paragraph{Case 2:} $x \neq x_i$ for all $i$.
\begin{align*}
  \tindex{\tsem{u}_{Y_k} \, x}{s} 
  &= \tindex{\vec{I}_{x \in Y_k}}{s} \tag{by \cref{eq:update-tensor}} \\
  &= s(x) \tag{by definition of index tensors} \\
  &= \parens*{\restrict{s}{X_k}\bracks*{x_1 \mapsto \sem{e_1} s, \ldots}}(x) \tag{by $x \neq x_i$ for all $i$} \\
  &= \parens*{\sem{u} \, s}(x) \tag{by \cref{eq:update-interp}}
\end{align*}
\end{proof}

\newpage

\begin{lemma}
\label{lem:mixture-equiv}
For any mixture of updates $U$, state $s \in \STATE(Y_k)$, and next state $s' \in \STATE(X_k)$,
\[
  \tindex{\tsem{U}}{\parens*{s, s'}} = \parens*{\sem{U} \, s}(s').
\]
\end{lemma}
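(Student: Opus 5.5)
The plan is to unfold both sides at the index $(s,s')$ and compare them term by term across the mixture. Write $U = \theta_1 : u_1 + \cdots + \theta_n : u_n$.

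On the tensor side, \cref{eq:mixture-tensor} together with the fact that sums, scalar multiples, Hadamard products and element-wise indicators all act pointwise gives
\[
\tindex{\tsem{U}}{(s,s')} = \sum_{i=1}^{n} \theta_i \cdot \prod_{x \in X_k} \indicator{\tindex{\parens*{\tsem{u_i}_{Y_k}\,x} \otimes \mathbf{1}_{X_k}}{(s,s')} = \tindex{\mathbf{1}_{Y_k} \otimes \vec{I}_{x \in X_k}}{(s,s')}}.
\]
On the standard side, \cref{eq:mixture-interp} gives
\[
\parens*{\sem{U}\,s}(s') = \sum_{i=1}^{n} \theta_i \cdot \delta_{\sem{u_i} s}(s') = \sum_i \theta_i \cdot \indicator{\sem{u_i} s = s'}.
\]
It therefore suffices to show, for each $i$, that the product of indicators equals $\indicator{\sem{u_i} s = s'}$.

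First, evaluate the outer products at an index split as $(s,s') \in \STATE(Y_k \uplus X_k)$. By definition of $\otimes$, the left entry is $\tindex{\tsem{u_i}_{Y_k}\,x}{s} \cdot 1$ and the right entry is $1 \cdot \tindex{\vec{I}_{x\in X_k}}{s'} = s'(x)$. By \cref{lem:update-equiv}, the left entry equals $(\sem{u_i}\,s)(x)$.

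Each factor is then $\indicator{(\sem{u_i}\,s)(x) = s'(x)}$. A product of $0/1$ indicators over $x \in X_k$ is the indicator of the conjunction. Since both $\sem{u_i}\,s$ and $s'$ are total maps on exactly $X_k$, they agree on every $x \in X_k$ iff $\sem{u_i}\,s = s'$ by function extensionality.

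The only delicate step is being precise about indexing of the disjoint union $Y_k \uplus X_k$. Since $X_k \subseteq Y_k$, the variable $x$ occurs twice, once as a "current" copy and once as a "next" copy. I will state explicitly that $\mathbf{1}_{Y_k} \otimes \vec{I}_{x\in X_k}$ reads the next-state copy while $\tsem{u_i}_{Y_k}\,x \otimes \mathbf{1}_{X_k}$ reads only the current-state component. Once this convention is fixed, the remaining steps are routine.
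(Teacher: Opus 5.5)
Your proposal is correct and follows the paper's proof step for step. You evaluate \cref{eq:mixture-tensor} pointwise at $(s,s')$, reduce the outer-product entries to $\tindex{\tsem{u_i}_{Y_k}\,x}{s}$ and $s'(x)$, apply \cref{lem:update-equiv}, and collapse the product of indicators to $\indicator{\sem{u_i}\,s = s'} = \delta_{\sem{u_i} s}(s')$. Your explicit remark about the two copies of each $x \in X_k$ in the index space $Y_k \uplus X_k$ clarifies a point the paper leaves implicit.
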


\begin{proof}
Let $U$ be $\theta_1 : u_1 + \dots + \theta_n : u_n$.
We expand the tensor evaluation at the joint index $\parens*{s, s'}$:
\begin{align*}
  &\phantom{{}={}}
  \tindex{\tsem{U}}{\parens*{s, s'}}
  \\
  &= \tindex{\parens*{ \sum_{i=1}^{n} \theta_i \cdot \bigodot_{x \in X_k} \indicator{ \tsem{u_i}_{Y_k} \, x \otimes \mathbf{1}_{X_k} = \mathbf{1}_{Y_k} \otimes \vec{I}_{x \in X_k} } }}{\parens*{s, s'}} \tag{by \cref{eq:mixture-tensor}} \\
  &= \sum_{i=1}^{n} \theta_i \cdot \prod_{x \in X_k} \indicator{ \tindex{\parens*{\tsem{u_i}_{Y_k} \, x \otimes \mathbf{1}_{X_k}}}{\parens*{s, s'}} = \tindex{\parens*{\mathbf{1}_{Y_k} \otimes \vec{I}_{x \in X_k}}}{\parens*{s, s'}} } \tag{point-wise evaluation} \\
  &= \sum_{i=1}^{n} \theta_i \cdot \prod_{x \in X_k} \indicator{ \tindex{\tsem{u_i}_{Y_k} \, x}{s} \cdot 1 = 1 \cdot \tindex{\vec{I}_{x \in X_k}}{s'} } \tag{point-wise equivalence} \\
  &= \sum_{i=1}^{n} \theta_i \cdot \prod_{x \in X_k} \indicator{ \parens*{\sem{u_i} \, s}(x) = s'(x) } \tag{by \cref{lem:update-equiv} and definition of index tensors} \\
  &= \sum_{i=1}^{n} \theta_i \cdot \indicator{ \sem{u_i} \, s = s' } \tag{point-wise equivalence implies state equivalence} \\
  &= \sum_{i=1}^{n} \theta_i \cdot \delta_{\sem{u_i} s}(s') \\
  &= \parens*{\sem{U} \, s}(s') \tag{by \cref{eq:mixture-interp}}
\end{align*}
\end{proof}

\newpage

\begin{theorem}
\label{thm:module-equiv}
For any module $m_k$ in model $M$,
action $a$, global state $s \in \STATE$, and global next state $s' \in \STATE$,
\[
  \tindex{\tsem{m_k} \, a}{\parens*{\restrict{s}{Y_k}, \restrict{s'}{X_k}}} = \parens*{\sem{m_k} \, a \, {\restrict{s}{Y_k}}}(\restrict{s'}{X_k}).
\]
\end{theorem}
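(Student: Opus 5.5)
We argue by case analysis on whether $a \in \ACTION(m_k)$, mirroring the two cases of \cref{eq:module-interp} and \cref{eq:module-tensor}. Throughout, write $t \defeq \restrict{s}{Y_k} \in \STATE(Y_k)$ and $t' \defeq \restrict{s'}{X_k} \in \STATE(X_k)$. The goal is then $\tindex{\tsem{m_k}\,a}{(t,t')} = \parens*{\sem{m_k}\,a\,t}(t')$. Both sides are built from Hadamard products, sums and outer products, so every step is a pointwise evaluation at the joint index $(t,t')$.

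\textbf{Case $a \notin \ACTION(m_k)$.} We evaluate the Hadamard product pointwise:
\[
\tindex{\tsem{m_k}\,a}{(t,t')} = \prod_{x \in X_k} \indicator{\tindex{\vec{I}_{x\in Y_k}}{t}\cdot 1 = 1\cdot\tindex{\vec{I}_{x\in X_k}}{t'}} = \prod_{x\in X_k}\indicator{t(x) = t'(x)} .
\]
By the definition of index tensors this is the indicator that $t$ and $t'$ agree on every $x \in X_k$. Since $X_k \subseteq Y_k$, that is exactly $\indicator{\restrict{t}{X_k} = t'}$. This equals $\delta_{\restrict{t}{X_k}}(t')$, the right-hand side by \cref{eq:module-interp}. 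The key fact used here is that two states over $X_k$ are equal iff they agree on every variable. This is the same ``pointwise equivalence implies state equivalence'' step used in \cref{lem:mixture-equiv}.

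\textbf{Case $a \in \ACTION(m_k)$.} Indexing commutes with finite sums, so we distribute over commands in $\commands{m_k,a}$. For each command $\Key{[}a\Key{]}\,g\to U$, the summand at $(t,t')$ is $\indicator{\tindex{\tsem{g}_{Y_k}}{t}\neq 0}\cdot 1 \cdot \tindex{\tsem{U}}{(t,t')}$. Two earlier results finish the case. By \cref{lem:expr-equiv} (guards are expressions whose free variables lie in $Y_k$ by definition of $Y_k$), $\tindex{\tsem{g}_{Y_k}}{t} = \sem{g}\,t$. By \cref{lem:mixture-equiv}, $\tindex{\tsem{U}}{(t,t')} = (\sem{U}\,t)(t')$. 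Summing back over commands yields $\sum \indicator{\sem{g}\,t\neq0}\cdot(\sem{U}\,t)(t')$. This is the value at $t'$ of the unnormalized distribution in \cref{eq:module-interp}, using that evaluation of a sum of measures at a point is the sum of evaluations.

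The only real subtlety is bookkeeping of tensor shapes. We must justify that $\tindex{(\vec{A}\otimes\mathbf{1}_{X_k})}{(t,t')} = \tindex{\vec{A}}{t}$ and $\tindex{(\mathbf{1}_{Y_k}\otimes\vec{B})}{(t,t')} = \tindex{\vec{B}}{t'}$ for the disjoint-union index space $\STATE(Y_k\uplus X_k)$. These follow directly from the definition of the outer product. We would also note that the restrictions $t$ and $t'$ are well-defined elements of the respective local state spaces, because $Y_k, X_k \subseteq \Vars{M}$. Beyond this, the proof is a routine unfolding, and the lemmas already cover the nontrivial content.
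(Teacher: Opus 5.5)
Your proposal is correct and follows the paper's proof essentially step for step. It uses the same case split on whether $a \in \ACTION(m_k)$. In the first case, the Hadamard and outer products are evaluated pointwise to reach $\prod_{x \in X_k}\indicator{t(x) = t'(x)} = \delta_{\restrict{t}{X_k}}(t')$. In the second case, the sum over commands is distributed and closed off with \cref{lem:expr-equiv} for the guards and \cref{lem:mixture-equiv} for the mixtures.
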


\begin{proof}
Let $s_{in} = \restrict{s}{Y_k} \in \STATE(Y_k)$ and $s_{out} = \restrict{s'}{X_k} \in \STATE(X_k)$ be the local state projections. We proceed by case analysis on whether $a \in \ACTION(m_k)$.

\paragraph{Case 1:} $a \notin \ACTION(m_k)$.
\begin{align*}
  &\phantom{{}={}}
  \tindex{\tsem{m_k} \, a}{\parens*{s_{in}, s_{out}}} 
  \\
  &= \tindex{\parens*{ \bigodot_{x \in X_k} \indicator{ \vec{I}_{x \in Y_k} \otimes \mathbf{1}_{X_k} = \mathbf{1}_{Y_k} \otimes \vec{I}_{x \in X_k} } }}{\parens*{s_{in}, s_{out}}} \tag{by \cref{eq:module-tensor}} \\
  &= \prod_{x \in X_k} \indicator{ \tindex{\vec{I}_{x \in Y_k}}{s_{in}} = \tindex{\vec{I}_{x \in X_k}}{s_{out}} } \tag{point-wise evaluation} \\
  &= \prod_{x \in X_k} \indicator{ s_{in}(x) = s_{out}(x) } \tag{by definition of index tensors} \\
  &= \indicator{ \restrict{s_{in}}{X_k} = s_{out} } \tag{point-wise equivalence implies state equivalence} \\
  &= \delta_{\restrict{s_{in}}{X_k}}(s_{out}) \\
  &= \parens*{\sem{m_k} \, a \, s_{in}}(s_{out}) \tag{by \cref{eq:module-interp}}
\end{align*}

\paragraph{Case 2:} $a \in \ACTION(m_k)$.
\begin{align*}
  &\phantom{{}={}}
  \tindex{\tsem{m_k} \, a}{\parens*{s_{in}, s_{out}}} 
  \\
  &= \tindex{\parens*{ \sum_{\Key{[} a \Key{]} \, g \,\to\, U \;\in\; \commands{m_k, a}} \parens*{ \indicator{\tsem{g}_{Y_k} \neq 0} \otimes \mathbf{1}_{X_k} } \odot \tsem{U} }}{\parens*{s_{in}, s_{out}}} \tag{by \cref{eq:module-tensor}} \\
  &= \sum_{\Key{[} a \Key{]} \, g \,\to\, U \;\in\; \commands{m_k, a}} \indicator{\tindex{\tsem{g}_{Y_k}}{s_{in}} \neq 0} \cdot 1 \cdot \tindex{\tsem{U}}{\parens*{s_{in}, s_{out}}} \tag{point-wise evaluation} \\
  &= \sum_{\Key{[} a \Key{]} \, g \,\to\, U \;\in\; \commands{m_k, a}} \indicator{\sem{g} \, s_{in} \neq 0} \cdot \tindex{\tsem{U}}{\parens*{s_{in}, s_{out}}} \tag{by \cref{lem:expr-equiv}} \\
  &= \sum_{\Key{[} a \Key{]} \, g \,\to\, U \;\in\; \commands{m_k, a}} \indicator{\sem{g} \, s_{in} \neq 0} \cdot \parens*{\sem{U} \, s_{in}}(s_{out}) \tag{by \cref{lem:mixture-equiv}} \\
  &= \parens*{\sem{m_k} \, a \, s_{in}}(s_{out}) \tag{by \cref{eq:module-interp}}
\end{align*}
\end{proof}

\newpage

\begin{theorem}[Restatement of \cref{thm:tensor-sem-correctness}]
\label{thm:tensor-sem-correctness-appendix}
Let $M$ be a \Lang model.
Let $\mu \in \UNDistr{\textit{State}\Bracks*{M}}$.
Then,
\[\tsem{M} \, {\Xi}(\mu) = {\Xi}\parens*{\sum_{s} \mu(s) \cdot \textit{Step}\sem{M} \, s}.\]
\end{theorem}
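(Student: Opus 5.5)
The plan is to compare the two tensors entrywise. Fix an arbitrary $s' \in \STATE$. By linearity of $\Xi$ (\cref{lem:xi-linearity}) and the definition of $\Xi$, the right-hand side at index $s'$ is $\sum_{s} \mu(s) \cdot \parens*{\textit{Step}\sem{M}\, s}(s')$. On the left, $\tindex{\Xi(\mu)}{s} = \mu(s)$, so it suffices to show that the point-free definition \cref{eq:step-tensor-pointfree}--\cref{eq:step-tensor-contraction} evaluates at $s'$ to $\sum_s \mu(s)\cdot \parens*{\textit{Step}\sem{M}\, s}(s')$.

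\textbf{First step: reduce to the pointful form.} I would show that \cref{eq:step-tensor-pointfree} agrees with \cref{eq:step-tensor-pointful} for every input $\vec{T}$. Expanding $\sum_a \tindex{\vec{P}_a}{s'}$ and swapping the finite sums gives
\[
\sum_s \frac{\tindex{\vec{T}}{s}}{\max(\tindex{\vec{L}}{s},1)} \sum_a \prod_k \tindex{\tsem{m_k}\,a}{(\restrict{s}{Y_k},\restrict{s'}{X_k})}.
\]
For $s$ with $\tindex{\vec{L}}{s}>0$, the max equals $\tindex{\vec{L}}{s}$ and the mask $\indicator{\vec{L}=\mathbf{0}}$ vanishes, so this is exactly the first branch.

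For $s$ with $\tindex{\vec{L}}{s}=0$, I need the inner sum over $a$ to vanish. This is the main delicate point. Since $\vec{L}$ is a sum of products of natural numbers, $\tindex{\vec{L}}{s}=0$ means that for each $a$ some $k$ has $\tindex{\textit{En}\tsem{m_k}\,a}{\restrict{s}{Y_k}}=0$. That module must then have $a\in\ACTION(m_k)$, since otherwise the entry is $1$, and all guards of its $a$-commands are false at $\restrict{s}{Y_k}$. Hence every term of $\tsem{m_k}\,a$ at $(\restrict{s}{Y_k},\cdot)$ carries the zero indicator factor $\indicator{\tsem{g}_{Y_k}\neq 0}$, so the product is $0$. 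The remaining second summand $\tindex{\vec{T}}{s'}\indicator{\tindex{\vec{L}}{s'}=0}$ equals $\sum_s \tindex{\vec{T}}{s}\,\indicator{\tindex{\vec{L}}{s}=0}\,\indicator{s=s'}$, which is the second branch.

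\textbf{Second step: match with $\textit{Step}\sem{M}$.} Using \cref{lem:expr-equiv} on the guards, I get $\tindex{\textit{En}\tsem{m_k}\,a}{\restrict{s}{Y_k}} = \textit{En}\sem{m_k}\,a\,\restrict{s}{Y_k}$. Hence $\tindex{\vec{L}}{s} = \sum_a \textit{En}\sem{M}\,a\,s$, so the case conditions coincide. By \cref{thm:module-equiv}, each factor $\tindex{\tsem{m_k}\,a}{(\restrict{s}{Y_k},\restrict{s'}{X_k})}$ equals $\parens*{\sem{m_k}\,a\,\restrict{s}{Y_k}}(\restrict{s'}{X_k})$. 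The product over $k$ is then, by definition of $\bigotimes$ over the disjoint variable sets $X_1,\dots,X_K$, the value $\parens*{\bigotimes_k \sem{m_k}\,a\,\restrict{s}{Y_k}}(s')$. Substituting $\tindex{\vec{T}}{s}=\mu(s)$ then yields exactly $\sum_s \mu(s)\,\parens*{\textit{Step}\sem{M}\,s}(s')$ case by case from \cref{eq:step-interp}.

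Since $s'$ was arbitrary, the tensors are equal. The only nontrivial step is the vanishing argument for disabled states in the point-free form; the rest is unfolding definitions together with the lemmas already proved.
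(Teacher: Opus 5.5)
Your proposal is correct and follows essentially the same route as the paper's proof: evaluate both sides entrywise at $s'$, unfold \cref{eq:step-tensor-pointfree}--\cref{eq:step-tensor-contraction}, rewrite module entries via \cref{thm:module-equiv}, combine the product into $\bigotimes$ using disjointness of the $X_k$, split on whether $\tindex{\vec{L}}{s}>0$ using $\tindex{\vec{L}}{s}=\sum_a \textit{En}\sem{M}\,a\,s$, and finish with \cref{eq:step-interp} and \cref{lem:xi-linearity}. Two steps that the paper's chain covers only with ``by analyzing cases'' or simply asserts are spelled out in your version: the argument that the contraction term vanishes when $\tindex{\vec{L}}{s}=0$, and the derivation of the $\vec{L}$ identity from \cref{lem:expr-equiv}.
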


\begin{proof}
Let $\vec{T} = \Xi(\mu)$, which means $\tindex{\vec{T}}{s} = \mu(s)$ for all $s \in \STATE$.
We evaluate $\tsem{M} \, \vec{T}$ at an arbitrary index $s' \in \STATE$:
\begin{align*}
&\phantom{={}}
\tindex{\parens*{\tsem{M} \, \vec{T}}}{s'} 
\\
&= \tindex{\parens*{\sum_{a} \vec{P}_a}}{s'} + \tindex{\parens*{\vec{T} \odot \indicator{\vec{L} = \mathbf{0}_{\Vars{M}}}}}{s'} 
\tag{by \cref{eq:step-tensor-pointfree}} \\
&= \parens*{\sum_{a} \sum_{s \in \STATE} \frac{\mu(s)}{\max\parens*{\tindex{\vec{L}}{s}, 1}} \prod_{k=1}^{K} \tindex{\tsem{m_k} \, a}{\parens*{\restrict{s}{Y_k}, \restrict{s'}{X_k}}}} + \mu(s') \indicator{\tindex{\vec{L}}{s'} = 0} 
\tag{by \cref{eq:step-tensor-contraction} and def.\ of $\odot$} \\
&= \sum_{s \in \STATE} \mu(s) \bracks*{ \sum_{a} \frac{1}{\max\parens*{\tindex{\vec{L}}{s}, 1}} \prod_{k=1}^{K} \tindex{\tsem{m_k} \, a}{\parens*{\restrict{s}{Y_k}, \restrict{s'}{X_k}}} + \indicator{\tindex{\vec{L}}{s} = 0} \indicator{s = s'} } 
\tag{by linearity of summation} \\
&= \sum_{s \in \STATE} \mu(s) \bracks*{ \sum_{a} \frac{1}{\max\parens*{\tindex{\vec{L}}{s}, 1}} \prod_{k=1}^{K} \tindex{\tsem{m_k} \, a}{\parens*{\restrict{s}{Y_k}, \restrict{s'}{X_k}}} + \indicator{\tindex{\vec{L}}{s} = 0} \delta_s(s') } 
\tag{since $\indicator{s = s'}$ equals $\delta_s(s')$}
\\
&= \sum_{s \in \STATE} \mu(s) \bracks*{ \sum_{a} \frac{1}{\max\parens*{\tindex{\vec{L}}{s}, 1}} \prod_{k=1}^{K} \parens*{\sem{m_k} \, a \, {\restrict{s}{Y_k}}}\parens*{\restrict{s'}{X_k}} + \indicator{\tindex{\vec{L}}{s} = 0} \delta_s(s') } 
\tag{by \cref{thm:module-equiv}} \\
&= \sum_{s \in \STATE} \mu(s) \bracks*{ \sum_{a} \frac{1}{\max\parens*{\tindex{\vec{L}}{s}, 1}} \parens*{\bigotimes_{k=1}^{K} \sem{m_k} \, a \, {\restrict{s}{Y_k}}}(s') + \indicator{\tindex{\vec{L}}{s} = 0} \delta_s(s') } 
\tag{since $X_1, \dots, X_K$ are disjoint sets of variables} \\
&= \sum_{s \in \STATE} \mu(s) \cdot 
\begin{cases}
\frac{\sum_{a} \parens*{\bigotimes_{k=1}^{K} \sem{m_k} \, a \, {\restrict{s}{Y_k}}}(s')}{\tindex{\vec{L}}{s}} & \text{if } \tindex{\vec{L}}{s} > 0 \\
\delta_s(s') & \text{otherwise}
\end{cases}
\tag{by analyzing cases for $\tindex{\vec{L}}{s}$} \\
&= \sum_{s \in \STATE} \mu(s) \cdot 
\begin{cases}
\frac{\sum_{a} \parens*{\bigotimes_{k=1}^{K} \sem{m_k} \, a \, {\restrict{s}{Y_k}}}(s')}{\sum_{a} \textit{En}\sem{M} \, a \, s} & \text{if } \sum_{a} \textit{En}\sem{M} \, a \, s > 0 \\
\delta_s(s') & \text{otherwise}
\end{cases}
\tag{since $\tindex{\vec{L}}{s} = \sum_{a} \textit{En}\sem{M} \, a \, s$} \\
&= \sum_{s \in \STATE} \mu(s) \cdot \parens*{\textit{Step}\sem{M} \, s}(s')
\tag{by \cref{eq:step-interp}} \\
&= \tindex{{\Xi}\parens*{\sum_{s \in \STATE} \mu(s) \cdot \textit{Step}\sem{M} \, s}}{s'}
\tag{by \cref{lem:xi-linearity}}
\end{align*}
\end{proof}

\newpage

\begin{theorem}[Restatement of \cref{thm:tensor-pmc-correctness}]
\label{thm:tensor-pmc-correctness-appendix}
For any $n \in \mathbb{N}$,
\[
\Pr_M \parens*{{\Xi}(\mu) \Vvdash \Diamond^{\leq n} e} = \Pr_{\mathcal{M}}\parens*{\mu\Vdash\Diamond^{\leq n} \mathcal{G}}.
\]
\end{theorem}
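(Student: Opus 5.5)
We will prove the statement by induction on $n$, generalized over $\mu$: we show that for every $n$ and every (possibly unnormalized) $\mu \in \UNDistr{\STATE}$, $\Pr_M \parens*{{\Xi}(\mu) \Vvdash \Diamond^{\leq n} e} = \Pr_{\mathcal{M}}\parens*{\mu\Vdash\Diamond^{\leq n} \mathcal{G}}$. Generalizing to unnormalized $\mu$ matters because the inductive step feeds in the sub-distribution $\sum_{s\notin\mathcal{G}}\mu(s)\,\textit{Step}\sem{M}\,s$, which in general has mass less than $1$. Both recursive definitions make sense for such measures, and \cref{thm:tensor-sem-correctness-appendix} is already stated for $\mu \in \UNDistr{\STATE}$.

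Before the induction we record two facts about the masks. By \cref{lem:expr-equiv}, $\tindex{\vec{\Delta}_{e}}{s} = \indicator{\sem{e}\,s \neq 0} = \indicator{s \in \mathcal{G}}$ and $\tindex{\vec{\Delta}_{\neg e}}{s} = \indicator{s \notin \mathcal{G}}$. Consequently:
\begin{itemize}
\item the Frobenius inner product satisfies $\angles{\vec{\Delta}_{e}, \Xi(\mu)} = \sum_{s\in\mathcal{G}} \mu(s)$;
\item the masked tensor is again an image of $\Xi$, namely $\vec{\Delta}_{\neg e} \odot \Xi(\mu) = \Xi(\mu_{\neg})$, where $\mu_{\neg}(s) \defeq \indicator{s\notin\mathcal{G}}\,\mu(s)$.
\end{itemize}
Both follow by evaluating pointwise at an index $s$.

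\textbf{Base case.} For $n=0$, \cref{eq:reachability-via-tensor-base-case} together with the first fact gives $\sum_{s\in\mathcal{G}}\mu(s)$. This is exactly \cref{eq:reachability-via-distribution-base-case}.

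\textbf{Inductive step.} For $n+1$, unfold \cref{eq:reachability-via-tensor-inductive-case}. The first summand matches the first summand of \cref{eq:reachability-via-distribution-inductive-case}, as in the base case. For the second summand, rewrite the argument using the second fact and then \cref{thm:tensor-sem-correctness-appendix} applied to $\mu_{\neg}$:
\[
\tsem{M}\parens*{\vec{\Delta}_{\neg e}\odot\Xi(\mu)} = \tsem{M}\,\Xi(\mu_{\neg}) = \Xi\parens*{\sum_{s}\mu_{\neg}(s)\,\textit{Step}\sem{M}\,s} = \Xi\parens*{\sum_{s\notin\mathcal{G}}\mu(s)\,\eta(s)},
\]
where $\eta = \textit{Step}\sem{M}$ is the transition function of $\mathcal{M}$. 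The induction hypothesis, instantiated at the measure $\sum_{s\notin\mathcal{G}}\mu(s)\,\eta(s)$, converts the remaining tensor reachability term into $\Pr_{\mathcal{M}}\parens*{\sum_{s\notin\mathcal{G}}\mu(s)\eta(s)\Vdash\Diamond^{\leq n}\mathcal{G}}$. Adding the two summands yields \cref{eq:reachability-via-distribution-inductive-case}.

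The only delicate point is the induction hypothesis: it must quantify over all unnormalized measures, not just the initial one, since it is applied at a new measure each step. With the generalized statement fixed up front, every step above is a direct citation of an earlier result plus a pointwise check on the masks.
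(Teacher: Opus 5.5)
Your proof is correct and takes the same route as the paper. Both argue by induction on $n$, match the first summands through the mask identity, rewrite $\vec{\Delta}_{\neg e}\odot\Xi(\mu)$ as $\Xi$ of the goal-free sub-measure (the paper's $\mu_{\text{cont}}$), apply \cref{thm:tensor-sem-correctness}, and close with the induction hypothesis at the new measure. Your explicit generalization of the hypothesis to all unnormalized measures, and your appeal to \cref{lem:expr-equiv} for the mask entries, spell out what the paper's proof uses only implicitly.
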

\begin{proof}
We proceed by induction on $n$.
Let $\vec{T} = \Xi(\mu)$.

\paragraph{Base Case ($n=0$).}
From \cref{eq:reachability-via-tensor-base-case}:
\[
\Pr_M \parens*{\vec{T} \Vvdash \Diamond^{\leq 0} e} = \angles{\vec{\Delta}_e, \vec{T}} = \sum_{s \in \STATE} \iverson{e}{s} \cdot \tindex{\vec{T}}{s}.
\]
Recall that $\tindex{\vec{T}}{s} = \mu(s)$ and $\mathcal{G} = \setc{s}{\sem{e}(s) \neq 0}$.
Thus $\iverson{e}{s} = 1$ if $s \in \mathcal{G}$ and $0$ otherwise.
Substituting these into the sum:
\[
\sum_{s \in \STATE} \iverson{e}{s} \cdot \mu(s) = \sum_{s \in \mathcal{G}} \mu(s).
\]
This matches the definition of $\Pr_{\mathcal{M}}\parens*{\mu\Vdash\Diamond^{\leq 0} \mathcal{G}}$ in \cref{eq:reachability-via-distribution-base-case}.

\paragraph{Inductive Case.}
Assume the theorem holds for $n$.
We expand the tensor reachability for $n+1$ using \cref{eq:reachability-via-tensor-inductive-case}:
\[
\Pr_M \parens*{\vec{T} \Vvdash \Diamond^{\leq n+1} e} =
\underbrace{\angles{\vec{\Delta}_e, \vec{T}}}_{\text{Term A}} +
\underbrace{\Pr_M \parens*{\tsem{M} \, \parens*{\vec{\Delta}_{\neg e} \odot \vec{T}} \Vvdash \Diamond^{\leq n} e}}_{\text{Term B}}.
\]
Similarly, we expand the distribution reachability using \cref{eq:reachability-via-distribution-inductive-case}:
\[
\Pr_{\mathcal{M}}\parens*{\mu\Vdash\Diamond^{\leq n+1} \mathcal{G}} =
\underbrace{\parens*{\sum_{s \in \mathcal{G}} \mu(s)}}_{\text{Term C}} +
\underbrace{\Pr_{\mathcal{M}}\parens*{\sum_{s \notin \mathcal{G}} \mu(s) \eta(s) \Vdash \Diamond^{\leq n} \mathcal{G}}}_{\text{Term D}}.
\]
From the base case logic, we know $\text{Term A} = \text{Term C}$.
It remains to show $\text{Term B} = \text{Term D}$.

Let $\mu_{\text{cont}} \in \UNDistr{\STATE}$ be the distribution of probability mass continuing from non-goal states:
\[
\mu_{\text{cont}} \defeq \sum_{s \notin \mathcal{G}} \mu(s) \delta_s.
\]

Consider the argument to $\Pr_{\mathcal{M}}$ in Term D:
\begin{align*}
\sum_{s \notin \mathcal{G}} \mu(s) \eta(s)
=
\sum_{s \notin \mathcal{G}} \mu(s) \cdot \textit{Step}\sem{M} \, s
=
\sum_{s} \mu_{\text{cont}}(s) \cdot \textit{Step}\sem{M} \, s.
\end{align*}

Now consider the tensor argument in Term B.
The masked tensor $\vec{\Delta}_{\neg e} \odot \vec{T}$ has entries:
\[
\tindex{(\vec{\Delta}_{\neg e} \odot \vec{T})}{s} = \iverson{\neg e}{s} \cdot \mu(s) =
\begin{cases}
\mu(s) & \text{if } s \notin \mathcal{G} \\
0 & \text{otherwise}
\end{cases}
\]
This is exactly the tensor representation of $\mu_{\text{cont}}$.
That is, $\vec{\Delta}_{\neg e} \odot \vec{T} = \Xi(\mu_{\text{cont}})$.
By \cref{thm:tensor-sem-correctness} (commutativity), we have:
\[
\tsem{M} (\vec{\Delta}_{\neg e} \odot \vec{T}) = \tsem{M} \, \Xi(\mu_{\text{cont}}) =
{\Xi}\parens*{\sum_{s} \mu_{\text{cont}}(s) \cdot \textit{Step}\sem{M} \, s}.
\]
Thus, the tensor argument in Term B is
${\Xi}\parens*{\sum_{s} \mu_{\text{cont}}(s) \cdot \textit{Step}\sem{M} \, s}$.
By the induction hypothesis:
\begin{align*}
&\phantom{={}}
\Pr_M \parens*{{\Xi}\parens*{\sum_{s} \mu_{\text{cont}}(s) \cdot \textit{Step}\sem{M} \, s} \Vvdash \Diamond^{\leq n} e}
\\
&=
\Pr_{\mathcal{M}}\parens*{{\sum_{s} \mu_{\text{cont}}(s) \cdot \textit{Step}\sem{M} \, s} \Vdash \Diamond^{\leq n} \mathcal{G}}.
\end{align*}
This proves $\text{Term B} = \text{Term D}$, completing the induction.
\qedhere
\end{proof}

 \clearpage
\fi

\setlength{\bibsep}{\ifreport0.5ex\else0.8ex\fi}
\makeatletter
\@ifclassloaded{llncs}{
  \bibliographystyle{tex-macros/splncs04nat.bst}
}{}
\@ifclassloaded{acmart}{
  \bibliographystyle{tex-macros/ACM-Reference-Format.bst}
}{}
\bibliography{refs.bib}

\end{document}